\tikzstyle{background}=[rectangle,fill=gray!10, inner sep=0.1cm, rounded corners=0mm]
\newcommand{\alphabet}{\Sigma}
\newcommand{\subword}{\preceq}
\newcommand{\upward}{{\uparrow}}
\newcommand{\downward}{{\downarrow}}
\newcommand{\leftend}{{\vdash}}
\newcommand{\rightend}{{\dashv}}
\newcommand\Lang[1]{\mathcal{L}(#1)}
\newcommand{\Cc}{\mathcal{C}}
\newcommand{\Ss}{\mathcal{S}}
\newcommand{\Nn}{\mathbb{N}}
\newcommand{\alp}{\text{alph}}
\newcommand{\mini}{\mathsf{min}}
\newcommand{\config}{\mathsf{config}}
\newcommand{\sem}{\relof}
\newcommand{\OMPA}{\text{OMPA}}
\newcommand{\inn}{\mathsf{in}}
\newcommand{\out}{\mathsf{out}}
\newcommand{\pptl}{\mathsf{PosPTL}}
\newcommand{\ptl}{\mathsf{PTL}}
\newcommand{\tnft}{\mathsf{2NFT}}
\newcommand{\rec}{\mathsf{REG}}
\newcounter{todocounter}
\newcommand\varset{{\mathcal X}}
\newcommand\assigned\leftarrow
\newcommand\lbl\lambda
\newcommand{\Ll}{\mathcal{L}}
\newcommand{\Aa}{\mathcal{A}}
\newcommand{\Tt}{\mathcal{T}}
\newcommand{\expspace}{\textsc{Expspace}}
\newcommand{\nexpspace}{\textsc{NExpspace}}
\newcommand{\pspacec}{\textsc{Pspace-Complete}}
\newcommand{\pspaceh}{\textsc{Pspace-Hard}}
\newcommand{\formula}{\Psi}
\newcommand{\constr}{\varphi}
\newcommand{\encode}{{\tt Encode}}
\newcommand{\automaton}{{\mathcal A}}
\newcommand{\term}{t}
\newcommand{\relof}[1]{{\mathcal R}({#1})}
\newcommand{\transducer}{{\mathcal T}}
\newcommand{\langof}[1]{{\mathcal L}({#1})}
\newcommand\sr[2][]{\ext@arrow 0099{\longrightarrowfill@}{#1}{#2}}
\def\longrightarrowfill@{\arrowfill@\relbar\rightarrow}
\newcommand{\idf}{\mathsf{Id}}
\title{On the Separability Problem of String Constraints}
\author{Parosh Aziz Abdulla}{Uppsala University, Sweden}{parosh@it.uu.se}{}{}
\author{Mohamed Faouzi Atig}{Uppsala University, Sweden}{mohamed\_faouzi.atig@it.uu.se}{}{}
\author{Vrunda Dave}{IIT Bombay, India}{vrunda@cse.iitb.ac.in}{}{}
\author{Shankara Narayanan Krishna}{IIT Bombay, India} {krishnas@cse.iitb.ac.in}{}{}
\authorrunning{P. A.  Abdulla, M. F. Atig, V. Dave, S. Krishna}
\keywords{string constraints, separability, interpolants}
\begin{document}

\maketitle

\begin{abstract}
We address the separability problem for straight-line string constraints. The separability problem for languages of a class C by a class S asks: given two languages A and B in C, does there exist a language I in S separating A and B (i.e., I is a superset of A and disjoint from B)?
The separability of string constraints is the same as the fundamental problem of interpolation for string constraints. We first show that regular separability of straight line string constraints is undecidable. Our second result is the decidability of the  separability problem for  straight-line string constraints by  piece-wise testable languages, though the precise complexity is open. In our third result,
we consider the positive fragment of piece-wise testable languages as a separator, and
obtain an EXPSPACE algorithm for the separability of a useful class of straight-line string constraints, and  a PSPACE-hardness result.

\end{abstract}

\section{Introduction}
\label{sec:intro}
 
The \emph{string} data type  is widely used in almost all modern programming and scripting languages.
 Many of the well-known security vulnerabilities such as SQL injections and cross-site scripting 
attacks are often caused by an improper handling of strings.  The detection of such vulnerabilities is usually reduced  to the satisfiability of a formula which is  then solved by {SMT} solvers  (e.g., \cite{DBLP:conf/sp/SaxenaAHMMS10,DBLP:conf/ndss/SaxenaHPS10,DBLP:conf/tacas/YuAB10,DBLP:conf/popl/LinB16}). Therefore,  string constraints solving has received considerable attention in recent years (e.g.  \cite{Chen:2019,DBLP:journals/pacmpl/ChenCHLW18,DBLP:conf/issta/KiezunGGHE09,DBLP:conf/tacas/YuAB10,Zheng13z3str,trinh2014:s3,LiaEtAl-CAV-14,KauslerS14,flatten17,DBLP:conf/popl/LinB16,Berzish2017Z3str3AS,DBLP:journals/pacmpl/HolikJLRV18}) and this has led to the   development of many efficient  string solvers such as HAMPI \cite{DBLP:conf/issta/KiezunGGHE09},
  Z3-str3 \cite{DBLP:conf/fmcad/BerzishGZ17}, 
CVC4~\cite{LiaEtAl-CAV-14,DBLP:journals/fmsd/LiangRTTBD16,DBLP:conf/cav/ReynoldsWBBLT17}, S3P~\cite{trinh2014:s3,Trinh2016},
   Trau \cite{flatten17,trau18,DBLP:conf/atva/AbdullaADHJ19}, SLOTH~\cite{DBLP:journals/pacmpl/HolikJLRV18} and OSTRICH \cite{DBLP:journals/pacmpl/ChenHLRW19}.  %
   
     In spite  of these advances, most of these tools  do not provide any  completeness guarantees.
  The foundational question regarding  the decidability of string solving
for a large class of string constraints has several challenges to be overcome. A major difficulty  is that any reasonably expressive class of string constraints is either undecidable, or
 has its decidability status open for several years  \cite{DBLP:journals/corr/GaneshB16,DBLP:conf/hvc/GaneshMSR12,DBLP:journals/corr/GaneshMSR13}. In fact, the satisfiability problem is undecidable even for the class of string constraints 
 with concatenation (useful to model assignments in the program) and transduction   (useful to model sanitisation and replacement operations)
 \cite{DBLP:journals/pacmpl/ChenHLRW19}. 	 A direction of research is to  find meaningful and expressive subclasses of string constraints for which the satisfiability problem is decidable (e.g., \cite{DBLP:conf/cav/AbdullaACHRRS14,DBLP:conf/atva/AbdullaADHJ19,DBLP:conf/hvc/GaneshMSR12,DBLP:conf/popl/LinB16,DBLP:journals/pacmpl/HolikJLRV18,DBLP:journals/pacmpl/ChenCHLW18}). 
  An interesting  subclass,  that has been studied extensively, 
  is that of straight-line (SL) string constraints (e.g., \cite{DBLP:journals/pacmpl/HolikJLRV18,DBLP:journals/pacmpl/ChenHLRW19,DBLP:conf/popl/LinB16,DBLP:journals/pacmpl/HolikJLRV18,DBLP:journals/pacmpl/ChenCHLW18}). The SL fragment was introduced by Barcel\'o and Lin in  
  \cite{DBLP:conf/popl/LinB16}. Roughly, an SL constraint models the feasibility of  a path of a string-manipulating program that can be generated by symbolic execution. 
    The \emph{satisfiability} 
  of the SL fragment was shown to be $\mathsf{EXPSPACE}$-complete in 
  \cite{DBLP:conf/popl/LinB16}   and forms the basis of many of the tools above \cite{DBLP:journals/pacmpl/HolikJLRV18,DBLP:journals/pacmpl/ChenCHLW18}.

    In this paper, we focus on the fundamental  problem of 
 \emph{interpolation/separability} for the SL fragment of string constraints. An \emph{interpolant} for a pair of formulas 
$A, B$ is a formula over their common vocabulary that is implied by $A$ and is inconsistent with $B$.  The Craig-Lyndon interpolation technique is very well-known in mathematical logic.     McMillan \cite{DBLP:conf/cav/McMillan03} in his pioneering work, 
has also recognized interpolation as an efficient method for 
automated construction of abstractions of systems. Interpolation based algorithms have been developed for a number of problems in program verification \cite{DBLP:conf/cav/McMillan03,DBLP:conf/tacas/McMillan04,DBLP:conf/cav/McMillan06}. 

Interpolation procedures have been implemented by many  solvers for the theories most commonly used in program verification like linear arithmetic, 
uninterpreted functions with equality and some combination of such theories.  In most of these algorithms, the interpolants were simple.
The interpolation technique can also be used to check the unsatisfiability. In fact, the existence of an interpolant for formulas $A$ and $B$ implies the unsatisfiability of $A \wedge B$.

The notion of \emph{separators} in formal language theory 
is the counterpart of interpolants in logic. 
The separability problem for languages  of a class $\Cc$ by a class $\Ss$ asks: given two languages $I, E \in \Cc$, does there exist a language $S \in \Ss$ separating $I$ and $E$?  That is, $I \subseteq S$ and $S \cap E=\emptyset$. The language $S$ is called the separator of $I, E$. 	
 Separability is a classical problem of fundamental interest in theoretical computer science, and has recently received a lot of attention.  For instance, regular separability has been studied for one-counter automata \cite{slawek19}, Parikh automata \cite{clemente17}, and well-structured transition systems \cite{slawek18}. 
    In the following, we use the terms interpolant or separator of two SL string constraints to mean the same thing, since the solutions of a string constraint can be interpreted as a language.

In this paper, we first show that any string constraint $\phi$ can be written as the conjunction of two 
SL string constraints $A$ and $B$. 
 Therefore, the interpolation problem for the pair $A$ and $B$  can be used to check the   unsatisfiability of  the string constraint $\phi$. (Recall that  the satisfiability problem for general string constraints   is undecidable \cite{DBLP:journals/pacmpl/ChenHLRW19}.) 
 
Then, we consider the regular separability  problem for SL string constraints. We show that this problem is undecidable  (Theorem \ref{thm:reg-sep})  by a reduction from the halting problem of Turing Machines. The main technical difficulty here is to 
ensure that the encoding of a sequence of configurations of a Turing machine results in  SL string constraints.  

Due to this undecidability, we focus on the separability problem of SL string constraints by piece-wise testable languages ($\ptl$). 
 A $\ptl$ 
    is a finite Boolean combination of  special regular languages called \emph{piece languages}  of the form $\Sigma^* a_{1} \Sigma^* a_{2} \dots \Sigma^* a_{n} \Sigma^*$, where all $a_{j} \in \Sigma$. $\ptl$ is  a very natural and well-studied class of languages  in the context of the separability problem (e.g. \cite{place13,cern13,DBLP:journals/dmtcs/CzerwinskiMRZZ17}). Furthermore, 
 among the  various separator classes considered in the literature, the 
 class of piecewise testable languages ($\ptl$) seems to be the most
 tractable:  $\ptl$-separability of regular languages is in $\mathsf{PTIME}$ \cite{place13,cern13}. To decide the $\ptl$-separability of SL string constraints, we first encode the solutions of an SL string constraint as 
the language of an Ordered Multi-Pushdown Automaton (OMPA) (Section \ref{sltoompl}). Then, we show that the $\ptl$-separability of SL  constraints can be reduced to the $\ptl$-separability of OMPAs.
To show the decidability of the latter problem, we first prove that the language of an OMPA: (1)  is a full trio \cite{gins} and  (2) has a semilinear Parikh image. 
Using (1), we obtain the equivalence of the $\ptl$ separability problem and the \emph{diagonal problem} for OMPAs
from \cite{DBLP:journals/dmtcs/CzerwinskiMRZZ17}, where the equivalence has been shown to hold for full trios. 
Next, the decidability of  $\ptl$-separability problem for OMPAs is obtained from the 
decidability of the diagonal problem for OMPAs: the latter is obtained using (2) 
and  \cite{DBLP:journals/dmtcs/CzerwinskiMRZZ17} where the decidability of the diagonal problem has been shown for 
languages having a semilinear Parikh image. As a corollary of these results, we obtain the decidability of the $\ptl$-separability problem  for SL string constraints and OMPAs; however  the exact complexity  is still an open question. In fact, it is   an open problem in the case of  OMPAs with  one stack (i.e., Context-Free Languages (CFLs))  \cite{DBLP:journals/dmtcs/CzerwinskiMRZZ17}.

Given the complexity question, we propose the class 
of positive piecewise testable languages ($\pptl$) as  separators. 
 $\pptl$ is obtained as a negation-free Boolean combination 
of piece languages. As a first result (Theorem \ref{thm:sep-pptl}) we show that deciding $\pptl$-separability 
for any language class has a very elegant proof: it suffices to check if the upward (downward) closure 
 of one of the languages is disjoint from the other language. Using this result, we prove  the  
  $\mathsf{PSPACE}$-completeness of the $\pptl$-separability for CFLs, thereby progressing 
  on the complexity front with respect to a problem which is open in the case of $\ptl$-separability 
   for CFLs. Then, we focus 
   on a class of SL string constraints where the variables used in outputs of the transducers are independent 
   of each other. This class contains SL string constraints with functional transducers (computing 
partial functions, by associating at most one output with each input). 
We prove the decidability and $\mathsf{EXPSPACE}$ membership for the $\pptl$-separability 
of this class by first encoding the solutions of string constraints as outputs of 
 two way transducers ($\tnft$), and then proving the decidability of $\pptl$-separability for $\tnft$.  
 
  Due to lack of space, missing proofs of the technical results  can be found in the appendix.
   
 \smallskip
 
 \noindent
{\bf Related work.}
The satisfiability problem for string constraints is an active research area and there is a lot of progress in the last decade (e.g., \cite{Plandowski:2006:EAS:1132516.1132584,DBLP:conf/issta/KiezunGGHE09,DBLP:conf/popl/LinB16,DBLP:journals/pacmpl/ChenCHLW18,DBLP:journals/pacmpl/ChenHLRW19,DBLP:conf/atva/AbdullaADHJ19,DBLP:conf/hvc/GaneshMSR12,DBLP:journals/corr/GaneshMSR13,DBLP:conf/cav/AbdullaACHRRS14,DBLP:conf/cav/WangTLYJ16}). 
An interpolation based semi-decision procedure for string constraints 
 has been proposed in  \cite{DBLP:conf/cav/AbdullaACHRRS14}. 
As far as we know, this is the first time the separability problem has been studied in the context of string constraints.

 \section{Preliminaries}
 \label{sec:prelims}
 
\smallskip

\noindent
{\bf Notations.}  Let $[i, j]$ denote the set $\{i,  \ldots, j\}$ for $i, j \in \Nn$.
Let $\alphabet$ be a finite alphabet. $\alphabet^*$ denotes the set of all finite words over $\alphabet$ and $\alphabet^+$ denotes $\alphabet^* {\setminus} \{\epsilon\}$ where $\epsilon$ is the empty word. We denote $\Sigma \cup \{\epsilon\}$ by $\Sigma_\epsilon$.
Let $u \in \Sigma^*$. We use $u^R$ to denote the reverse of $u$. The length of the word $u$ is denoted $|u|$ and the $i^{\text{th}}$ symbol of $u$ by  $u[i]$. Given two words $u\in \Sigma^*$ and $v \in \Sigma^*$, we say that $u$ is a subword of $v$ (denoted $u \preceq v$) if there is a mapping $h: [1,|u|] \mapsto [1,|v|]$ such that $(1)$ $u[i]=v[h(i)]$ for all $i \in [1,|u|]$, and $(2)$ $h(i) < h(j)$ for all $i <j$.

\smallskip

\noindent
{\bf (Multi-tape)-Automata.}
A {\it Finite State Automaton} (FSA) over an alphabet $\alphabet$
is a tuple $\automaton=(Q,\alphabet,\delta,I,F)$, where
$Q$ is a finite set of {\it states},
$\delta\subseteq Q\times \Sigma_\epsilon \times Q$
is a set of {\it transitions},  and 
$I\subseteq Q$ (resp. $F\subseteq Q$ ) are the {\it initial} (resp. {\it accepting}) states.
$\automaton$ accepts a word $w$ iff there is a sequence 
$q_0 a_1 q_1 a_2 \cdots a_n q_n$ such that $(q_{i-1}, a_i, q_i)\in \delta$ for all $1\leq i \leq n$, 
$q_0\in I$, $q_n\in F$, and 
$w = a_1\cdot \cdots \cdot a_n$. 
The \emph{language} of $\automaton$, denoted $\langof\automaton$, is the set all accepted words.

Given $n {\in} \mathbb{N}$, a \emph{$n$-tape automaton} $\transducer$  is an automaton over the alphabet $(\Sigma_\epsilon)^n$. It \emph{recognizes} the relation  $\relof{\transducer}{\subseteq} (\alphabet^*)^n$ that contains  the  $n$-tuple of words $(w_1,w_2, \ldots, w_n)$ for which there is a word $(a_{(1,1)},a_{(2,1)},\ldots,a_{(n,1)} )\cdots(a_{(1,m)},a_{(2,m)},\ldots,a_{(n,m)}){\in} \langof\transducer$ with $w_i = a_{(i,1)} \cdot \cdots\cdot a_{(i,m)}$ for all $i \in \{1,\ldots,n\}$.  A \emph{transducer} is  a $2$-tape automaton.

\smallskip

\noindent
{\bf Well-quasi orders.} Given a (possibly infinite set) $C$,  
a quasi-order on $C$ is a reflexive and transitive relation $\sqsubseteq \subseteq C \times C$. 
An infinite sequence $c_1, c_2, \dots$ in $C$ is said to be saturating if there exists 
indices $i < j$ s.t. $c_i \sqsubseteq c_j$. A quasi-order $\sqsubseteq$ is said to be a well-quasi order (wqo) 
on $C$ if every infinite sequence in $C$ is saturating. Observe that the subword ordering $\preceq$ 
between words $u, v$ over a finite alphabet $\Sigma$ 
is well-known to be a wqo  on $\Sigma^*$ \cite{higman}.

\smallskip

\noindent{\em Upward and Downward Closure}. Given a wqo $\sqsubseteq$ on a set $C$, a set $U \subseteq C$ is said to be upward closed if for every $a \in U$ and $b \in C$, with $a \sqsubseteq b$, 
we have $b \in U$. The upward closure of a set $U \subseteq C$ is defined as 
$U {\uparrow}=\{b \in C \mid \exists a \in U, a \sqsubseteq b\}$.
It is known that every upward closed set $U$ can be 
characterized by a finite \emph{minor}. A minor $M \subseteq U$ is s.t. (i) for each $a \in U$, there is a $b \in M$ 
s.t. $b \sqsubseteq a$, and (ii) for all $a, b \in M$ s.t. $a \preceq b$, we have $a=b$.   
For an upward closed set $U$, let $\mini$ be the function that returns the minor of $U$.  
Downward closures are defined analogously. The downward closure of a set $D \subseteq C$ is defined as 
$D {\downarrow}=\{b \in C \mid \exists a \in D, b \sqsubseteq a\}$. The notion of subword relation and thus upward and downward closures naturally extends to $n$-tuples of words. The subword relation here  is component wise i.e. $(u_1, \ldots, u_n) \subword_n (v_1, \ldots, v_n)$ iff $u_i \subword v_i$ for all $i \in [1, n]$.

\smallskip

\noindent
{\bf String Constraints.} An atomic string constraint $\constr$  over an alphabet $\alphabet$ and a set of string variables $\varset$ is either: $(1)$  a \emph{membership constraint}   of the form $x \in \langof\automaton$ where $x \in \varset$  and $\automaton$ is a FSA (i.e., the evaluation of $x$ is in   the language of a FSA $\automaton$ over $\alphabet$),  or $(2)$ 
a \emph{relational constraint}  of the form  $(\term',\term)  \in  \relof{\transducer}$ where  $\term$ and $\term'$ are string terms (i.e., concatenation of   variables in  $\varset$) 
 and $\transducer$ is a transducer over $\alphabet$, and  $\term$ and $\term'$ are  related by a relation recognised by the transducer $\transducer$. 
$(\term',\term)  \in  \relof{\transducer}$ can also be written as $t'=\transducer(t)$, that is,  $\transducer$ produces $t'$ as the output
on input $t$. 
 For a given term $\term$, $|\term|$  denotes the number of variables 
  appearing in $\term$.

A string constraint $\formula$ is a conjunction of atomic string constraints.
We define the semantics of string constraints using a mapping $\eta$, called {\em evaluation}, that assigns for each  variable a word over $\alphabet$. The evaluation $\eta$ can be extended in the straightforward manner to string terms as follows  $\eta(\term_1 \cdot \term_{2}) = \eta(\term_{1}) \cdot \eta(\term_{2})$. We  extend also  $\eta$ to atomic  constraints as follows: (1) $\eta({x \in \langof\automaton})=\top$ iff $ \eta(x)  \in \langof{\automaton}$, and (2)	 $\eta({(\term,\term')  \in  \relof{\transducer}})=\top	$ iff $(\eta(\term),\eta(\term'))  \in  \relof{\transducer} $.

The truth value of  $\formula$ for an evaluation $\eta$ is defined in the standard manner. If $\eta{(\formula)} = \top$ then $\eta$ is a \emph{solution} of $\formula$,  
written $\eta \models \formula$. 
The formula $\formula$ is {\it satisfiable} iff it has a solution.

 A string constraint is said to be {\em Straight Line\footnote{In \cite{DBLP:conf/popl/LinB16}, the authors consider Boolean combinations of membership constraints.  Our results can be extended to handle this. In~\cite{DBLP:conf/popl/LinB16}, they consider also  constraints of the form $x=\term$. Such constraints can be encoded using our relational constraints. }} (SL) 
if it can be  rewritten as $\formula' \wedge \bigwedge\limits_{i=1}^{k} \constr_i$ 
where $\formula'$ is a conjunction of membership constraints, and  $\constr_1,\ldots,\constr_k$ are  relational constraints such that $(1)$ there is a sequence of different  string variables  $x_1, x_2, \ldots, x_n$ with $n \geq k$, and $(2)$ $\constr_i$ is of the form $(x_i,\term_i)  \in  \relof{\transducer_i}$ such that if a variable $x_j$ is appearing in $\term_i$ then $j>i$. A string constraint in the SL form is called an SL formula. Observe that any string formula can be rewritten as a conjunction of two SL formulas (by using extra-variables).

\begin{lemma}
\label{string-splitting}
Given a  string constraint $\formula$, it is possible to construct two SL string constraints $\formula_1$ and $\formula_2$ such that $\formula$ is satisfiable iff $\formula_1 \wedge \formula_2$ is satisfiable.
\end{lemma}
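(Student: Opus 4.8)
The plan is to prove the equivalence by an explicit syntactic transformation of $\formula$ that introduces one fresh variable per relational constraint and then splits each relational constraint into two ``halves'' placed in the two different formulas. The reason a conjunction of atomic constraints may fail to be straight-line is twofold: the dependency graph induced by the relational constraints may contain cycles, and the same variable (or a multi-variable term) may occur on the output side of several constraints. First I would observe that both obstructions disappear once every output is redirected to a \emph{fresh, private} variable that occurs nowhere as an input.

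\emph{Construction.} Write $\formula=\formula_{\mathsf{mem}}\wedge\bigwedge_{j=1}^{k}\constr_j$, where $\formula_{\mathsf{mem}}$ gathers the membership constraints and each $\constr_j=(\term'_j,\term_j)\in\relof{\transducer_j}$ is relational, with $\term'_j$ and $\term_j$ concatenations of variables of $\varset$. For every $j$ I introduce a fresh variable $w_j\notin\varset$ and set $\formula_1:=\formula_{\mathsf{mem}}\wedge\bigwedge_{j=1}^{k}(w_j,\term_j)\in\relof{\transducer_j}$ and $\formula_2:=\bigwedge_{j=1}^{k}(w_j,\term'_j)\in\relof{\idf}$, where $\idf$ is the identity transducer, so that the latter constraint expresses $w_j=\term'_j$.

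Next I would check that $\formula_1$ and $\formula_2$ are each straight-line. In both formulas the only variables appearing on an output side are the $w_j$; they are pairwise distinct and, being fresh, never occur inside any $\term_j$ or $\term'_j$. Hence listing the variables as $w_1,\dots,w_k$ followed by the original variables of $\varset$ meets the straight-line requirement: each output $w_j$ strictly precedes all of its inputs and no output is used twice, while $\formula_{\mathsf{mem}}$ occupies the membership slot $\formula'$ of the definition. The important point is that the private outputs $w_j$ break every cycle \emph{within} each formula, even though the conjunction $\formula_1\wedge\formula_2$ (which is not required to be straight-line) re-creates the original, possibly cyclic, dependencies.

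Finally I would establish equisatisfiability. For the forward direction, given $\eta\models\formula$ I extend it by $\eta(w_j):=\eta(\term'_j)$; then $\formula_2$ holds by construction, and $(\eta(w_j),\eta(\term_j))=(\eta(\term'_j),\eta(\term_j))\in\relof{\transducer_j}$ yields $\formula_1$. Conversely, any $\hat\eta\models\formula_1\wedge\formula_2$ satisfies $\hat\eta(w_j)=\hat\eta(\term'_j)$ by $\formula_2$, so substituting this identity into $(\hat\eta(w_j),\hat\eta(\term_j))\in\relof{\transducer_j}$ recovers $(\hat\eta(\term'_j),\hat\eta(\term_j))\in\relof{\transducer_j}$, i.e.\ $\constr_j$, while the membership constraints transfer verbatim; thus $\hat\eta$ restricted to $\varset$ is a solution of $\formula$. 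The only genuinely delicate step is verifying the straight-line ordering in the presence of cycles and repeated or multi-variable output terms, and the per-constraint fresh output $w_j$ is precisely the device that makes it go through; the remaining verifications are routine.
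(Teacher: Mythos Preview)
Your proposal is correct and follows essentially the same approach as the paper: introduce one fresh variable per relational constraint, route the transducer output to that fresh variable in one formula, and tie the fresh variable to the original output term via the identity transducer in the other formula. The only cosmetic differences are that the paper places the identity constraints in $\formula_1$ and the transducer constraints in $\formula_2$ (you do the opposite), and the paper writes the output side as a single variable $x_i$ while you treat the more general case of an arbitrary term $\term'_j$; neither affects the argument.
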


Let $\formula$ be a string constraint and  $x_1, \ldots,x_n$ be the set of variables appearing
in $\formula$. We use $\langof\formula$ to denote the language of $\formula$ which consists of the set of $n$-tuple of words $(u_1, \ldots, u_n) $ such that there is an evaluation $\eta$ with $\eta{(\formula)} = \top$ and $\eta(x_i)=u_i$ for all $i \in [1,n]$.

\smallskip

\noindent
{\bf The Separability Problem.}
Given two  classes of languages $\Cc$ and $\Ss$, the separability problem for $\Cc$ by the separator class $\Ss$ is defined as follows: Given two languages $I$ and $E$ from the class $\Cc$, does there exist a separator $S \in \Ss$ such that $I \subseteq S$ and $E \cap S = \emptyset$.

 \section{Regular Separability of String Constraints}
 \label{sec:regularSL}

Let $\Sigma$ be an alphabet and $k, n$ be two natural numbers. 
A set $R$ of $n$-tuples of words over $\Sigma$ is said to be regular ($\rec$) iff there is a sequence of finite-state automata $\automaton_{(i,1)}, \ldots, \automaton_{(i,n)}$ for every $i \in [1,k]$ such that $ R{=}\bigcup_{i=1}^k [\langof{\automaton_{(i,1)}} \times \cdots \times \langof{\automaton_{(i,n)}}]$. 
The {\em $\rec$} separability problem for string constraints consists in checking for two  given string constraints $\formula$ and $\formula'$ over the string variables $x_1, \ldots,x_n$  whether there is a regular set $R \subseteq (\Sigma^*)^n$ such that $\langof{\formula} \subseteq R$ and $R \cap \langof{\formula'}=\emptyset$. 
The regular separability problem is undecidable in general. 
This can be seen as an immediate corollary of the fact that  the satisfiability problem of string constraints  is undecidable  \cite{morvan,DBLP:journals/pacmpl/ChenHLRW19} even for a simple formula of the form $(x,x)  \in  \relof{\transducer}$ where $\transducer$ is a  transducer and $x$ is a string variable.  
To see why, consider $\formula$ to be $(x,x)  \in  \relof{\transducer}$ and $\formula'$ such that $\langof{\formula'}=\Sigma^*$.  It is easy to see that $\formula'$ and $\formula$ are separable by a regular set iff  $\formula$ is unsatisfiable. 
In the following, we show a stronger result, namely that this undecidability still holds even for $\rec$ separability between two SL formulas. 

\begin{theorem}\label{thm:reg-sep}
The $\rec$ separability problem is undecidable even for  SL  string constraints.  
\end{theorem}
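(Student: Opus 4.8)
The plan is to reduce from the halting problem of deterministic Turing machines. Given a machine $M$, I fix an encoding of configurations as words over a finite alphabet $\Delta$ (tape contents with a marked head position and state), and encode a run as a word $w = c_0 \# c_1 \# \cdots \# c_m$, where $c_0$ is the initial configuration and $c_m$ a halting one. Well-formedness of the block structure, ``$c_0$ is the initial configuration'', and ``$c_m$ is halting'' are regular conditions, hence expressible by membership constraints. The only \emph{global}, non-regular requirement is that $c_{i+1} = \mathrm{next}(c_i)$ for every $i$. The central difficulty, already flagged in the introduction, is to express this within the straight-line format, where no variable may be reused on both sides of a relational constraint and each variable is the output of at most one transducer; in particular the naive constraint $(x,x) \in \relof{\transducer}$ is forbidden.

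To circumvent this, I use two variables, $x$ holding the candidate run $w$ and $y$ an auxiliary output, and I \emph{split} the global check across the two constraints $\formula, \formula'$ that are to be separated. Because a single step of $M$ is a \emph{local} rewriting, there is a letter-to-letter transducer $\transducer_{\mathrm{step}}$ mapping $w = c_0 \# \cdots \# c_m$ to the blockwise successor $\mathrm{next}(c_0) \# \cdots \# \mathrm{next}(c_{m-1})$, and a rational transducer $\transducer_{\mathrm{shift}}$ mapping $w$ to $c_1 \# \cdots \# c_m$. Let $\formula$ consist of $(y,x) \in \relof{\transducer_{\mathrm{step}}}$ together with the membership constraints fixing the format and the initial and halting blocks of $x$, and let $\formula'$ consist of $(y,x) \in \relof{\transducer_{\mathrm{shift}}}$ with the analogous membership constraints. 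Both are straight-line: each has a single relational constraint whose right-hand term uses only the higher-indexed variable $x$. By construction a pair lies in $\langof{\formula} \cap \langof{\formula'}$ exactly when $\mathrm{step}(w) = \mathrm{shift}(w)$, i.e.\ when $c_{i+1} = \mathrm{next}(c_i)$ for all $i$; together with the membership constraints this says that $w$ encodes a genuine halting run. Hence $\langof{\formula} \cap \langof{\formula'} \neq \emptyset$ if and only if $M$ halts.

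One direction of the reduction is then immediate: any $\rec$ separator $R$ satisfies $\langof{\formula} \subseteq R$ and $R \cap \langof{\formula'} = \emptyset$, forcing $\langof{\formula} \cap \langof{\formula'} = \emptyset$. Thus if $M$ halts the two languages share a tuple and no separator can exist, so separability fails.

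The main obstacle is the converse: when $M$ does \emph{not} halt, I must exhibit an actual $\rec$ separator, namely a finite union of products of regular languages that contains $\langof{\formula}$ and is disjoint from $\langof{\formula'}$. Here the leverage is that such a recognizable relation correlates its two components only boundedly (it has only finitely many distinct fibers, or columns), whereas the non-halting assumption guarantees that every well-formed, initial-and-halting $w$ must violate the successor relation at some block, a defect already witnessed by boundedly many adjacent letters. The plan is to turn such a local witness into a regular test that accepts every pair $(\mathrm{step}(w), w) \in \langof{\formula}$ yet rejects every $(\mathrm{shift}(w), w) \in \langof{\formula'}$. Making this separator construction precise — while simultaneously ensuring that the entire encoding stays inside the straight-line fragment — is where the real work lies, and I expect it to be the crux of the proof, precisely the technical difficulty of forcing the encoding of a sequence of configurations to yield straight-line constraints.
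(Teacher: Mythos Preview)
Your reduction idea — split the ``$c_{i+1}=\mathrm{next}(c_i)$'' check across two SL formulae by writing $y=\transducer_{\mathrm{step}}(x)$ in $\formula$ and $y=\transducer_{\mathrm{shift}}(x)$ in $\formula'$ — is clean, and the easy direction (if $M$ halts then the intersection is nonempty, hence no separator) is fine. The gap is in the direction you correctly flag as the crux: when $M$ does not halt you need to \emph{build} a $\rec$ separator, and your sketch does not do this. The ``local defect'' you describe is precisely the statement that some aligned block of $y$ differs from $\mathrm{next}$ of the aligned block of $x$; its complement — ``every aligned block of $y$ equals $\mathrm{next}$ of the corresponding block of $x$'' — is exactly the graph of $\transducer_{\mathrm{step}}$, which is a \emph{rational} relation but not a \emph{recognizable} one (a finite union of products $\bigcup_i L_i\times L'_i$ cannot synchronize unboundedly many block positions across the two components). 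Since the defect position ranges over all of $[0,m]$ and configurations are unbounded for a general Turing machine, it is not at all clear that any $\rec$ set sits between $\langof{\formula}$ and the complement of $\langof{\formula'}$; your plan as stated conflates rational with recognizable separators.

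The paper's proof avoids this obstacle by a different and rather specific device: it uses the SL-legal term $y\,\$\,y$ (the same variable twice in the input term) so that the transducer can read $y$ once to produce the odd-indexed configurations and a second time merely to \emph{count} blocks, emitting a unary tail $a^k$ (for $\formula_1$) or $a^{2k}$ (for $\formula_2$). Thus the two languages differ only in a unary suffix of the $x$-coordinate while sharing the same $y$-coordinate, and the reduction is oriented \emph{oppositely} to yours: separable iff $M$ halts. Non-separability when $M$ runs forever is then a one-line pumping argument on the $a^\ast$ tail (choose a genuine run of length $n!$; any DFA with fewer than $n$ states cannot distinguish $a^{n!}$ from $a^{2n!}$), and separability when $M$ halts in $n$ steps is a bounded case analysis over at most $n$ configurations. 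The repeated-variable trick is exactly what lets both directions go through with recognizable (not merely rational) separators, and it is the main technical point you are missing.
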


 \section{$\ptl$-Separability of String Constraints}
 \label{sec:ptlSL}
Given the  undecidability of $\rec$ separability, we focus on the separability  problem using  piece-wise testable languages ($\ptl$). 
 We show that the problem is in general  undecidable  and then we show its decidability in the case of SL formulas. The undecidability proof is  exactly the same as in the case of the  $\rec$ separability  (since $\Sigma^*$ is a $\ptl$) while the decidability proof is done by reduction to its corresponding problem for the class of Ordered Multi Pushdown Automata ($\OMPA$) \cite{DBLP:journals/ijfcs/AtigBH17,DBLP:journals/ijfcs/BreveglieriCCC96} (which we show its decidability).  In the rest  of this section,  we first  recall the definition of $\ptl$ and extend it to $n$-tuples of words. Then, we define the class of $\OMPA$s  and show the decidability of its separability problem  by $\ptl$.  Finally, we show the decidability of the separability problem for SL formulas by $\ptl$.

\smallskip

\noindent
{\bf {Piece-wise testable languages.}}\label{subsec:ptl}
Let $\Sigma$ be an alphabet. 
A piece-language is a regular language of the form $\Sigma^* a_1 \Sigma^* a_2 \Sigma^* \ldots \Sigma^* a_k \Sigma^*$ where $a_1 , a_2 , \ldots ,a_k \in \Sigma$.
 The class  of  piecewise testable languages ($\ptl$) is defined as  a finite Boolean combination of piece languages \cite{DBLP:conf/automata/Simon75}. 
We can define $\ptl$ for an $n$-tuple alphabet with $n \in \mathbb{N}$, as follows: The class of  $\ptl$  over $n$-tuple words (denoted $n$-$\ptl$) is defined as the finite Boolean combination of languages of the form  $(\Sigma^*)^n  {\bf v}_1  (\Sigma^*)^n  \cdots   (\Sigma^*)^n   {\bf v}_k  (\Sigma^*)^n$ where  ${\bf v}_i \in (\Sigma_\epsilon)^n$ for all $i \in [1,k]$.

\smallskip

\noindent
{\bf{Ordered Multi Pushdown Automata.}} Let $\Sigma$ be a finite alphabet and $n \geq 1$  a natural number. 
Ordered multi-pushdown automata  extend the model of pushdown automata with multiple stacks. An $n$-Ordered Multi Pushdown Automaton ($\OMPA$ or $n$-$\OMPA$)    is a tuple $\Aa=(Q, \Sigma, \Gamma, \delta, Q_0, F)$  where $(1)$ $Q, Q_0$ and $F$ are finite sets of states, initial states and final states, respectively, $(2)$ $\Gamma$ is the stack alphabet and it contains the special symbol $\bot$, and $(3)$ $\delta$ is the transition relation. 
$\OMPA$ are restricted in a sense that pop operations are only  allowed  from the first non-empty stack. 
A transition in $\delta$ is of the form  $(q, \bot, \ldots,\bot,A_j,\epsilon,\ldots, \epsilon) \to^{a} (q', \gamma_1, \ldots, \gamma_n)$  where $A_j \in \Gamma_\epsilon$ represents the symbol that will be popped from the stack $j$ on reading the input symbol $a \in \Sigma_\epsilon$,  and $\gamma_i \in \Gamma^*$ represents the sequence of symbols which is going to be pushed on the stack $i$.  The condition that $A_{1} = \ldots = A_{j-1} = \bot$
(resp.  $A_{j+1} = \ldots = A_n = \epsilon$) corresponds to the fact  that the stacks $1,\ldots,j-1$ (resp. $j+1, \dots n$) are required to be empty (resp. inaccessible).

A configuration of $\Aa$ is of the form $(q,w,\alpha_1,\ldots,\alpha_n)$ where $q \in Q$, $w \in \Sigma^*$ and $\alpha_1,\ldots,\alpha_n \in (\Gamma \setminus \{\bot\})^* \cdot \{\bot\}$. The transition relation $\to$ between the  set of configurations of $\Aa$ is defined as follows: Given two configurations $(q,w,\alpha_1,\ldots,\alpha_n)$ and $(q',w',\alpha'_1,\ldots,\alpha'_n)$, we have $(q,w,\alpha_1,\ldots,\alpha_n) \to (q',w',\alpha'_1,\ldots,\alpha'_n)$ iff there is a transition $(q, A_1, \ldots, A_n) \to^{a} (q', \gamma_1, \ldots, \gamma_n) \in \delta$ such that $w= a w'$ and $\alpha'_i = \gamma_i u_i$ where $\alpha_i= A_i u_i$ for all $i \in [1,n]$. We use $\to^*$ to denote the transitive and reflexive closure of $\to$. A word $w \in \Sigma^*$ is accepted by $\Aa$ if there exists a sequence of configurations $c_1, \ldots, c_m$ such that: $(1)$ $c_1$ is of the form $(q_0,w, \bot,\ldots, \bot)$, with $q_0 \in Q_0$, $(2)$ $c_m$ is of the form $(q_f,\epsilon, \bot, \ldots, \bot)$, with $q_f \in F$, and $(3)$ $c_i \to c_{i+1}$ for all $i \in [1,m-1]$. The language  of $\Aa$ (denoted by $\Lang{\Aa}$)  is  defined as the set of words accepted by $\Aa$. 
The languages accepted by $\OMPA$ are referred to as OMPL.

In the following, we show that the separability problem for OMPL by $\ptl$ is decidable. As a first step, we show that the class of OMPL forms a 
\emph{full trio} \cite{gins,DBLP:journals/dmtcs/CzerwinskiMRZZ17}. 
   We first recall the definition of a full-trio. Let $L$ be a language 
over an alphabet $A$, and let $B \subseteq A$. The $B$-projection 
of a word $w \in A^*$ is the longest scattered subword containing only symbols   
from $B$. For example, if $A=\{a,b,c\}$, $B=\{b,c\}$, then the $B$-projection 
of $w=ababac$ is $bbc$. The $B$-upward closure of $L$ is the set of all words that can be obtained 
by taking a word in $L$ and padding it with symbols from $B$. For example, if 
$L=\{w\}$ for $w$ as above, then the  $B$-upward closure of $L$ is 
the set $B^*aB^*bB^*aB^*bB^*aB^*cB^*$. 
 A class of languages $\Cc$ is a full trio if it is effectively closed under 
(1) $B$-projection for every finite alphabet $B$,
	(2) $B$-upward closure for every finite alphabet $B$, and 
	(3) intersection with regular languages. 	  

\begin{lemma}
The class of OMPLs forms a full trio. 	
\label{lem:ompl-ft}	
\end{lemma}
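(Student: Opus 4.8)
The plan is to establish each of the three closure conditions by a direct, transition-local construction on the $\OMPA$, exploiting the fact that none of the three operations needs to touch the stacks. The guiding invariant is the following: if a construction leaves the \emph{stack component} of every transition unchanged (the popped symbols $A_1,\dots,A_n$ and the pushed words $\gamma_1,\dots,\gamma_n$), and only alters the reading behaviour or the finite control, then both structural side-conditions of the model are preserved for free, namely the ordering discipline (pops happen only from the first non-empty stack) and the acceptance condition (every stack is reduced to $\bot$ at the end). Thus for each operation it suffices to manipulate reads and states, and then argue correctness by a routine induction on run length.

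\textbf{Intersection with a regular language} $R=\langof{\Bb}$. I would take the synchronous product of $\Aa=(Q,\Sigma,\Gamma,\delta,Q_0,F)$ with an FSA $\Bb=(P,\Sigma,\delta_{\Bb},P_0,F_{\Bb})$ recognising $R$. States become pairs $(q,p)\in Q\times P$, with initial states $Q_0\times P_0$ and final states $F\times F_{\Bb}$; a transition $(q,A_1,\dots,A_n)\to^{a}(q',\gamma_1,\dots,\gamma_n)$ of $\Aa$ is lifted to $((q,p),A_1,\dots,A_n)\to^{a}((q',p'),\gamma_1,\dots,\gamma_n)$ for every $(p,a,p')\in\delta_{\Bb}$ (and, for $a=\epsilon$, leaving $p$ unchanged). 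Since the stack part is copied verbatim, the result is again an $\OMPA$, and an induction on the length of runs shows it accepts exactly $\Lang{\Aa}\cap R$.

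\textbf{$B$-projection and $B$-upward closure.} For the $B$-projection I would build $\Aa'$ from $\Aa$ by replacing each transition whose read symbol $a$ lies in $\Sigma\setminus B$ by the identical transition reading $\epsilon$ instead, keeping transitions with $a\in B$ and pre-existing $\epsilon$-transitions untouched. A run of $\Aa$ on $w$ is then, stack-move for stack-move, a run of $\Aa'$ that reads exactly the $B$-projection of $w$, and conversely, so $\Lang{\Aa'}$ is the $B$-projection of $\Lang{\Aa}$. For the $B$-upward closure I would augment $\Aa$ with a self-loop at every state $q$ reading each $b\in B$ while doing nothing to any stack, namely $(q,\epsilon,\dots,\epsilon)\to^{b}(q,\epsilon,\dots,\epsilon)$. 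Because all popped symbols are $\epsilon$, this transition performs no pop, is firable in an arbitrary configuration, and leaves the stacks untouched; hence these loops allow the automaton to insert arbitrarily many $B$-symbols at any point of an accepting run without perturbing the stack computation, which is precisely the $B$-upward closure.

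\textbf{Main obstacle.} The only genuinely delicate point is the interaction of these added or relabelled moves with the two constraints specific to the $\OMPA$ model. I must check that a transition with all popped symbols equal to $\epsilon$ is syntactically legitimate (it is, since each $A_j\in\Gamma_\epsilon$) and that it neither triggers a pop from a non-first stack nor leaves a stack non-empty at acceptance (it does neither, as it pushes and pops nothing). Granting this, and observing that the product preserves the ordering discipline precisely because the FSA component never inspects the stacks, all three constructions are evidently effective, which yields the lemma.
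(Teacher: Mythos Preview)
Your proposal is correct and follows essentially the same approach as the paper: relabel non-$B$ reads to $\epsilon$ for projection, add $B$-reading self-loops with no stack action for upward closure, and take the product with an FSA for regular intersection. You are in fact more thorough than the paper, which dispatches the regular-intersection case as ``well-known'' and does not explicitly verify that the added $\epsilon$-pop self-loops respect the ordered-pop discipline; your discussion of the ``main obstacle'' makes that point explicit.
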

 To connect the $\ptl$ separability problem of SL string constraints to that of OMPL, we first use lemma \ref{lem:ptlsepdiag}. 
 Lemma \ref{lem:ptlsepdiag} states that the $\ptl$ separability problem for OMPL is equivalent to the  \emph{diagonal problem} for OMPL. We recall the diagonal problem \cite{DBLP:journals/dmtcs/CzerwinskiMRZZ17}. Fix a class of languages $\Cc$ as above and a language $L \in \Cc$ over alphabet $\Sigma=\{a_1, \dots, a_n\}$. Assume an ordering 
$a_1 < \dots <a_n$ on $\Sigma$.  For $a \in \Sigma$ and $w \in L$, let $\#_a(w)$ denote the number of occurrences of $a$ in $w$. 
The \emph{Parikh image} of $w$ is the $n$-tuple $(\#_{a_1}(w), \dots, 
 \#_{a_n}(w))$. The \emph{Parikh image} of $L$ is the set of all Parikh images of words in $L$. An $n$-tuple $(m_1, \dots, m_n) \in \mathbb{N}^n$ is dominated by another $n$-tuple 
 $(d_1, \dots, d_n)$ iff $m_i \leq d_i$ for all $1 \leq i \leq n$.  The \emph{diagonal problem} for $\Cc$ is the decision problem, which, 
 given as input, a language $L$ from $\Cc$ asks whether 
 each $n$-tuple $(m, \dots, m) \in \mathbb{N}^n$ is dominated by 
 some Parikh image of $L$. 

\begin{lemma}
The $\ptl$-separability and diagonal problems are equivalent  for OMPLs.   	\label{lem:ptlsepdiag}
\end{lemma}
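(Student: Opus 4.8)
The plan is to obtain the lemma by instantiating the abstract result of \cite{DBLP:journals/dmtcs/CzerwinskiMRZZ17}, which shows that for every full trio the $\ptl$-separability problem and the diagonal problem are interreducible, with the class of OMPLs. The only property of OMPLs that this instantiation needs is that they form a full trio, and this is exactly Lemma \ref{lem:ompl-ft}. So the proof rests on two ingredients: (i) OMPLs are a full trio (Lemma \ref{lem:ompl-ft}); and (ii) for every full trio the two problems are equivalent \cite{DBLP:journals/dmtcs/CzerwinskiMRZZ17}. Combining (i) and (ii) yields the statement directly. For the development that follows, the important direction is $\ptl$-separability $\leq$ diagonal: once the diagonal problem for OMPLs is shown decidable (via the semilinearity of their Parikh image), this reduction transfers decidability to $\ptl$-separability.

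To keep the argument self-contained I would recall the mechanism underlying (ii). The starting point is the standard characterization of $\ptl$-separability through the piecewise (Simon) congruence: writing $u \equiv_k v$ when $u$ and $v$ possess exactly the same scattered subwords of length at most $k$, two languages $K$ and $L$ are $\ptl$-separable iff there is a single $k$ for which no $\equiv_k$-class meets both $K$ and $L$; equivalently, they are inseparable iff for every $k$ there exist $u \in K$ and $v \in L$ with $u \equiv_k v$. The role of the full-trio closure is that this condition (for every $k$ a matched pair exists) can be recast as an \emph{unboundedness} property of a single, effectively constructible language, which is precisely the quantity the diagonal problem measures.

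Concretely, from $K$ and $L$ one builds, using only $B$-projection, $B$-upward closure, and intersection with regular languages, a language $M$ over an enriched alphabet whose words interleave a word of $K$, a word of $L$, and a supply of \emph{markers} certifying common pieces; intersection with regular languages enforces that the $K$-component lies in $K$ and the $L$-component in $L$ and that the markers are consistent, while projection and upward closure expose the markers as free counters. The diagonal problem on $M$ then asks whether arbitrarily many markers can be produced simultaneously, which by the characterization above holds exactly when $K$ and $L$ are $\ptl$-inseparable. The reverse reduction, from the diagonal problem for a single language to a $\ptl$-separability instance, is dual and is carried out in \cite{DBLP:journals/dmtcs/CzerwinskiMRZZ17} using the same three operations.

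The main obstacle is not in anything OMPL-specific: once Lemma \ref{lem:ompl-ft} supplies the three closure operations, the reductions are entirely generic. The actual work lies in verifying that every intermediate language arising in the two reductions is obtained \emph{solely} from $B$-projection, $B$-upward closure, and intersection with regular languages, so that it remains inside the class and the reductions are effective. Since \cite{DBLP:journals/dmtcs/CzerwinskiMRZZ17} establishes exactly this for abstract full trios, and Lemma \ref{lem:ompl-ft} certifies that OMPLs qualify, I would invoke that result rather than reprove the constructions.
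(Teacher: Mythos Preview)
Your proposal is correct and follows exactly the paper's approach: invoke Lemma~\ref{lem:ompl-ft} (OMPLs form a full trio) and then cite \cite{DBLP:journals/dmtcs/CzerwinskiMRZZ17} for the equivalence of $\ptl$-separability and the diagonal problem over any full trio. The paper's own proof is the one-line version of this, while you additionally unpack the Simon-congruence characterization and the marker construction behind the cited reduction; that extra exposition is accurate but not required.
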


\begin{proof}
This   equivalence has been shown for full trios in  \cite{DBLP:journals/dmtcs/CzerwinskiMRZZ17}  (see Lemma \ref{lem:ompl-ft}).
\end{proof}

\begin{lemma}
Each language $L$ in OMPL has a semilinear Parikh image. 	
\label{lem:omplsemlin}
\end{lemma}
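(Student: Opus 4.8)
The plan is to reduce the problem to Parikh's theorem: I would construct, from a given $\OMPA$ $\Aa$, a context-free grammar $G$ (equivalently a single-stack $\PDA$) whose language has \emph{the same Parikh image} as $\Lang{\Aa}$, and then invoke Parikh's theorem, which guarantees that the Parikh image of every context-free language is an effectively computable semilinear set. The point is that the downstream machinery (the diagonal problem and the semilinear results imported from \cite{DBLP:journals/dmtcs/CzerwinskiMRZZ17}) depends only on the commutative image of the language, so it suffices to produce \emph{any} language that is Parikh-equivalent to $\Lang{\Aa}$; that language need not equal $\Lang{\Aa}$ as a set of words, and indeed it generally cannot, since $\Lang{\Aa}$ may be genuinely non-context-free.

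First, I would fix an $n$-$\OMPA$ $\Aa=(Q,\Sigma,\Gamma,\delta,Q_0,F)$ and analyse the shape of an accepting run. The defining restriction that a pop is allowed only from the first non-empty stack imposes a strong ordering on how pushes are matched with pops: a symbol pushed on stack $i$ can be removed only while stacks $1,\dots,i-1$ are all empty. I would make this precise by decomposing a run into \emph{contexts}, each a maximal sub-run delimited by configurations in which all lower-indexed stacks are empty, and recording for each context only its two endpoint control states, the stack currently being emptied, and the multiset of input letters consumed inside it.

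Second, I would turn this decomposition into the grammar $G$ over $\Sigma$: its nonterminals are the (finitely many) context summaries, and its productions compose contexts in accordance with the ordered pop discipline, in the spirit of the classical triple construction that converts a $\PDA$ into a context-free grammar. The crucial freedom is that a production may emit the letters read inside a context in \emph{any} order, since only the Parikh image must be preserved; this is exactly what lets a context-free, tree-shaped derivation capture behaviour that is not context-free as a set of words. I would then establish Parikh-equivalence in both directions: every accepting run of $\Aa$ induces a derivation of $G$ with the same letter count, and conversely every derivation of $G$ can be realigned into an accepting run of $\Aa$ reading a Parikh-equivalent word. Applying Parikh's theorem to $G$ then yields an effectively computable semilinear set equal to the Parikh image of $\Lang{\Aa}$.

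The main obstacle will be the second step: the context decomposition must be simultaneously fine enough that composing summaries never produces a run violating the ordered pop discipline, and coarse enough that the set of summaries and productions stays finite and the composition is genuinely context-free. Since low-index stacks can be emptied and then refilled arbitrarily often, the number of contexts is \emph{not} bounded by $n$, so the decomposition has to track stack-emptiness dynamically rather than through a fixed phase bound; proving that the resulting grammar neither loses any reachable Parikh vector nor introduces spurious ones is the technical heart of the argument. As a fallback I would mention the alternative route of showing that the runs of an ordered multi-pushdown automaton have bounded split-width (hence bounded tree-width) and then appealing to the semilinearity of the Parikh image of regular tree languages; this bypasses the explicit grammar construction at the cost of importing the bounded-width machinery.
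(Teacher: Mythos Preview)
Your high-level plan is correct and in fact coincides with the paper's strategy: exhibit a context-free language Parikh-equivalent to $\Lang{\Aa}$ and invoke Parikh's theorem. The paper, however, takes a much shorter route. Rather than building the Parikh-equivalent grammar from scratch via a context decomposition, it chains two citations: by \cite{DBLP:journals/ijfcs/AtigBH17} every OMPL is the language of a $Dn$-grammar (the depth-$n$ grammars of Breveglieri et al.), and by \cite{DBLP:journals/ijfcs/BreveglieriCCC96} every $Dn$-grammar has an \emph{underlying} ordinary context-free grammar that is Parikh-equivalent to it; Parikh's theorem then finishes. Your context decomposition is essentially a direct re-derivation of this second step --- the underlying CFG of a $Dn$-grammar is obtained precisely by forgetting the ordering discipline among pending nonterminals, which is what your ``emit the letters in any order'' freedom accomplishes --- so the obstacle you correctly flag (showing the grammar neither loses nor gains Parikh vectors) is real but already discharged in \cite{DBLP:journals/ijfcs/BreveglieriCCC96}. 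Your bounded split-width fallback would also go through, but is again heavier machinery than needed here. In short: your proposal is sound and self-contained, whereas the paper's proof is two citations and three lines; what you gain in independence you pay for in length.
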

\begin{theorem}
\label{thm-ompl-ptl}
Given two $\OMPA$s $\Aa_1$ and $\Aa_2$, checking whether there is a $\ptl$ $L$ such that $\Lang{\Aa_1} \subseteq L$ and $L \cap \Lang{\Aa_2}  = \emptyset$ is decidable.
\end{theorem}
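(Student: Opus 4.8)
The plan is to obtain the theorem as a direct consequence of the three preceding lemmas together with the results of \cite{DBLP:journals/dmtcs/CzerwinskiMRZZ17}. Almost all of the genuine work has already been delegated to those lemmas, so what remains is to chain them correctly and to verify that every step is effective.

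First I would invoke Lemma~\ref{lem:ompl-ft}, which establishes that the class of OMPLs is a full trio. This is precisely the hypothesis under which the abstract separability/diagonal machinery of \cite{DBLP:journals/dmtcs/CzerwinskiMRZZ17} applies, and it is what licenses the use of Lemma~\ref{lem:ptlsepdiag}. By that lemma, the question ``is there a $\ptl$ $L$ with $\Lang{\Aa_1} \subseteq L$ and $L \cap \Lang{\Aa_2} = \emptyset$'' is equivalent to, and in particular Turing-reduces to, the diagonal problem for OMPLs; the reduction builds the required single-language instances from $\Aa_1$ and $\Aa_2$ using the full-trio operations of Lemma~\ref{lem:ompl-ft}. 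Hence it suffices to decide the diagonal problem for OMPLs.

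To decide the diagonal problem I would then use Lemma~\ref{lem:omplsemlin}: every OMPL has a semilinear Parikh image. Feeding this into the result of \cite{DBLP:journals/dmtcs/CzerwinskiMRZZ17} that the diagonal problem is decidable for any class of languages with a semilinear Parikh image, I obtain decidability of the diagonal problem for the OMPL instances produced in the previous step, and therefore---by reversing the equivalence of Lemma~\ref{lem:ptlsepdiag}---decidability of $\ptl$-separability for $\Aa_1$ and $\Aa_2$, which is exactly the claim.

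The one point that genuinely needs care, and which I regard as the only real obstacle in assembling the argument (the substantive difficulty having been pushed into Lemmas~\ref{lem:ompl-ft} and~\ref{lem:omplsemlin}), is \emph{effectiveness}. The appeal to \cite{DBLP:journals/dmtcs/CzerwinskiMRZZ17} requires the semilinear representation of the Parikh image of an OMPL to be \emph{computable}, not merely to exist, and it requires the full-trio closures of Lemma~\ref{lem:ompl-ft} used to build the derived instances to be effective. I would therefore make explicit that the constructions underlying Lemmas~\ref{lem:ompl-ft} and~\ref{lem:omplsemlin} yield the relevant objects constructively, so that the resulting procedure is a genuine algorithm. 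No complexity bound is claimed here, consistent with the remark that even the one-stack (CFL) case has open complexity.
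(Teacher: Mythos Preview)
Your proposal is correct and follows essentially the same route as the paper's proof: chain Lemma~\ref{lem:ptlsepdiag} (licensed by Lemma~\ref{lem:ompl-ft}) to reduce $\ptl$-separability to the diagonal problem, then invoke Lemma~\ref{lem:omplsemlin} together with the result of \cite{DBLP:journals/dmtcs/CzerwinskiMRZZ17} that the diagonal problem is decidable for languages with semilinear Parikh image. Your additional remarks on effectiveness of the constructions are a welcome clarification but do not change the argument.
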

\begin{proof}
The proof follows from Lemmas \ref{lem:omplsemlin}, \ref{lem:ptlsepdiag} and \cite{DBLP:journals/dmtcs/CzerwinskiMRZZ17}, from where we know that the
 diagonal problem is decidable 
for  classes of languages having   semilinear Parikh images. 
\end{proof}
\begin{remark}
	For the case of $1-$OMPA, the $\ptl$ separability problem is already known to be decidable~\cite{DBLP:journals/dmtcs/CzerwinskiMRZZ17} but its complexity is still an open problem.
\end{remark}

\subsection{From SL formula to OMPA}
\label{sltoompl}
In the following, we show that  the $n$-$\ptl$ separability problem for SL formulas  can be reduced to the $\ptl$ separability problem for OMPLs. To that aim, we proceed as follows: First, we show how to encode  an $n$-tuple of words ($\in (\Sigma^*)^n$) as a word over $(\Sigma \cup \{\#\})^*$. Then, we show how to encode  the set of solutions of  an atomic relational constraint $(x, t) \in \relof{\transducer}$ using the stacks of an OMPA.  Finally, we construct an OMPA that accepts exactly the language of a given SL formula $\formula$. This construction will make use of the constructed OMPAs that encode the set of atomic relational constraints appearing in $\formula$. Let  $\Sigma$  be an alphabet.

\noindent
{\bf Encoding an $n$-tuple of words.}  Let  $n$ be a natural number. We assume w.l.o.g. that the special symbol $\#$ does not belong to $\Sigma$. We define the function $\encode$ that maps any $n$-tuple word ${\bf w}=(w_1,\ldots,w_n) \in (\Sigma^*)^n$ to the word  $w_1 \# w_2 \# \cdots \# w_n$.

\noindent
{\bf From SL atomic relational constraints to OMPAs.}
Let $x_1,x_2, \ldots,x_n$ be a sequence of string variables. Let $P_i$ be a relational constraint of the form $(x_i,\term_i)  \in  \relof{\transducer_i}$ such that if a variable $x_j$ is appearing in the term $\term_i$, then $j>i$.
In the following, we show that we can construct an OMPA $\Aa_i$ with $(3n+|\term_i|+2-3i)$ stacks such that if $\Aa_i$ starts with a configuration where  the first $(n-i)$ stacks contain, respectively,  the evaluations   $\eta(x_{i+1}),\ldots, \eta(x_{n})$ (and all the other stacks are empty), then it can compute an evaluation $\eta(x_i)$ of the variable $x_i$ such that: $(1)$ $ (\eta(x_i),\eta(\term_i))  \in  \relof{\transducer_i}$ and the evaluations  $\eta(x_{i}),\ldots, \eta(x_{n})$ are stored in the last $n-i+1$ stacks of $\Aa_i$. Such an OMPA $\Aa_i$ will be used as a gadget when constructing the OMPA $\Aa$  that  accepts exactly the language of a given SL formula $\formula$. 
\begin{lemma}
\label{lem-OMPA-rel}
We can construct an OMPA $\Aa_i{=}{(Q_i, \Sigma, \{\bot\} \cup \Sigma, \delta_i, \{q_i ^{init}\}, \{q^{final}_i\})}$ with ${(3n+|\term_i|+2-3i)}$-stacks such that for every $u_i,\ldots, u_n \in \Sigma^*$, we have $(q_i^{init}, \epsilon, u_{i+1}\bot,\ldots,u_{n}\bot, \bot, \ldots,\bot) \to^* (q_i^{final},\epsilon, \bot,\ldots,\bot,u_i \bot,u_{i+1}\bot,\ldots,u_n\bot)$ iff $ (\eta(x_i),\eta(\term_i))  \in  \relof{\transducer_i}$ with $\eta(x_j)=u_j$ for all $j \in [i,n]$.
\end{lemma}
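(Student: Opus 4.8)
The plan is to build $\Aa_i$ as a gadget that never consumes input (all its transitions read $\epsilon$, consistent with both configurations having $\epsilon$ as their word component) and that performs the entire computation on its stacks. Write $\term_i = x_{j_1} x_{j_2} \cdots x_{j_m}$ with $m = |\term_i|$ and each $j_\ell \in \{i+1,\ldots,n\}$, and set $k = n-i$, so that $\eta(\term_i) = u_{j_1} u_{j_2} \cdots u_{j_m}$. The target configuration stores $u_i$ followed by $u_{i+1},\ldots,u_n$ on the last $k+1$ stacks, so the job of $\Aa_i$ is to (i) feed $\eta(\term_i)$ into a simulation of $\transducer_i$, (ii) read off an output $u_i$ with $(u_i,\eta(\term_i)) \in \relof{\transducer_i}$, and (iii) keep untouched copies of $u_{i+1},\ldots,u_n$ for the output. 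The only two primitives available are that a pop/push pass from one stack to another reverses a word, and that while popping one stack we may push the same symbol simultaneously onto several stacks, thereby duplicating it; everything is engineered around the constraint that a pop is legal only from the first non-empty stack.

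I would organize the run into three phases and fix the stack layout so that each required pop is always from the first non-empty stack. Stacks $1,\ldots,k$ hold the inputs; stacks $k+1,\ldots,2k$ are reversal buffers $B_1,\ldots,B_k$; stacks $2k+1,\ldots,2k+m$ are the ordered feed-stacks $T_1,\ldots,T_m$; stack $2k+m+1$ is a work stack for the transducer output; and the last $k+1$ stacks $2k+m+2,\ldots,3k+m+2$ are the output slots for $u_i,u_{i+1},\ldots,u_n$. This uses exactly $3k+m+2 = 3(n-i)+|\term_i|+2$ stacks. In the duplication phase I first pop each input stack $p$ entirely onto its buffer $B_p$ (so $B_p$ holds $u_{i+p}^R$); popping them in increasing order of $p$ is legal because stacks $1,\ldots,p-1$ are already empty. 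I then pop the buffers in increasing order and, for each symbol of $B_p$, push it simultaneously onto the feed-stacks $T_\ell$ with $j_\ell = i+p$ and onto the $p$-th output slot; since a buffer holds the reversed word, this second pass restores the correct orientation, so that each $T_\ell$ ends holding $u_{j_\ell}$ and output slot $p$ ends holding $u_{i+p}$.

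In the transducer phase I pop $T_1,T_2,\ldots,T_m$ in order, which yields exactly the stream $u_{j_1}\cdots u_{j_m} = \eta(\term_i)$, and I simulate $\transducer_i$ in the finite control: input symbols are matched against the popped symbols while guessed output symbols are pushed onto the work stack, building $u_i^R$; the ordered-pop discipline is again respected because $T_\ell$ becomes the first non-empty stack exactly when $T_1,\ldots,T_{\ell-1}$ have been emptied, and the work stack and output slots sit at strictly higher indices. In the final phase I reverse the work stack into the $u_i$ output slot, leaving empty stacks $1,\ldots,2k+m+1$ and $u_i,u_{i+1},\ldots,u_n$ on the last $k+1$ stacks, as required. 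For correctness, the ``if'' direction is immediate: from an accepting run of $\transducer_i$ on $\eta(\term_i)$ producing $u_i$ I assemble the three phases. The ``only if'' direction uses the stack discipline to argue that the intended schedule is essentially the only way to reach the target configuration: the output slots can be filled only by copies made from the inputs, so the preserved words must be $u_{i+1},\ldots,u_n$, and the word on the work stack must be produced by a complete run of $\transducer_i$ reading $\eta(\term_i)$, whence $(u_i,\eta(\term_i)) \in \relof{\transducer_i}$.

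The main obstacle is the duplication/reordering phase. Because an OMPA may pop only from the first non-empty stack, I cannot access the inputs in the order demanded by $\term_i$, nor re-use a value several times, directly. The design must therefore stage the copies through the reversal buffers and the ordered feed-stacks so that (a) every pop is from the first non-empty stack, (b) each feed-stack and output slot ends with its word in the correct orientation, which forces the two-pass reverse-then-reverse scheme and hence the block of $3k$ input-related stacks, and (c) the feed-stacks are laid out in the exact order $T_1 < \cdots < T_m$ in which the transducer must read them. Making this scheduling line up with the last $k+1$ stacks holding $u_i,u_{i+1},\ldots,u_n$ in the final configuration is the delicate part; the remaining transducer simulation and bookkeeping are routine.
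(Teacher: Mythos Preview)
Your construction is correct and essentially identical to the paper's. The paper uses the same stack layout (inputs on $1,\ldots,k$, reversal buffers on $k+1,\ldots,2k$, feed stacks on $2k+1,\ldots,2k+m$, work stack at $2k+m+1$, output slots on the last $k+1$ stacks) and the same sequence of operations; it merely splits your duplication phase into two separately named phases, for a total of four phases rather than three, and is less explicit about the ``only if'' direction than you are.
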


\begin{proof}
In the proof, we omit the input $\epsilon$ from the OMPA configurations, 
and only write the state, and stack contents. 
Let us assume that the string term $\term_i$ is of the form $y_{1} y_{2} \cdots y_{|t_i|}$.  
Observe that $y_j \in \{x_{i+1},\ldots,x_n\}$. 
The OMPA $\Aa_i$  proceeds in phases starting from the  configuration  $(q_i^{init},u_{i+1}\bot,\ldots,u_{n}\bot, \bot, \ldots,\bot)$. 
To begin, stacks 1 to $n-i$ contain $u_{i+1}, \dots, u_n$, 
the evaluations of $x_{i+1}, \dots, x_n$, and all other stacks are empty. The computation proceeds in 4 phases. 
The stacks indexed $1, \dots, n-i$ and $n-i+1, \dots, 2n-2i$ 
will be used in the first phase below. The second phase uses stacks indexed $n-i+1, \dots, 2n-2i$ 
and $2n-2i+1, \dots, 2n-2i+|t_i|$ along with the last $n-i$ stacks indexed $2n-2i+|t_i|+3$ to $3n-3i+|t_i|+2$.  
In the third phase, stacks indexed $2n-2i+1, \dots, 2n-2i+|t_i|, 2n-2i+|t_i|+1$ are used. 
In the last phase, stacks indexed 
$2n-2i+|t_i|+1$ and  $2n-2i+|t_i|+2$ are used. At the end of the 4 phases, stacks 
indexed $2n-2i+|t_i|+2,\dots, 3n-3i+|t_i|+2$ hold the evaluations of $x_i, x_{i+1}, \dots, x_n$, and all 
other stacks are empty. 

\vspace{-0.3cm}

\begin{center}
\includegraphics[scale=0.20]{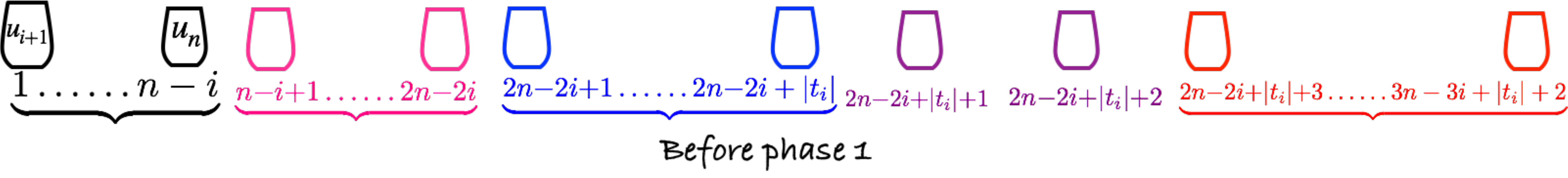}
\end{center}

\vspace{-0.4cm}

\noindent{\bf Phase 1}. The OMPA $\Aa_i$  pops the symbols, one by one, from the first  $(n-i)$-stacks $1, \dots, n-i$ 
and pushes them into the stacks from index $(n-i+1)$ to $(2n-2i)$, respectively. At the end of this phase, the new configuration of the OMPA $\Aa_i$ is
   $(q_i^{init},\bot,\ldots,\bot,u^R_{i+1}\bot,\ldots,u^R_{n}\bot, \bot, \ldots,\bot)$. 
    That is, stacks $n-i+1, \dots, 2n-2i$ have $u_{i+1}^{R}, \dots, u_n^R$, while all other stacks are empty.  
    \vspace{-0.3cm}

   \begin{center}
\includegraphics[scale=0.20]{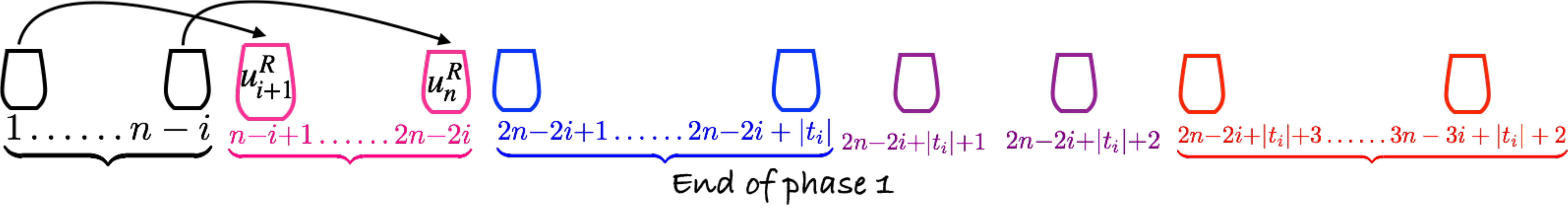}
   \end{center}

\vspace{-0.4cm}

\noindent{\bf Phase 2}. We do two things. (1) the contents of the $n-i$ stacks $n-i+1, \dots, 2n-2i$ are moved (in reverse) into 
the $n-i$ stacks $2n-2i+|t_i|+3, \dots, 3n-3i+|t_i|+2$. This results in the stacks $2n-2i+|t_i|+3, \dots, 3n-3i+|t_i|+2$ containing 
$u_{i+1}, \dots, u_n$. (2) If $y_j$ appearing in $t_i$ is the variable $x_{i+\ell}$, then 
the content of stack $n-i+\ell$ (with $n-i+1 \leq n-i+\ell \leq 2n-2i$) is also 
moved (in reverse) to stack $2n-2i+j$, $1 \leq j \leq |t_i|$.  This results in stack 
$2n-2i+j$ containing $u_{i+\ell}$.
Thus, at the end of (1), (2), the stacks   $n-i+1, \dots, 2n-2i$ are empty, 
 the stack $2n-2i+|t_i|+\ell+2$ contains $u_{i+\ell}$, the evaluation of $x_{i+\ell}$ 
for $\ell \geq 1$, while stack $2n-2i+k$ for $1 \leq k \leq  |t_i|$ contains 
$u_{i+m}$ if $y_k=x_{i+m}$.  The two stacks  $2n-2i+|t_i|+1$ and $2n-2i+|t_i|+2$ are empty at the end of this phase.  
Stack contents of $2n-2i+k$, $1 {\leq} k {\leq} |t_i|$ are referred to as $v_k$ in the figure.

\vspace{-0.3cm}

\begin{center}
\includegraphics[scale=0.20]{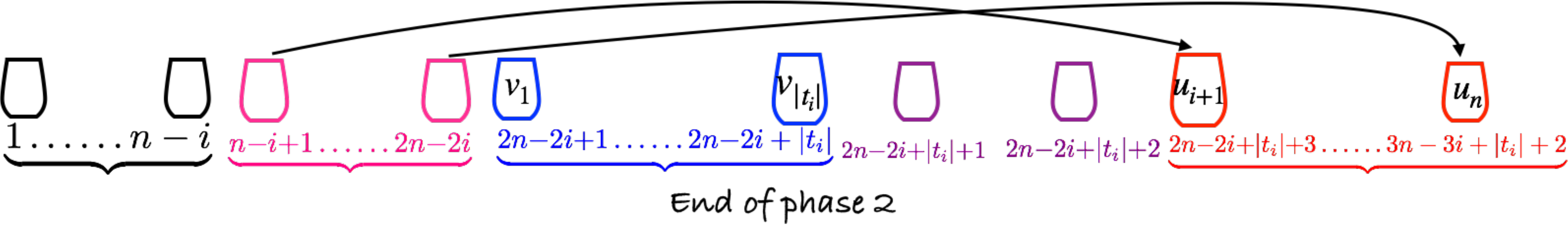}
	
\end{center}
\vspace{-0.4cm}

\noindent{\bf Phase 3}. The OMPA $\Aa_i$  mimics the transducer $\transducer_i$. The current state of $\Aa_i$ is  the same as the current state of the simulated transducer. Each transition of $\transducer_i$ of the form $(q,(a,b),q')$ is simulated by $(1)$ moving the state of $\Aa_i$ from $q$ to $q'$, $(2)$ pushing the symbol $a$ into the stack $(2n-2i+|\term_i|+1)$, and $(3)$ popping the symbol $b$ from the first non-empty stack having  an index between $2n-2i+1$ to  $2n-2i+|\term_i|$. Recall that the stacks $2n-2i+1$ to  $2n-2i+|\term_i|$ contain 
the evaluations of $y_1, \dots, y_{|t_i|}$, for  
$(\eta(x_i), \eta(y_1).\eta(y_2).\dots \eta(y_{|t_i|}) ) {\in} \relof{\transducer_i}$. 
 When the current state of $\Aa_i$ is in a final state of $\transducer_i$ and the stacks from index $2n-2i+1$ to  $2n-2i+|\term_i|$ are empty, 
then we know that $\eta(y_1)\dots \eta(y_{|t_i|})$ is indeed related by $\transducer_i$ on $\eta(x_i)$.  Then  $\Aa_i$ changes its state to $q_{i}^{final}$. 

\vspace{-0.25cm}

\begin{center}
\includegraphics[scale=0.20]{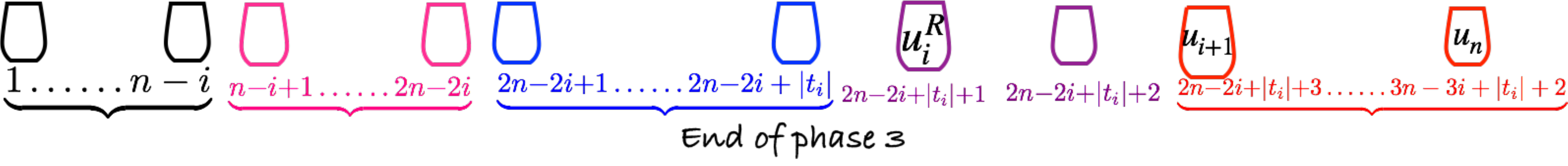}
\end{center}
\vspace{-0.4cm}

\noindent{\bf The Last Phase}. At the end of the third phase, the current configuration of $\Aa_i$ is $(q_i^{final},\bot,\ldots,\bot,u_i^R,\bot, u_{i+1}\bot, \ldots,u_n\bot)$ such that $ (u_i,v_{1}\cdots v_{|t_i|})  \in  \relof{\transducer_i}$: 
that is, the last $n-i$ stacks $2n-2i+|t_i|+3, \dots, 3n-3i+|t_i|+2$ contain 
$u_{i+1}, \dots, u_n$, and stack $2n-2i+|t_i|+1$ contains the reverse of $u_i$. 
Then, $\Aa_i$ pops, one-by-one, the symbols  from the ${(2n-2i+|\term_i|+1)}$-th stack and pushes them, in the reverse order, into the stack $(2n-2i+|\term_i|+2)$. 
Thus, the new configuration of $\Aa_i$ is of the form $(q_i^{final},\bot,\ldots,\bot,u_i\bot, u_{i+1}\bot, \ldots,u_n\bot)$ such that 
$ (u_i,v_{1}\cdots v_{|t_i|})  \in  \relof{\transducer_i}$ where $v_j=u_{\ell}$ if $y_j=x_{\ell}$ for $1 \leq j \leq |t_i|$.
\vspace{-0.3cm}

\begin{center}
\includegraphics[scale=0.20]{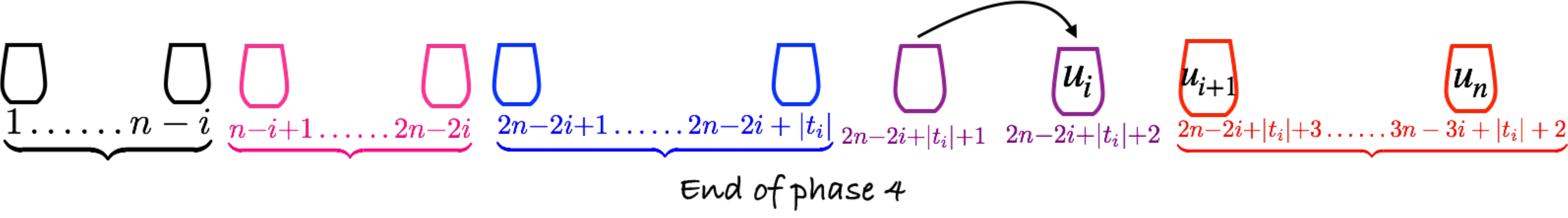}
\end{center}
\vspace{-1.2cm}
\end{proof}

\smallskip

\noindent
{\bf From SL formula  to OMPAs.}
In the following, we first   construct an OMPA that accepts  the encoding of the set of solutions of an SL formula.

\begin{lemma}
Given an SL formula $\formula$, with $x_1, \ldots,x_n$ as its set of variables, it is possible to construct an OMPA $\Aa$ such that $\Lang{\Aa}=\encode{(\langof{\formula})}$.
\label{lem:sl-ompa}
\end{lemma}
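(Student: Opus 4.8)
The plan is to build $\Aa$ by composing the gadgets $\Aa_k, \Aa_{k-1}, \ldots, \Aa_1$ supplied by Lemma~\ref{lem-OMPA-rel}, wrapped between an initialisation phase that produces the free variables and a final phase that reads and verifies the input word. Write $\formula$ in SL form as $\formula' \wedge \bigwedge_{i=1}^{k} \constr_i$, where $\constr_i = (x_i, \term_i) \in \relof{\transducer_i}$ and a variable $x_j$ occurs in $\term_i$ only if $j > i$. Thus $x_1, \ldots, x_k$ are the variables computed by the transducers and $x_{k+1}, \ldots, x_n$ are free, constrained only by the membership constraints collected in $\formula'$. The input alphabet of $\Aa$ is $\Sigma \cup \{\#\}$, and its number of stacks $N$ is chosen to be $\max_{1 \le i \le k}(3n + |\term_i| + 2 - 3i)$ plus a constant number of auxiliary stacks reserved for the transfers below; each gadget $\Aa_i$, which uses $3n + |\term_i| + 2 - 3i$ stacks, is then run on an initial window of these $N$ stacks.

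First I would add an initialisation phase that, reading no input (via $\epsilon$-transitions), nondeterministically guesses words $u_{k+1}, \ldots, u_n \in \Sigma^*$ and pushes them onto stacks $1, \ldots, n-k$, guessing each word in reverse during the push so that its first letter ends up on top; this yields exactly the start configuration $(q_k^{init}, \epsilon, u_{k+1}\bot, \ldots, u_n\bot, \bot, \ldots, \bot)$ required by $\Aa_k$. I would then chain the gadgets: after $\Aa_i$ halts in $(q_i^{final}, \epsilon, \bot, \ldots, \bot, u_i\bot, \ldots, u_n\bot)$, its $n-i+1$ topmost non-empty stacks hold $u_i, \ldots, u_n$, whereas $\Aa_{i-1}$ expects $u_i, \ldots, u_n$ on its \emph{first} $n-i+1$ stacks. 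A short transfer phase moves these contents from the high-index stacks down to the low-index ones; since a single pop-push reverses a stack, I would route each word through one auxiliary stack so that the two reversals cancel and the original order is preserved. Composing $\Aa_k, \ldots, \Aa_1$ in this way leaves $u_1, \ldots, u_n$ on $n$ designated stacks, and by Lemma~\ref{lem-OMPA-rel} such a computation exists iff $(\eta(x_i), \eta(\term_i)) \in \relof{\transducer_i}$ for every $i$, where $\eta(x_j) = u_j$.

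Finally I would append a verification phase that reads the input $w_1 \# w_2 \# \cdots \# w_n$: for each $i$ in increasing order it pops the stack holding $u_i$ while matching the popped letters against the letters of $w_i$ (the pop order is correct because letters were stored first-on-top), running in parallel the membership automaton $\automaton_i$ for $x_i$ — taking $\automaton_i$ to accept $\Sigma^*$ when $x_i$ has no membership constraint — and it consumes a single $\#$ between consecutive blocks. The ordering discipline of OMPAs cooperates here: once $u_i$ is exhausted its stack holds only $\bot$, so the stack carrying $u_{i+1}$ becomes the first non-empty one; after $w_n$ is matched every stack holds only $\bot$, so $\Aa$ may enter its final state. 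Hence an accepting run exists for $w_1 \# \cdots \# w_n$ exactly when there are values $u_i = w_i$ satisfying every relational and membership constraint, i.e. exactly when $(w_1, \ldots, w_n) \in \langof{\formula}$, giving $\Lang{\Aa} = \encode(\langof{\formula})$.

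The main obstacle I anticipate is the plumbing between gadgets rather than any single gadget: I must respect the OMPA rule that pops are allowed only from the first non-empty stack throughout the initialisation, the inter-gadget transfers, and the verification, while keeping the total number of stacks bounded and getting every reversal right. Packing the varying stack requirements of $\Aa_k, \ldots, \Aa_1$ into one fixed set of $N$ stacks, and choosing the windows and auxiliary stacks so that each transfer both repositions the contents and preserves word order, is the delicate bookkeeping on which the argument turns; the correctness of the transducer simulation itself is already delivered by Lemma~\ref{lem-OMPA-rel}.
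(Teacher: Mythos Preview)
Your overall plan—guess the free variables, chain the gadgets $\Aa_k,\ldots,\Aa_1$, then verify the membership constraints while emitting the encoded word—matches the paper.  The gap is the inter-gadget transfer.  You propose to run every $\Aa_i$ on the \emph{same} initial window of stacks and, between gadgets, move the words $u_i,\ldots,u_n$ from the high-index output stacks back down to stacks $1,\ldots,n-i+1$.  In an OMPA this cannot be done: the very first time you pop a symbol from a high-index stack and push it onto a low-index stack, that low-index stack becomes the first non-empty one, and from then on you may never pop from the higher stack again.  Routing through an auxiliary stack does not help, because the second leg of the route (auxiliary $\to$ low-index target) hits the same obstruction after one symbol.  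So the ``short transfer phase'' you rely on is impossible, and the chaining breaks after the first gadget.

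The paper sidesteps this entirely by never moving data leftward.  It allocates a total of $n-k+\sum_{i=1}^{k}(2n-2i+2+|\term_i|)$ stacks and lays the gadgets out telescopically: the last $n-i+1$ stacks of $\Aa_i$ (its output block) are \emph{physically the same} as the first $n-i+1$ stacks of $\Aa_{i-1}$ (its input block), so no transfer is needed—$\Aa_{i-1}$ simply starts where $\Aa_i$ left off, using fresh higher-index stacks for its own phases.  All pops then respect the OMPA discipline automatically, and at the end the last $n$ stacks hold $u_1,\ldots,u_n$ in the right order for your verification phase (which is fine as you describe it).  Your argument is easily repaired by adopting this sliding-window layout instead of the reset-and-transfer one; the stack count is no longer $\max_i(\cdots)$ but the sum above.
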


\begin{proof}
Let us assume that $\formula$  is of the form $\bigwedge\limits_{i=1}^{n} x_i \in \langof{\automaton_i} \wedge \bigwedge\limits_{i=1}^{k} \constr_i$ 
where   $\constr_1,\ldots,\constr_k$ are  relational constraints such that $\constr_i$ is of the form $(x_i,\term_i)  \in  \relof{\transducer_i}$. The OMPA $\Aa$ will have $(n-k+ \sum_{i=1}^k (2n-2i+2 + |\term_i|))$ stacks.  $\Aa$   first guesses an evaluation  for the variables $x_{k+1}, \ldots, x_{n}$ in the first $n-k$ stacks and then starts simulating the OMPA $\Aa_k$ (see Lemma \ref{lem-OMPA-rel} for the definition of $\Aa_k$) in order to compute a possible evaluation  of the variable $x_k$ such that the relational constraint $(x_k,\term_k)  \in  \relof{\transducer_k}$ holds for that evaluation. 
After this step, the stacks from index $(2n-2k+|\term_k|+2)$ to $(3n-3k+|\term_k|+2)$ contain the evaluation of the string variables $x_k,\ldots,x_n$, and all remaining stacks are empty. 
 Now $\Aa $ can start the simulation of the OMPA  $\Aa_{k-1}$ (Lemma \ref{lem-OMPA-rel}) in order to compute a possible evaluation  of the variable $x_{k-1}$ such that  $(x_k,\term_k)  \in  \relof{\transducer_k} \wedge (x_{k-1},\term_{k-1})  \in  \relof{\transducer_{k-1}}$ holds for that evaluation. At the start of the simulation of $\Aa_{k-1}$ by $\Aa$, the 
$n-k+1$ stacks (indexed $(2n-2k+|\term_k|+2)$ to $(3n-3k+|\term_k|+2)$) contain the evaluations of $x_{k}, \dots, x_n$, and the 
  next $2n-2(k-1)+|\term_{k-1}|+2$ stacks are used to simulate phases 2-4 of  
$\Aa_{k-1}$. At the end of this, the $n-k+2$ stacks backwards from 
 the stack indexed $(3n-3k+|\term_k|+2)+2n-2(k-1)+|\term_{k-1}|+2$ contain the evaluations of $x_{k-1}, \dots, x_n$. Now, 
  $\Aa$ simulates $\Aa_{k-2}, \ldots,\Aa_{n}$ in the same way. 
 At the end of this simulation phase, the last $n$-stacks of $\Aa$ contain an evaluation of the string variables $x_1,\ldots,x_n$ that satisfies $\bigwedge\limits_{i=1}^{k} \constr_i$. Let us assume that the current configuration of $\Aa$ at the end of this  is of the form $(q^{final},\bot,\ldots,\bot,u_1\bot, u_{2}\bot, \ldots,u_n\bot)$. Then, $\Aa$ starts popping, one-by-one, from the $n$-th stack from the last and outputs the read stack symbol $\in \alphabet$ while ensuring that the evaluation $u_1$ of  $x_1$ belongs to $\langof{\automaton_1}$.  
When the $n$-th stack from the last is empty,   $\Aa$ outputs the special symbol $\#$. 
Then,   $\Aa$  does the same  for the $i$-th stack from last, with $i \in [1,n-1]$, which contains the  evaluation of  $x_{i+1}$. 
If $\Aa$ succeeds to empty all  stacks, then this means that the evaluation $\eta$ which associates to the variable $x_i$, the word $u_i$ for all $i \in [1,n]$ satisfies   $\bigwedge\limits_{i=1}^{n} x_i \in \langof{\automaton_i}$. 
Hence,  $u_1\#u_2 \#\cdots \# u_n$ is accepted by $\Aa$ iff $\eta \models \formula$.
\end{proof}

The following lemma shows that the $\ptl$-separability problem for  SL formulas  can be reduced to the $\ptl$-separability problem for OMPLs.  

\begin{lemma}
\label{SL2OMPA}
Let $\formula_1$ and $\formula_2$ be  two SL formulae with $x_1, \ldots,x_n$ as their set of variables. Let   $\Aa_1$ and $\Aa_2$ be two OMPAs such that $\Lang{\Aa_1}{=}\encode{(\langof{\formula_1})}$ and  $\Lang{\Aa_2}{=}\encode{(\langof{\formula_2})}$.  $\formula_1,$  $\formula_2$ are $n$-$\ptl$ separable iff  $\Aa_1,\Aa_2$ are $\ptl$-separable.
\end{lemma}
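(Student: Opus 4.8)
The plan is to set up a tight correspondence between $n$-$\ptl$ languages of tuples and $\ptl$ languages of words, mediated by the encoding $\encode$, and then transport separators across it in both directions. First I would record two structural facts about $\encode$: it is injective, and its image is exactly $\mathcal{I} = \{u \in (\Sigma\cup\{\#\})^* \mid u \text{ contains exactly } n-1 \text{ occurrences of } \#\}$; moreover $\mathcal{I}$ is itself a $\ptl$, since ``at least $m$ occurrences of $\#$'' is the piece language $(\Sigma\cup\{\#\})^*\#\cdots\#(\Sigma\cup\{\#\})^*$, and $\mathcal{I}$ is the Boolean combination of ``at least $n-1$'' and the negation of ``at least $n$''.

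The core is a correspondence lemma: if $S$ is an $n$-$\ptl$ then $\encode(S)$ is a $\ptl$, and if $T$ is a $\ptl$ then $\encode^{-1}(T)$ is an $n$-$\ptl$. For the first direction I would argue that every generating $n$-$\ptl$ piece $(\Sigma^*)^n\mathbf{v}_1(\Sigma^*)^n\cdots\mathbf{v}_k(\Sigma^*)^n$ equals the Cartesian product $\prod_{j=1}^n L_j$, where $L_j$ is the single-word piece for $p_j$, the projection of the column pattern $\mathbf{v}_1\cdots\mathbf{v}_k$ onto component $j$; the point is that the synchronization imposed by shared columns becomes vacuous after projecting to independent words, because one may always $\epsilon$-pad each component to realign the chosen occurrences. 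Then, restricting to encoded words, I would show that $(w_1,\dots,w_n)\in \text{piece}$ iff $p_1\#p_2\#\cdots\#p_n \subword \encode(w_1,\dots,w_n)$: this uses that both sides carry exactly $n-1$ hashes, forcing the pattern hashes to align position-by-position with the target hashes, so that $p_j\subword w_j$ for every $j$. Hence $\encode(\text{piece}) = P\cap\mathcal{I}$, where $P$ is the word piece for $p_1\#\cdots\#p_n$, a $\ptl$. Since $\encode$ is a bijection onto $\mathcal{I}$, it commutes with Boolean operations relativised to $\mathcal{I}$, so $\encode(S)$ is a finite Boolean combination of languages $P\cap\mathcal{I}$, i.e.\ a $\ptl$.

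For the converse I would compute $\encode^{-1}(P)$ for a generating word piece $P$ with pattern $c = d_0\#d_1\#\cdots\#d_h$, the $d_i\in\Sigma^*$ being the maximal $\#$-free factors. If $h>n-1$ the preimage is empty, hence trivially $n$-$\ptl$; otherwise an embedding of $c$ must send the $h$ pattern hashes to some increasing choice $1\le j_1<\cdots<j_h\le n-1$ of target hashes, and (with the conventions $j_0=0$, $j_{h+1}=n$) each $\#$-free factor $d_i$ must embed into the block $w_{j_i+1}\cdots w_{j_{i+1}}$. The only real subtlety is that such a factor may straddle several components; I handle this through the identity $\{d\subword w_a\cdots w_b\}=\bigcup \prod_{c=a}^b\{f_c\subword w_c\}$, the union ranging over the finitely many factorizations $d=f_a\cdots f_b$, each factor set being a single-component piece. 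Taking the finite union over the choices $j_1<\cdots<j_h$, the intersection over the factors $d_i$, and the factorization unions, and invoking the easy converse of the product characterization — every product of single-component pieces is a generating $n$-$\ptl$ piece — yields that $\encode^{-1}(P)$ is $n$-$\ptl$. Since preimage commutes with Boolean operations, $\encode^{-1}(T)$ is $n$-$\ptl$ for every $\ptl$ $T$.

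Finally I would close the equivalence. Given an $n$-$\ptl$ separator $S$ of $\formula_1,\formula_2$, the set $\encode(S)$ is a $\ptl$ with $\encode(\langof{\formula_1})\subseteq \encode(S)$ and $\encode(S)\cap\encode(\langof{\formula_2})=\encode(S\cap\langof{\formula_2})=\emptyset$ by injectivity, hence a $\ptl$ separator of $\Aa_1,\Aa_2$. Conversely, given a $\ptl$ separator $T$, the set $\encode^{-1}(T)$ is an $n$-$\ptl$ that separates $\formula_1,\formula_2$, by a direct preimage computation using $\langof{\formula_i}\subseteq\mathcal{I}$-images. I expect the main obstacle to be the correspondence lemma, and within it the backward translation of a word piece whose $\#$-free factors straddle component boundaries; the forward direction and the product-equals-piece characterization are comparatively routine once the $\epsilon$-padding observation is in place.
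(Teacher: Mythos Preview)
Your proposal is correct and follows essentially the same route as the paper: both directions go through the observation that the encoding image $\mathcal{I}$ (words with exactly $n-1$ occurrences of $\#$) is itself $\ptl$, and both translate separators by matching $n$-$\ptl$ piece generators with products $L_1\times\cdots\times L_n$ of single-component pieces, which in turn correspond to word pieces with pattern $p_1\#\cdots\#p_n$ inside $\mathcal{I}$.

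The only noteworthy difference is presentational. The paper first intersects the word-level $\ptl$ separator with $\mathcal{I}$ and then asserts that the result can be rewritten as a Boolean combination of languages of the shape $L_{i1}\#\cdots\#L_{in}$ with each $L_{ij}$ a piece over $\Sigma$; from there the translation to an $n$-$\ptl$ is immediate. You instead compute $\encode^{-1}(P)$ directly for each generating word piece $P$, doing the explicit case analysis on the number and placement of $\#$'s in the pattern and the factorization of each $\#$-free block across component boundaries. Your treatment is more detailed precisely at the step the paper leaves implicit (the ``straddling'' case), so if anything your version is a cleaner justification of the same reduction; neither approach buys anything the other does not, and both yield the lemma.
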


\noindent As an immediate corollary of Theorem \ref{thm-ompl-ptl}, Lemma \ref{SL2OMPA}, we obtain our main result:

\begin{theorem}
The $n$-$\ptl$ separability problem of SL formulae is decidable.  
\end{theorem}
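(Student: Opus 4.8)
The plan is to obtain the theorem by composing the three results developed in this section, so at this level the work is purely a reduction chain. First I would normalise the two input SL formulas $\formula_1$ and $\formula_2$ so that they range over a common vocabulary. Let $x_1,\ldots,x_n$ be the union of their variable sets; for each formula I add a trivial membership constraint $x \in \langof{\automaton}$ with $\langof{\automaton}=\Sigma^*$ for every variable it does not already constrain. This leaves both solution languages living in the same space $(\Sigma^*)^n$ with matching coordinates, which is precisely the setting required by the separability problem and by Lemma \ref{SL2OMPA}.

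Next I would apply Lemma \ref{lem:sl-ompa} to each normalised formula to effectively construct OMPAs $\Aa_1$ and $\Aa_2$ with $\Lang{\Aa_1}=\encode(\langof{\formula_1})$ and $\Lang{\Aa_2}=\encode(\langof{\formula_2})$. Since the construction is algorithmic, this turns the $n$-$\ptl$ separability question for the pair $(\formula_1,\formula_2)$ into a concrete question about the two automata $\Aa_1,\Aa_2$.

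The key reduction is then Lemma \ref{SL2OMPA}, which states that $\formula_1$ and $\formula_2$ are $n$-$\ptl$ separable if and only if $\Aa_1$ and $\Aa_2$ are $\ptl$-separable. Combining this equivalence with Theorem \ref{thm-ompl-ptl}, which asserts that $\ptl$-separability of OMPAs is decidable, yields the decision procedure: build $\Aa_1,\Aa_2$, run the OMPA $\ptl$-separability algorithm, and return its answer. Decidability of the theorem follows immediately.

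Where the real content lies is not in this final composition but in the encoding bridge behind Lemma \ref{SL2OMPA}, which I would rely on as given. The subtle point is that $\encode$ flattens an $n$-tuple into a single word by interleaving components with the marker $\#\notin\Sigma$, and one must verify that an $n$-$\ptl$ separator of the tuple languages translates to a single-word $\ptl$ separator of the encoded languages, and conversely, respecting Boolean combinations of piece languages in both directions; the $\#$ separators are exactly what make the two notions of \emph{piece} correspond coordinate-wise. The remaining heavy lifting (that OMPLs form a full trio by Lemma \ref{lem:ompl-ft} and have semilinear Parikh image by Lemma \ref{lem:omplsemlin}, giving decidability of the diagonal problem and hence of $\ptl$-separability via Theorem \ref{thm-ompl-ptl}) is already discharged, so for this theorem I would simply invoke it. I expect no additional obstacle, since all the difficulty has been isolated into the earlier lemmas.
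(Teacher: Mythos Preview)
Your proposal is correct and matches the paper's own argument essentially verbatim: the paper derives the theorem as an immediate corollary of Lemma~\ref{SL2OMPA} together with Theorem~\ref{thm-ompl-ptl}, which is exactly the reduction chain you describe. Your added normalisation step to a common variable set is a reasonable technical detail the paper simply assumes in the hypothesis of Lemma~\ref{SL2OMPA}.
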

  
 \section{$\pptl$-Separability of String Constraints}
\label{sec:pptlSL}
In this section, we address the separability problem for string constraints by a sub-class  of $\ptl$, called positive piece-wise testable languages ($\pptl$). A language is in $\pptl$ iff it is defined as a finite positive Boolean combination (i.e., union and intersection but no   complementation) of piece-languages. Given a natural number $n \in \mathbb{N}$, this definition can  naturally be extended to $n$-tuples of words in the straightforward manner (as in the case of $\ptl$)  to obtain the class of  $n$-$\pptl$. 
In the following, we first provide a necessary and sufficient condition for the $n$-$\pptl$ separability problem of  any two languages.

\begin{theorem}\label{thm:sep-pptl}
	Two languages $I$ and $E$  
	are $n$-$\pptl$ separable  
	iff $I {\uparrow} \cap E = \emptyset$ iff $I \cap E \downward = \emptyset$.
	\end{theorem}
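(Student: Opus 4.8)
The plan is to reduce everything to a single structural fact: the class $n$-$\pptl$ is exactly the class of subsets of $(\Sigma^*)^n$ that are upward closed with respect to the component-wise subword order $\subword_n$. Once this is in hand, all three equivalences fall out by elementary reasoning about closures, with the canonical separator being $I\upward$ itself.

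\smallskip
\noindent\emph{Step 1: $n$-$\pptl$ coincides with the upward-closed sets.} First I would check that a single $n$-piece language $(\Sigma^*)^n {\bf v}_1 (\Sigma^*)^n \cdots (\Sigma^*)^n {\bf v}_k (\Sigma^*)^n$, with ${\bf v}_i=(a_{i,1},\dots,a_{i,n})$, is precisely the principal upward-closed set $\{(w_1,\dots,w_n)\mid p_j\subword w_j \text{ for all } j\}$, where $p_j$ is the word $a_{1,j}\cdots a_{k,j}$ after deleting its $\epsilon$-letters. The reason is that concatenation in the product monoid $(\Sigma^*)^n$ is component-wise and the filler blocks from $(\Sigma^*)^n$ can be chosen independently in each coordinate, so membership amounts exactly to the per-coordinate subword condition $p_j\subword w_j$; conversely any such principal up-set is recovered by $\epsilon$-padding. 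Since principal up-sets are upward closed and upward-closedness is preserved by finite unions and intersections, every $n$-$\pptl$ language is upward closed. For the converse, given an upward-closed $S$ I would invoke that $\subword_n$ is a wqo (Higman's lemma, together with stability of wqos under finite products), so its minor $\mini(S)$ is finite; then $S=\bigcup_{p\in\mini(S)} p\upward$ is a finite union of $n$-piece languages, hence in $n$-$\pptl$.

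\smallskip
\noindent\emph{Step 2: separability is equivalent to $I\upward\cap E=\emptyset$.} By Step 1, $I$ and $E$ are $n$-$\pptl$ separable iff some upward-closed $S$ satisfies $I\subseteq S$ and $S\cap E=\emptyset$. If $I\upward\cap E=\emptyset$, then $S:=I\upward$ works, being upward closed, containing $I$, and disjoint from $E$. Conversely, if such an upward-closed $S$ exists, then $I\subseteq S$ yields $I\upward\subseteq S\upward=S$ by monotonicity of the closure and upward-closedness of $S$, whence $I\upward\cap E\subseteq S\cap E=\emptyset$. Thus $I\upward$ is a canonical separator and the first equivalence holds.

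\smallskip
\noindent\emph{Step 3: $I\upward\cap E=\emptyset$ is equivalent to $I\cap E\downward=\emptyset$.} This is a purely set-theoretic remark: both sides are equivalent to the nonexistence of a pair $w\in I$, $e\in E$ with $w\subword_n e$. Indeed $I\upward\cap E\neq\emptyset$ says some $e\in E$ dominates some $w\in I$, and $I\cap E\downward\neq\emptyset$ says some $w\in I$ is dominated by some $e\in E$ --- the same statement.

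\smallskip
I expect the only genuinely nontrivial step to be the converse half of Step 1, namely that every upward-closed set is a finite union of $n$-piece languages; this hinges on the finite-basis property of the wqo $\subword_n$, which itself follows from Higman's lemma and the stability of wqos under finite products. Steps 2 and 3 are then routine closure manipulations.
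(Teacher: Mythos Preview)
Your proposal is correct and follows essentially the same approach as the paper: both use that every $\pptl$ is upward closed for one direction, and for the converse exhibit $I\upward$ itself as the separator, appealing to Higman's lemma (and its product stability) to get a finite minor and hence a finite union of piece languages. The only difference is packaging: you state the clean characterization ``$n$-$\pptl$ $=$ upward-closed sets'' up front in Step~1, whereas the paper uses its two halves implicitly where needed; Steps~2 and~3 match the paper's argument almost verbatim.
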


The rest of this section is structured as follows: 
First, we  show that the $\pptl$ separability is decidable for OMPLs; in the particular case of CFLs, this problem is \textsc{PSPACE}-complete. 
Then, we use  the encoding of SL formulas to OMPAs (as defined in Section~\ref{sec:ptlSL}), and  show that $n$-$\pptl$ separability of SL formulas reduces to the $\pptl$-separability of corresponding OMPLs. 
Finally, we consider the $\pptl$-separability problem for a subclass of SL formulas, called \emph{right sided} SL formulas.  
We show that the $\pptl$ separability problem for this subclass  is \pspaceh{} and is in \expspace. 

\subsection{$\pptl$ separability of SL formulas}
First, we show that the $\pptl$ separability for OMPLs is decidable.

\begin{theorem}\label{lem:ompa-pptl}
	$\pptl$ separability of OMPLs is decidable.
\end{theorem}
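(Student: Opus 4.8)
The goal is to decide $\pptl$ separability of two OMPLs $\Lang{\Aa_1}$ and $\Lang{\Aa_2}$. By Theorem~\ref{thm:sep-pptl}, the two languages are $\pptl$ separable iff $\Lang{\Aa_1}{\uparrow} \cap \Lang{\Aa_2} = \emptyset$, equivalently iff $\Lang{\Aa_1} \cap \Lang{\Aa_2}{\downward} = \emptyset$. So the plan is to reduce the separability question to an emptiness-of-intersection check, and the whole task becomes computing an upward (or downward) closure of an OMPL and intersecting it with another OMPL.

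The plan is to exploit the two structural facts already established for OMPLs. First, by Lemma~\ref{lem:omplsemlin}, every OMPL has a semilinear Parikh image, and by Higman's lemma the upward closure $L{\uparrow}$ of any language $L \subseteq \Sigma^*$ is a regular language (its complement is the downward-closed set of words avoiding the finitely many minimal forbidden patterns). Crucially, for languages with effectively computable semilinear Parikh image one can effectively construct a finite automaton recognising $L{\uparrow}$: the minor of $L{\uparrow}$ under the subword order $\preceq$ is a finite set of words, and it can be extracted from the semilinear description of the Parikh image together with the structure of $\Aa_1$. Thus from $\Aa_1$ I would first compute a finite-state automaton $\Bb$ with $\Lang{\Bb} = \Lang{\Aa_1}{\uparrow}$. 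Second, by Lemma~\ref{lem:ompl-ft} the class of OMPLs is a full trio, hence effectively closed under intersection with regular languages. So I would then form an OMPA $\Aa'$ with $\Lang{\Aa'} = \Lang{\Aa_2} \cap \Lang{\Bb} = \Lang{\Aa_2} \cap \Lang{\Aa_1}{\uparrow}$.

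The final step is to decide emptiness of $\Lang{\Aa'}$. The emptiness problem for OMPAs is decidable (this is the classical reachability/emptiness result for ordered multi-pushdown automata from~\cite{DBLP:journals/ijfcs/AtigBH17,DBLP:journals/ijfcs/BreveglieriCCC96}), so $\Lang{\Aa_1}{\uparrow} \cap \Lang{\Aa_2} = \emptyset$ can be checked, and by Theorem~\ref{thm:sep-pptl} this answers the $\pptl$-separability question. Symmetrically, one may instead build a regular automaton for $\Lang{\Aa_2}{\downward}$ (downward closures of OMPLs are also regular and effectively computable from the semilinear Parikh image), intersect with $\Lang{\Aa_1}$ using the full-trio closure, and test emptiness; either direction yields a decision procedure.

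The main obstacle I expect is the effectiveness of the closure computation, i.e.\ turning the \emph{existence} of a regular upward closure (guaranteed abstractly by Higman's lemma) into an \emph{effective} construction of the automaton $\Bb$. This is where Lemma~\ref{lem:omplsemlin} does the real work: semilinearity of the Parikh image gives a handle on which words appear as subwords, allowing the finite minor of $\Lang{\Aa_1}{\uparrow}$ to be computed, so that the upward closure is presented concretely rather than merely known to be regular. Once $\Bb$ is in hand, the remaining steps — full-trio intersection and OMPA emptiness — are off-the-shelf applications of Lemma~\ref{lem:ompl-ft} and the known decidability of OMPA emptiness, so I do not anticipate difficulty there beyond routine bookkeeping of the number of stacks.
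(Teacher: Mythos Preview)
Your overall strategy --- reduce via Theorem~\ref{thm:sep-pptl} to checking emptiness of $\Lang{\Aa_1}{\uparrow} \cap \Lang{\Aa_2}$, build a finite automaton for the upward closure, intersect using the full-trio property, and test OMPA emptiness --- is the paper's route as well. The gap is in how you propose to obtain the automaton for $\Lang{\Aa_1}{\uparrow}$.

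Semilinearity of the Parikh image (Lemma~\ref{lem:omplsemlin}) does \emph{not} let you compute the minor of the upward closure, and cannot: the Parikh image discards all ordering information, while the subword relation $\preceq$ is order-sensitive. Concretely, $\{ab\}$ and $\{ba\}$ have identical Parikh images but incomparable upward closures; no amount of semilinear data about letter multiplicities will distinguish them. Your phrase ``together with the structure of $\Aa_1$'' is where all the actual work hides, and you have not said what that work is. The same objection applies verbatim to your downward-closure alternative: $\Lang{\Aa_2}{\downarrow}$ is not determined by the Parikh image of $\Lang{\Aa_2}$.

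The paper computes $\mini(\Lang{\Aa_1}{\uparrow})$ by the Generalized Valk--Jantzen procedure instead: start with $M=\emptyset$; while $\Lang{\Aa_1} \cap \overline{M{\uparrow}} \neq \emptyset$, extract a witness word, add it to $M$, and prune non-minimal elements. The test is decidable because $\overline{M{\uparrow}}$ is regular, OMPLs are closed under intersection with regular languages, and OMPA emptiness is decidable; termination is by Higman's lemma. Note that Lemma~\ref{lem:omplsemlin} is not invoked at all here --- the ingredients are exactly regular-intersection closure and decidable emptiness, not semilinearity.
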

\begin{proof}
Consider $\Cc$ to be the class of OMPLs in Theorem~\ref{thm:sep-pptl}. 
Let $I$ and $E$ be two languages belonging to $\Cc$ as stated in Theorem~\ref{thm:sep-pptl}. Then,  the set  $\mini(I \upward)$ is effectively computable as an immediate consequence of the \emph{Generalized  Valk-Jantzen} construction~\cite{DBLP:journals/corr/AbdullaM13}. 
The main idea behind this construction is to start with an empty minor set $M$ (so to begin, $M \subseteq I$) and keep adding new words $w \in I$ to $M$ if $w$ is not already in $  M\upward$. Before adding a new word, we need to test that  $I \cap \overline{M \upward} \neq \emptyset$ (the complement of 
$M \upward$ intersects with $I$). This test is decidable since (i) OMPLs are closed under intersection with regular languages and 
(ii) the emptiness problem for OMPA is decidable. At each step, we  remove all the non-minimal words from $M$ (since $M$ is finite). The algorithm  terminates due to the Higman's Lemma  \cite{higman} (the minor of an upward closed set is finite). When the algorithm terminates, 
$I \subseteq M \upward$ and thus $I \upward \subseteq M \upward$.  
By construction, $M \subseteq I$ and $M \upward \subseteq I \upward$.  
Thus, $M \upward=I \upward$. Since $M$ is a minor set, we have $\mini(I \upward)=M$. Using (i) and (ii), we obtain the decidability of  checking the emptiness of $I \upward \cap E$, and thus $\pptl$ separability of OMPL is decidable.
\end{proof}

As mentioned in section \ref{sec:ptlSL}, the complexity of $\ptl$-separability for 1-OMPL is open; 
however, we show that the $\pptl$ separability  problem for 1-OMPL is \pspacec.

\begin{theorem}
	The $\pptl$-separability for CFLs is \pspacec. 
\label{sep:pda}
\end{theorem}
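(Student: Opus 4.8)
The plan is to establish the two directions of \pspacec{} separately. For membership in \pspace, I would appeal to Theorem~\ref{thm:sep-pptl}, which reduces $\pptl$-separability of $I$ and $E$ to checking $I{\uparrow} \cap E = \emptyset$. For CFLs, the upward closure $I{\uparrow}$ (with respect to the subword ordering) is always a regular language, and by a classical result its minor — equivalently, a finite automaton recognizing $I{\uparrow}$ — can be computed. The key quantitative fact I would use is that for a CFL given by a grammar or PDA of size $m$, the upward closure is recognized by an NFA whose size can be bounded (polynomially in $m$, or at worst so that an equivalent finite automaton can be guessed and verified on the fly in polynomial space). Then $I{\uparrow} \cap E$ is again a CFL, and emptiness of a CFL is decidable in \pspace{} when the automaton for $I{\uparrow}$ need not be materialized explicitly. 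The subtlety is to avoid the potential exponential blowup in constructing $I{\uparrow}$ explicitly; I would instead work with a polynomial-space representation, guessing the relevant states of the NFA for $I{\uparrow}$ nondeterministically and checking the intersection-emptiness with $E$ in \npspace{}, which equals \pspace{} by Savitch.

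For \pspace-hardness, the plan is to reduce from a known \pspace-complete problem on CFLs or PDAs. A natural candidate is the \emph{universality} or \emph{emptiness-of-intersection} style problem, but since we want hardness of $\pptl$-separability specifically, I would reduce from a problem whose instances encode the failure of the disjointness condition $I{\uparrow} \cap E = \emptyset$. Concretely, I would take a \pspace-hard problem such as the intersection-nonemptiness of a CFL with a regular language, or more precisely the problem of deciding whether a given PDA accepts a word, tuned so that the upward closure of one language meeting the other corresponds exactly to a ``yes'' instance. The encoding must ensure that separability holds iff the source instance is negative, so I would design $I$ and $E$ so that a separating $\pptl$ exists exactly when the reduction's underlying computation does \emph{not} reach an accepting configuration.

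I expect the main obstacle to be the \pspace-membership argument, specifically controlling the size of the representation of $I{\uparrow}$. Naively, computing the minor $\mini(I{\uparrow})$ via the Generalized Valk--Jantzen construction (as in the proof of Theorem~\ref{lem:ompa-pptl}) could produce exponentially many minimal words, so the explicit approach used for general OMPLs does not directly give a polynomial space bound. The hard part will be showing that, for the single-stack case, one can either (a) bound the length of minimal words in $I{\uparrow}$ polynomially, so the NFA for $I{\uparrow}$ has polynomially many states, or (b) sidestep explicit construction entirely by performing the intersection-emptiness check $I{\uparrow} \cap E \neq \emptyset$ as an on-the-fly nondeterministic search in polynomial space, guessing a witness word together with the accepting runs of the PDAs for $I$ and $E$ while only tracking the ``has a letter covering'' information needed to certify membership in $I{\uparrow}$. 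I would lean on approach (b), since it meshes cleanly with the standard \pspace{} algorithms for PDA/CFL reachability and avoids the minor-size blowup. Once membership and hardness are both in hand, \pspacec{} follows immediately.
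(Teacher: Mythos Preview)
Your membership argument has a real gap. Approach (b) asks you to guess a witness $w$ letter by letter while simultaneously running the PDA for $E$ on $w$ and the PDA for $I$ on a guessed subsequence $w'\preceq w$. But both PDAs have \emph{unbounded} stacks, and the shortest witness $w$ can be exponentially long (or worse): the stack contents you must maintain do not fit in polynomial space. Saying ``standard \pspace{} algorithms for PDA/CFL reachability'' does not help here---CFL emptiness is in \textsc{Ptime}, not merely \pspace, and that algorithm works on a \emph{single} PDA given explicitly; once you take the product with an (implicitly exponential) NFA for $I{\uparrow}$, you are doing emptiness on an exponential-size PDA, which is not obviously in \pspace. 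The paper avoids this by a step you are missing: it replaces the test $I{\uparrow}\cap E=\emptyset$ by the equivalent $I{\uparrow}\cap E{\downarrow}=\emptyset$, and then shows that \emph{both} closures admit PDAs whose stack height is bounded polynomially in the grammar size (for $I{\uparrow}$ via a height-$|N|$ derivation bound on minimal words, for $E{\downarrow}$ via Courcelle's regular-expression construction for the downward closure). With two polynomially bounded stacks, the on-the-fly guess-and-simulate argument goes through in \npspace. Your approach (a) is also off: minimal words of $I{\uparrow}$ can have length exponential in the grammar, not polynomial, so an explicit NFA is exponential; what is polynomial is the \emph{stack bound}, not the word length.

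Your hardness sketch is too vague to succeed, and the concrete candidates you name are not \pspace-hard: emptiness of a CFL intersected with a regular language and PDA membership are both in \textsc{Ptime}. The paper reduces from acceptance of a linear-bounded Turing machine. The essential trick---which your outline does not anticipate---is to force all words accepted by the two constructed PDAs $P_1,P_2$ to have the \emph{same fixed length} (via blank-padding of configurations and $\$$-padding to a fixed count of $k_{\max}$ blocks). Under that constraint, $\Lang{P_1}{\uparrow}\cap\Lang{P_2}\neq\emptyset$ collapses to $\Lang{P_1}\cap\Lang{P_2}\neq\emptyset$, and the latter encodes the LBTM computation in the classical Hopcroft--Ullman style (one PDA checks odd-to-even steps, the other even-to-odd). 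Without such a length-rigidity device, tying $\pptl$-separability back to a \pspace-hard problem is the part that needs a genuine idea.
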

For the decidability  of the  $n$-$\pptl$ separability of SL formulas,  we use  the encoding of SL formulas to OMPAs (as defined in section~\ref{sec:ptlSL}), and  show that the  $n$-$\pptl$ separability of SL formulas reduces to the $\pptl$ separability of their corresponding OMPLs. The  decidability  of the  $n$-$\pptl$ separability of SL formulas follows from  Theorem \ref{lem:ompa-pptl}.

\begin{lemma}
	\label{lem:sl2ompl-pptl}
	Given two SL formulas $\formula$ and $\formula'$, with $x_1, \ldots,x_n$ as their set of variables. Let   $\Aa$ and $\Aa'$ be two OMPAs such that $\Lang{\Aa}=\encode{(\langof{\formula})}$ and  $\Lang{\Aa'}=\encode{(\langof{\formula'})}$. Then, $\formula$ and $\formula'$ are separable by an $n$-$\pptl$ iff  $\Aa$ and $\Aa'$ are separable by a $\pptl$.
\end{lemma}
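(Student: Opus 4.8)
The plan is to reduce both sides to a single disjointness condition via Theorem~\ref{thm:sep-pptl} and then to transport that condition across the encoding $\encode$. By Theorem~\ref{thm:sep-pptl}, $\formula$ and $\formula'$ are $n$-$\pptl$ separable iff $\langof{\formula}\upward \cap \langof{\formula'} = \emptyset$, where the closure is taken under the componentwise subword order $\subword_n$; likewise $\Aa$ and $\Aa'$ are $\pptl$ separable iff $\Lang{\Aa}\upward \cap \Lang{\Aa'} = \emptyset$, where the closure is taken under the ordinary subword order $\subword$ on $(\Sigma \cup \{\#\})^*$. Hence it suffices to prove the single equivalence
\[
\langof{\formula}\upward \cap \langof{\formula'} = \emptyset \quad\Longleftrightarrow\quad \Lang{\Aa}\upward \cap \Lang{\Aa'} = \emptyset .
\]

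First I would establish the order-compatibility of $\encode$: for all $\mathbf{u},\mathbf{v} \in (\Sigma^*)^n$ we have $\mathbf{u} \subword_n \mathbf{v}$ iff $\encode(\mathbf{u}) \subword \encode(\mathbf{v})$. The forward direction is immediate, since embedding each $u_i$ into $v_i$ and matching the $n-1$ separators $\#$ embeds $\encode(\mathbf{u}) = u_1 \# \cdots \# u_n$ into $\encode(\mathbf{v}) = v_1 \# \cdots \# v_n$. The reverse direction is the crux: because $\# \notin \Sigma$, both encodings carry exactly $n-1$ occurrences of $\#$, and any subword embedding is order- and symbol-preserving, so it must map the $j$-th $\#$ of $\encode(\mathbf{u})$ to the $j$-th $\#$ of $\encode(\mathbf{v})$. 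Consequently the block $u_i$, lying strictly between consecutive separators, is forced into the corresponding block $v_i$, giving $u_i \subword v_i$ for each $i$, i.e. $\mathbf{u} \subword_n \mathbf{v}$.

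Next I would work inside the set $W \subseteq (\Sigma \cup \{\#\})^*$ of words with exactly $n-1$ occurrences of $\#$; each such word is uniquely $w_1 \# \cdots \# w_n$ with $w_i \in \Sigma^*$, so $\encode$ is a bijection from $(\Sigma^*)^n$ onto $W$. Using the two directions above I would show $\Lang{\Aa}\upward \cap W = \encode(\langof{\formula}\upward)$: the inclusion $\supseteq$ uses the forward direction, and $\subseteq$ uses the reverse direction (any $z \in W$ above some $\encode(u) \in \Lang{\Aa}$ equals $\encode(v)$ with $u \subword_n v$, so $v \in \langof{\formula}\upward$). The decisive point is that $\Lang{\Aa'} = \encode(\langof{\formula'}) \subseteq W$, so intersecting with $\Lang{\Aa'}$ automatically discards every padded word carrying extra separators:
\[
\Lang{\Aa}\upward \cap \Lang{\Aa'} = (\Lang{\Aa}\upward \cap W) \cap \Lang{\Aa'} = \encode(\langof{\formula}\upward) \cap \encode(\langof{\formula'}).
\]
Injectivity of $\encode$ on $W$ then rewrites the last term as $\encode\bigl(\langof{\formula}\upward \cap \langof{\formula'}\bigr)$, which is empty iff $\langof{\formula}\upward \cap \langof{\formula'} = \emptyset$. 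This gives the boxed equivalence, and the lemma follows through Theorem~\ref{thm:sep-pptl}.

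I expect the main obstacle to be the reverse direction of the order-compatibility property, namely ensuring that the ordinary subword order on encodings does not manufacture spurious comparisons by reshuffling the separators $\#$. The whole argument hinges on the rigidity granted by the fact that $\Lang{\Aa}$ and $\Lang{\Aa'}$ consist solely of well-formed encodings with exactly $n-1$ separators: this pins the $\#$'s in any embedding, so the (single-word) upward closure of $\Lang{\Aa}$ meets $\Lang{\Aa'}$ precisely when the (componentwise) upward closure of $\langof{\formula}$ meets $\langof{\formula'}$.
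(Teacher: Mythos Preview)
Your proof is correct but takes a genuinely different route from the paper's.

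The paper argues syntactically with the separators themselves: starting from a $\pptl$ separator $L$ over $\Sigma \cup \{\#\}$, it intersects $L$ with the regular language $R$ of words carrying exactly $n-1$ occurrences of $\#$, and shows by induction on the positive Boolean structure of $L$ that $L \cap R$ decomposes as a positive Boolean combination of languages of the shape $L_1 \# L_2 \# \cdots \# L_n$ with each $L_i$ a piece language over $\Sigma$. Reading off the $n$ blocks then manufactures an $n$-$\pptl$ separator for $\langof{\formula}$ and $\langof{\formula'}$; the converse direction is handled symmetrically.

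Your approach is semantic: you invoke Theorem~\ref{thm:sep-pptl} on both sides to reduce each separability question to an upward-closure disjointness condition, and then show the two conditions coincide because $\encode$ is an order-isomorphism from $((\Sigma^*)^n,\subword_n)$ onto its image in $((\Sigma\cup\{\#\})^*,\subword)$, namely the words with exactly $n-1$ separators. This is shorter, avoids any induction on Boolean structure, and isolates the single combinatorial fact driving the reduction---the equal count of $\#$'s on both sides of an embedding pins the separators pointwise and hence forces a blockwise embedding. The paper's syntactic route, on the other hand, is more explicit: it converts an \emph{arbitrary} given separator from one format into the other, rather than producing the canonical separator $I{\uparrow}$ furnished by Theorem~\ref{thm:sep-pptl}.
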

As an immediate corollary of Lemma \ref{lem:ompa-pptl} and \ref{lem:sl2ompl-pptl}, we obtain the following theorem:
\begin{theorem}
\label{SL-decidability-thm}
	The $n$-$\pptl$ separability problem of SL formulae is decidable. 
\end{theorem}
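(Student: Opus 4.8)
The plan is to derive Theorem~\ref{SL-decidability-thm} as an immediate consequence of the two results stated just before it: Theorem~\ref{lem:ompa-pptl}, which gives decidability of $\pptl$-separability for OMPLs, and Lemma~\ref{lem:sl2ompl-pptl}, which provides the reduction from $n$-$\pptl$-separability of SL formulas to $\pptl$-separability of OMPLs. Concretely, given two SL formulas $\formula$ and $\formula'$ over the shared variables $x_1, \ldots, x_n$, I would first invoke Lemma~\ref{lem:sl-ompa} to effectively construct two OMPAs $\Aa$ and $\Aa'$ with $\Lang{\Aa} = \encode{(\langof{\formula})}$ and $\Lang{\Aa'} = \encode{(\langof{\formula'})}$. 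The entire argument is thus a matter of chaining the reduction with the decision procedure, so there is no genuinely new content to prove here beyond assembling the pieces.

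Having constructed $\Aa$ and $\Aa'$, the next step is to apply Lemma~\ref{lem:sl2ompl-pptl}, which asserts that $\formula$ and $\formula'$ are $n$-$\pptl$ separable if and only if $\Aa$ and $\Aa'$ are $\pptl$ separable. This equivalence is the crux of the reduction: it lets us transport the decision question from the world of $n$-tuples of words (where the separator is an $n$-$\pptl$) into the world of flat words over $\Sigma \cup \{\#\}$ (where the separator is an ordinary $\pptl$), via the $\encode$ function that interleaves the components with the delimiter $\#$. Since $\Aa$ and $\Aa'$ belong to the class of OMPAs, I would then invoke Theorem~\ref{lem:ompa-pptl} to decide whether they are $\pptl$ separable. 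Composing the effectivity of the construction of $\Aa, \Aa'$, the if-and-only-if of Lemma~\ref{lem:sl2ompl-pptl}, and the decision procedure of Theorem~\ref{lem:ompa-pptl} yields a decision procedure for the $n$-$\pptl$ separability of $\formula$ and $\formula'$, establishing the theorem.

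I do not anticipate any real obstacle in this proof, precisely because the hard work has already been isolated into the two cited statements. The genuine difficulties live elsewhere: the decidability in Theorem~\ref{lem:ompa-pptl} rests on the effective computability of $\mini(I \upward)$ via the Generalized Valk--Jantzen construction together with closure of OMPLs under intersection with regular languages and decidability of OMPA emptiness, while the reduction in Lemma~\ref{lem:sl2ompl-pptl} relies on the delicate interplay between the componentwise subword ordering on $n$-tuples and the subword ordering on their $\encode$-images, controlled through the characterization in Theorem~\ref{thm:sep-pptl}. The proof of Theorem~\ref{SL-decidability-thm} itself is therefore a one-line corollary, and the only care needed is to state the reduction in the correct direction and to note that all constructions involved are effective, so that the final algorithm is genuinely a decision procedure rather than merely a semi-decision procedure.
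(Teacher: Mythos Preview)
Your proposal is correct and matches the paper's own argument exactly: the paper states Theorem~\ref{SL-decidability-thm} as an immediate corollary of Theorem~\ref{lem:ompa-pptl} and Lemma~\ref{lem:sl2ompl-pptl}, precisely the two ingredients you chain together (with Lemma~\ref{lem:sl-ompa} supplying the effective construction of the OMPAs). There is nothing to add.
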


\subsection{$\pptl$ separability of right-sided SL formula}\label{subsec:rssl}
Unfortunately, the proof of Theorem  \ref{SL-decidability-thm} does not allow us to extract any complexity result. Therefore, we consider in this subsection a useful fragment of SL formulas, called {\em right-sided} SL formulas. Roughly speaking, an SL formula $\formula$ is {\em right-sided} iff any variable appearing on the right-side of a relational constraint    can not appear on the left-side of any   relational constraint. Let us formalize the notion of right-sided SL formulas.
Let us assume an SL formula $\formula$ of the form $\bigwedge\limits_{i=1}^{n} x_i \in \langof{\automaton_i} \wedge \bigwedge\limits_{i=1}^{k} (x_i,\term_i)  \in  \relof{\transducer_i}$ with   $x_1, \ldots,x_n$ as set of variables. Then,  $\formula$   is said to be {\em right-sided} if  none of the variables $x_1,\ldots,x_k$ appear in  any of $\term_1,\ldots,\term_k$. We call $x_{k+1}, \ldots,x_n$ (resp. $x_1,\ldots,x_k$) {\em independent} (resp. {\em dependent}) variables.
Observe that the  class of SL formulas with functional transducers can be rewritten as  right-sided SL formulas. A  transducer  $\transducer$ is functional if for every word $w$, there is at most one word $w'$ such that $(w',w)  \in  \relof{\transducer}$ ($\transducer$ 
computes a function). An example of a functional transducer is the one implementing the identity relational constraint (allowing to express the equality $x=\term$).

In the following, we show that the $\pptl$-separability problem for right-sided SL formulas is in \expspace. 
To show this result, we will reduce the $\pptl$-separability problem for  right-sided SL formulas to its corresponding problem for  two-way transducers.

\smallskip

\noindent
{\bf Two way transducers.}
Let $\Sigma$ be a finite input alphabet and let $\leftend, \rightend$ be two special symbols not in $\Sigma$.
We assume that every input string $w \in \Sigma^*$ is presented as $\leftend w
\rightend$, where $\leftend, \rightend$ serve as left and right delimiters that
appear nowhere else in $w$.  We write
$\Sigma_{\leftend\rightend}=\Sigma\cup\{\leftend,\rightend\}$.  A two-way
automaton $\Aa=(Q,\Sigma,\delta,I,F)$ has a finite set of states $Q$,
subsets $I,F\subseteq Q$ of initial and final states 
and a transition relation $\delta
\subseteq Q \times \Sigma_{\leftend\rightend} \times Q \times
\{-1,1\}$. The -1 represents that the reading head moves to left after taking the transition  while a 1 represents that it moves to right.

The reading head cannot move left when it is on $\leftend$, 
and cannot move right when it is on $\rightend$. 
A configuration of $\Aa$ on reading $w'= \leftend w \rightend$ is represented by $(q, i)$ where $q \in Q$ and 
$i$ is a position in the input, $ 1 \leq i \leq |w|+2$, which will be read 
in state $q$.  An initial configuration is of the form $(q_0, 1)$ with $q_0 \in I$ and the reading head 
on  $\leftend$. 
If  $w'=w_1 a w_2$ and 
the current configuration is $(q, |w_1|+1)$, and $(q,a,q',-1)\in\delta$, then there is a
transition from the configuration $(q, |w_1|+1)$ to $(q', |w_1|)$  (hence $a\neq\leftend$). 
Likewise, if $(q,a,q',1)\in\delta$, 
we obtain a transition from 
 $(q, |w_1|+1)$ to $(q', |w_1|+2)$.
A run of $\Aa$  on reading $ \leftend w \rightend$ is a sequence of transitions; it is
accepting if it starts in an initial  configuration 
and ends in a configuration of the form $(q, |w|+2)$  with $q \in F$ and the reading head on $\rightend$.
The language of $\Aa$ (denoted $\Lang{\Aa}$)
  is the set of all words
$w\in\Sigma^*$ s.t. $\Aa$ has an accepting run on $\leftend w \rightend$.

We extend the definition of a two-way automaton $\Aa=(Q, \Sigma, \delta, I, F)$ into a two-way transducer ($\tnft$) $\Aa=(Q, \Sigma, \Gamma, \delta, I, F)$ where $\Gamma$ is a finite 
output alphabet.
 The  transition
relation is defined as a \emph{finite} subset $\delta \subseteq Q \times
\Sigma_{\leftend\rightend} \times Q \times \Gamma^* \times \{-1,1\}$.  The
output produced on each transition is appended to the right of the output produced so
far.  $\Aa$ defines a relation $\sem{\Aa}=\{(w,u) \mid w$
is the output produced on an accepting run of $u\}$. The acceptance condition is the same as in two-way automata. 
  Sometimes, we use  the macro-notation $(p,a,q,\alpha,0)$ to denote   a sequence of consecutive  transitions  $(p,a,s,\alpha,d)$ and $(s,b,q,\epsilon,d')$ in $\delta$ with $d+d'=0$, $b \in \Sigma_{\leftend\rightend}$ and $s$ is an extra intermediary  state of $\Aa$ that is not used anywhere else (and that we omit from the set of  states of $\Aa$).

\begin{wrapfigure}[8]{o}{3.1cm}
\vspace{-1cm}

\includegraphics[scale=0.65]{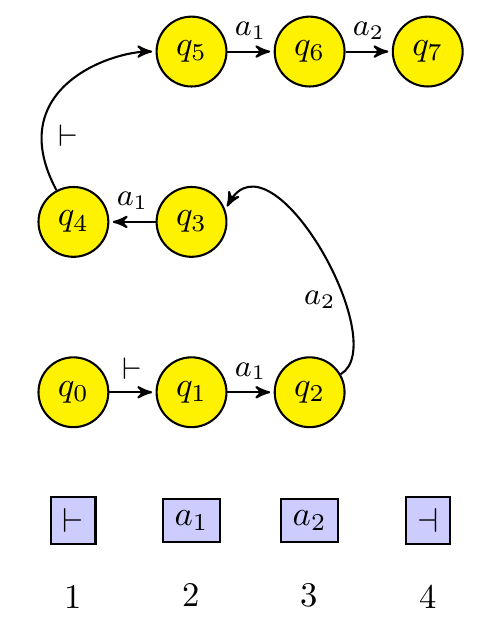}
\end{wrapfigure}

\noindent
{\bf $\pptl$ separability of $\tnft$.}
In the following, we study the $\pptl$ separability of $\tnft$.
We define here the notion of \emph{visiting sequences} (similar to 
\emph{crossing sequences} of $\tnft$ \cite{DBLP:conf/icalp/BaschenisGMP16}), which will be used in the proof of Lemma \ref{lem:min-two-way}.  
Let $w{=}\vdash a_1 \dots a_n \dashv$ be an input word and let $\rho$ be a run of the $\tnft$ on $w$. 
A visiting sequence at a position $x$ in a run $\rho$ of $w$ captures the states visited in order in the run,  
each time the reading head is on position $x$, along with the information 
pertaining to the direction of the outgoing transition from that state. For example, in  run $\rho$, 
if position $x$ is visited for the first time in state $q$, and the outgoing transition chosen in $\rho$ 
from $q$ during that visit had direction +1, then $q^+$ will be the first entry in the visiting sequence.
   For a run $\rho$, the visiting sequence at a position $x$ is defined as the tuple $\rho |x=(q_1^{d_1}, q_2^{d_2}, \dots, q_h^{d_h})$ 
of states that have, in order, visited position $x$  in $\rho$, and whose outgoing transitions had direction $d_1, \dots, d_h$.
 In the example, the visiting sequence at position 2 is $(q_1^+, q_3^-,q_5^+)$, while those at 1 and 3  respectively are  $(q_0^+, q_4^+)$ and  $(q_2^-, q_6^+)$.

\begin{lemma} \label{lem:min-two-way}
	Given a $\tnft$ $\Tt$, if $(v,u)$ is in $\mini(\relof{\Tt}\upward)$ then $|u|$ and $|v|$ are  of at most exponential length in the size of $\Tt$. 	
\end{lemma}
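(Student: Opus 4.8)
The plan is to bound the length of $u$ and $v$ for any pair $(v,u) \in \mini(\relof{\Tt}\upward)$ by a careful analysis of accepting runs of $\Tt$ together with their associated visiting sequences. First I would observe that a word $(v,u)$ lies in the minor of $\relof{\Tt}\upward$ precisely when $(v,u)$ is subword-minimal among all pairs that dominate some genuine input/output pair of $\Tt$; in particular there must exist $(v', u') \in \relof{\Tt}$ with $(v',u') \subword_2 (v,u)$, and minimality of $(v,u)$ will force $(v,u)$ to be not much larger than a \emph{smallest} such witnessing pair. So the core task reduces to bounding the length of the smallest input $u$ (and its corresponding output $v$) realizing each minimal behaviour of $\Tt$.

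The key tool is the notion of visiting sequences introduced just before the statement. For a fixed input $u = {\vdash}a_1 \cdots a_n {\dashv}$ and an accepting run $\rho$, the visiting sequence $\rho|x = (q_1^{d_1}, \ldots, q_h^{d_h})$ at each position $x$ records the states crossing that position with their outgoing directions. Since $\Tt$ is a two-way automaton, a standard crossing-sequence argument shows that in a run with no repeated configuration the number of visits to any single position is at most $|Q|$, so each visiting sequence is a tuple over $Q \times \{-1,1\}$ of length at most $|Q|$; hence there are at most $(2|Q|)^{|Q|}$ distinct visiting sequences. The plan is then to argue that if two positions $x < x'$ carry the \emph{same} visiting sequence in $\rho$ and carry the same input symbol, then the infix between them can be excised to yield a shorter accepting run on a shorter input, whose output is a subword of the original output. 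This is the standard two-way pumping/shortening argument adapted to transducers: cutting out a repeated crossing profile preserves acceptance and only deletes output letters, so the resulting pair is subword-dominated by the original.

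From this shortening argument I would conclude that any smallest input witnessing a given behaviour has length at most the number of distinct (symbol, visiting-sequence) pairs, namely $|\Sigma| \cdot (2|Q|)^{|Q|}$, which is exponential in the size of $\Tt$; this bounds $|u|$. To bound the output length $|v|$, note that along the shortened run each transition appends a bounded-length string $\alpha \in \Gamma^*$ (bounded by the size of $\Tt$), and the total number of transitions in a run without repeated configuration is at most (number of positions) $\times |Q|$, which remains exponential; multiplying by the maximal per-transition output length keeps $|v|$ exponential as well. Finally, returning to the minor, minimality of $(v,u)$ means $(v,u)$ cannot strictly subword-dominate any witnessing pair in $\relof{\Tt}$, so $(v,u)$ can exceed the smallest witness $(v',u')$ only by the padding needed to stay in the upward closure while remaining minimal, which adds nothing beyond a single such witness; hence $|u|$ and $|v|$ inherit the exponential bounds.

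The main obstacle I anticipate is the shortening step for the \emph{output}: excising an infix with a repeated crossing profile is clean for acceptance of the \emph{input}, but one must verify that the output produced on the excised portion is genuinely deleted (so the new output is a subword of the old) rather than reshuffled, since a two-way run may interleave visits to the cut region with visits elsewhere. The careful bookkeeping is to track, for the two matched crossing points, how the finitely many visits pair up into sub-runs entering and leaving the infix, and to show these sub-runs can be removed as matched blocks so that the remaining output is exactly the original output with the contributions of those blocks deleted. Establishing this matching rigorously, and confirming it respects the left-to-right order in which outputs are appended, is where the real work lies; the counting that yields the exponential bound is then routine.
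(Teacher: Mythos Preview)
Your approach is essentially the paper's: pass to a run with no repeated configuration (the paper calls this a \emph{normalized} run), bound each visiting sequence by length $|Q|$, apply pigeonhole over pairs (input symbol, visiting sequence) to bound $|u|$, and then bound $|v|$ by (number of positions)$\times|Q|\times\gamma_{\max}$.

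One point you overcomplicate: the final paragraph about ``padding'' and $(v,u)$ exceeding a smallest witness is unnecessary and slightly confused. If $(v,u)\in\mini(\relof{\Tt}\upward)$ then there exists $(v',u')\in\relof{\Tt}$ with $(v',u')\subword_2(v,u)$; but $(v',u')\in\relof{\Tt}\subseteq\relof{\Tt}\upward$, so minimality of $(v,u)$ in $\relof{\Tt}\upward$ forces $(v,u)=(v',u')$. Hence $(v,u)\in\relof{\Tt}$ outright, and you work directly with an accepting run producing exactly $(v,u)$: the shortening arguments then yield a pair strictly $\subword_2$-below $(v,u)$ that is still in $\relof{\Tt}$, contradicting minimality. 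This is also what justifies the normalization step you take for granted: if a state repeats at a position, the loop must produce \emph{no} output (else removing it gives $(v'',u)\in\relof{\Tt}$ with $v''\prec v$), so it can be excised while keeping the output equal to $v$.

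Your identified obstacle---that excising an infix with matching visiting sequence yields an output that is a genuine subword of the original---is exactly the delicate point; the paper asserts it without further detail, so your proposal is at the same level of rigor there.
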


\begin{proof}
		In the following, we show that, if $(v, u) \in \mini(\sem{\Tt}\upward)$, then $|u| \leq   \inn_{\max}= \sum_{i=1}^{|Q|} ((2 |Q|)^i \cdot |\Sigma|)$ and $|v| \leq \out_{\max}=\sum_{i=1}^{|Q|}  ((2 |Q|)^i \cdot |\Sigma| \cdot |Q| \cdot \gamma_{\max})$  where  $\gamma_{\max}$ represents the maximum length of an output on any transition in $\Tt$.
	To show this result, we need to define normalized runs as follows: A run is \emph{normalized}     if it visits each state at most once on each position $x$.
In the following, we show  that $(v,u)$ can be generated by a  normalized run $\rho$. Assume that  $(v,u)$ is accepted by a run  $\rho'$ which  is  not normalized. Then, we will have, in $\rho'$, 
	two visits to some position $x$ of the word in the same state $q$. 
	After the first visit to position $x$ in state $q$, the transducer has explored some positions till its second visit 
	to position $x$ in state $q$. This part does not produce any output since $(v, u)$ is a minimal word. We can delete this 
	explored part of the run  in between, obtaining again, an accepting run, which reads $u$ while producing $v$.  For example, 	in the figure if we have $q_6=q_2$, then we have another run without visiting positions 
	1, 2 for a second time. Observe that repeating this procedure will lead to a normalized run $\rho$ accepting $(v,u)$. 
		The length of visiting sequences in a normalized run is $\leq |Q|$  and hence the number of visiting sequences is at most exponential in $|Q|$,  precisely it is $\leq \sum_{i=1}^{|Q|}{(2 |Q|)}^{i}$.
	Suppose $|u| > \inn_{\max}$. Then there exists a visiting sequence which is repeated on reading the same input symbol in the accepting run of $u$, at positions $i \neq j$.
	By deleting the part between the $i$th and $(j-1)$th position, we again obtain 
	an accepting run over a word $u'$, which is a strict subword of $u$, and whose output $v'$ is also a subword (may not be strict) of $v$, a contradiction to  $(v, u) \in \mini(\sem{\Tt}\upward)$. 
	Now suppose $|u| \leq \inn_{\max}$ but $|v| > \out_{\max}$. 
	We saw that in the normalized run, each visiting sequence  has length at most $|Q|$.
	Then, since $|u| \leq \inn_{\max}$, on reading each position of $u$,  at most $(|Q|) \gamma_{\max}$ symbols can be produced.  Hence, we have $|v| \leq (|Q|) \cdot \gamma_{\max}\cdot  |u|$. 
\end{proof}
From Theorem~\ref{thm:sep-pptl} and Lemma~\ref{lem:min-two-way}, the following result holds:
\begin{lemma}\label{lem:sep-two-way}
	The 2-$\pptl$ separability problem for $\tnft$ is in \expspace.
\end{lemma}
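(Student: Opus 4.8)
The plan is to combine the characterization of Theorem~\ref{thm:sep-pptl} with the length bound of Lemma~\ref{lem:min-two-way}, and then turn the resulting question into an emptiness test for a two-way automaton of exponential size. Write $I=\relof{\Tt_1}$ and $E=\relof{\Tt_2}$, viewed as $2$-tuple languages over $\Sigma^*\times\Gamma^*$. By Theorem~\ref{thm:sep-pptl}, $\Tt_1$ and $\Tt_2$ are $2$-$\pptl$ separable iff $I\upward\cap E=\emptyset$, so it suffices to decide $I\upward\cap E\neq\emptyset$ in \nexpspace{} and then invoke Savitch's theorem together with the closure of \expspace{} under complement.

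First I would reduce the search to short witnesses. Since the minimal elements of $I\upward$ already lie in $I$, we have $I\upward\cap E\neq\emptyset$ iff there is some $(v_0,u_0)\in I=\relof{\Tt_1}$ with $(v_0,u_0)\upward\cap E\neq\emptyset$: soundness is immediate because $(v_0,u_0)\in I\subseteq I\upward$, and completeness follows by taking a minimal element of $I\upward$ below any witness in $I\upward\cap E$. Such a minimal element belongs to $\mini(\relof{\Tt_1}\upward)$, so by Lemma~\ref{lem:min-two-way} it may be taken with $|u_0|\le\inn_{\max}$ and $|v_0|\le\out_{\max}$, both exponential in $|\Tt_1|$. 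Hence the algorithm nondeterministically guesses a pair $(v_0,u_0)$ of at most exponential length, stores it in exponential space, and verifies $(v_0,u_0)\in\relof{\Tt_1}$ by simulating $\Tt_1$ on the input $u_0$ and checking that $v_0$ is a possible output; tracking a control state, a head position in $u_0$, and the progress along $v_0$ uses exponential space.

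The remaining step is to test whether $E$ meets the upward closure of the guessed pair, i.e.\ whether $(v_0,u_0)\upward\cap\relof{\Tt_2}\neq\emptyset$. The key observation is that $(v_0,u_0)\upward$ is the componentwise product $u_0\upward\times v_0\upward$ of two piece languages, each recognized by a DFA with $O(\inn_{\max})$, resp.\ $O(\out_{\max})$, states---exponential but effectively constructible. I would therefore build a two-way automaton $B$ over the input alphabet $\Sigma$ that simulates $\Tt_2$ while (i) augmenting its control state with the state of the DFA for $v_0\upward$ and advancing that DFA on every emitted output symbol, accepting only if it reaches a final state (so that the produced output has $v_0$ as a subword), and (ii) restricting the domain to $u_0\upward$, which, being regular, can be enforced by intersecting $\dom{\Tt_2}$ with the piece language $u_0\upward$. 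Then $\Lang{B}\neq\emptyset$ iff $(v_0,u_0)\upward\cap\relof{\Tt_2}\neq\emptyset$, and $B$ has exponentially many states.

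The main obstacle, and the final step, is to bound the cost of this emptiness test in the presence of two-wayness. Here I would reuse the visiting/crossing-sequence idea already exploited for Lemma~\ref{lem:min-two-way}: an accepting run of a two-way automaton with $m$ states can be taken to be normalized, so its crossing sequences have length at most $m$, and a witnessing run can be found by guessing and propagating crossing sequences of size polynomial in $m$ (equivalently, by the standard conversion to a one-way automaton with $2^{O(m\log m)}$ states, whose emptiness is decidable in nondeterministic logarithmic space). Thus emptiness of $B$ is decidable in space polynomial in its number of states $m$, and since $m$ is exponential in the input this is \expspace. Combining the three steps, the whole nondeterministic procedure runs in \nexpspace${}={}$\expspace, which proves the claim.
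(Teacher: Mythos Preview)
Your proof is correct and follows essentially the same approach as the paper: reduce via Theorem~\ref{thm:sep-pptl} to testing $\relof{\Tt_1}\upward\cap\relof{\Tt_2}=\emptyset$, guess an exponentially bounded minimal pair $(v_0,u_0)$ using Lemma~\ref{lem:min-two-way}, check membership in $\relof{\Tt_1}$, and then test whether $(v_0,u_0)\upward$ meets $\relof{\Tt_2}$ via emptiness of an exponential-size two-way machine, concluding with Savitch. The only cosmetic differences are that the paper packages both checks as explicit auxiliary transducers $\transducer'_1,\transducer'_2$ and appeals to the \pspace-completeness of $\tnft$ emptiness \cite{DBLP:journals/ipl/Vardi89} rather than re-deriving the bound via crossing sequences, and it enforces the input restriction $u_0\upward$ by an initial one-way sweep before simulating $\Tt_2$ (which is the cleanest way to realize the domain intersection you describe).
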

\begin{proof}
	Using Theorem~\ref{thm:sep-pptl}, we know that $\relof{\transducer_1} \upward \cap \relof{\transducer_2} = \emptyset$ iff $\transducer_1$ and $\transducer_2$ are 2-$\pptl$ separable. Here is an \textsc{NExpSpace} algorithm.\\
(1) Guess some $(v, u)$	s.t. the lengths of $v, u$ are at most as given by the proof of Lemma \ref{lem:min-two-way}.  \\
(2) Check if $(v, u) \in 	\relof{\transducer_1}$. If yes, then do (3). Else exit. \\
(3) Check if $(v, u) \uparrow \cap \relof{\transducer_2} \neq \emptyset$. 
The guessed word $(v,u)$ has exponential length in the size  of $\Tt_1$. 
To check if $(v, u)  \in 	\relof{\transducer_1}$, we construct another transducer $\transducer'_1$ that first checks that its input word is $u$, then it comes back to $\vdash$ and starts simulating  
$\transducer_1$, while also keeping track, 
 longer and longer prefixes of  $v$. We then  compare those prefixes with  the output produced by $\transducer_1$. 
 This gives rise to exponentially many states (maintaining prefixes of $u$ and $v$) and we 
 finish when $\transducer_1$ enters an accepting state,  and 
 at the same time, the produced word is $v$. Since $\relof{\transducer'_1}=\{(v, u)\} \cap \relof{\transducer_1}$ by construction, checking  if $(v, u)  \in 	\relof{\transducer_1}$ can be reduced to the emptiness problem of $\transducer'_1$. 
 After this, 
 we check the emptiness of  $(v, u) \uparrow \cap \relof{\transducer_2}$.
  This is done as follows. First, construct automata $\Aa_u, \Aa_v$ accepting 
 languages  $\{u\} \upward$ and $\{v\}\upward$ respectively. The number of states 
 of $\Aa_u, \Aa_v$ are exponential in the number of states of $\transducer_1$, 
 since the lengths of $u, v$ have this bound. 
Then, we construct a transducer $\transducer'_2$ such that $\relof{\transducer'_2}=\{(v, u)\}\uparrow \cap \relof{\transducer_2}$ in a similar manner as $\transducer'_1$. $\transducer'_2$ reads the input word while  simulating $\Aa_u$. 
   On entering an accepting state of  $\Aa_u$, it comes back to $\vdash$. Then 
   it simulates $\Tt_2$, 
   and, on the outputs produced, simulates 
   $\Aa_v$. If $\Aa_v$ enters an accepting state at the same time $\Tt_2$ 
   accepts, then we are done. The state space of $\transducer'_2$ is 
   exponential in the states of $\Tt_1$ and linear in the states of $\Tt_2$.  Since $\relof{\transducer'_2}=\{(v, u)\} \uparrow \cap \relof{\transducer_2}$, checking the emptiness of $(v, u) \uparrow \cap \relof{\transducer_2}$  can be reduced to checking the emptiness problem of $\transducer'_2$.     The emptiness problem for $\tnft$ is known to be  \pspacec{} \cite{DBLP:journals/ipl/Vardi89}. 
   Thus, in our case, the emptiness of $\transducer'_1$ and $\transducer'_2$ can be achieved in space exponential 
   in $\Tt_1$. Since we can handle the second and third steps in exponential space, 
   we obtain an  \nexpspace{} algorithm. By Savitch's Theorem, we obtain the \expspace{} complexity. 
\end{proof}

\noindent
{\bf From Right-sided SL formulas to $\tnft$. }
Hereafter, we show how to encode the set of solutions of a right-sided SL formula using $\tnft$. Let  $\alphabet$ be an alphabet  and  $\# \notin \alphabet$.

\begin{lemma}
Let 	$\formula$  be a right-sided SL formula over  $\alphabet$, with  $x_1,x_2,\ldots,x_n$ as its set of  variables.  Then, it is possible to construct, in polynomial time, a $\tnft$ $\Aa_\formula$ such that $\sem{\Aa_\formula}{=}\{(u_1\#u_2 \# \cdots \#u_n, w_1 \# w_2 \# \ldots  \# w_n) {\mid}  u_1\#u_2 \# \cdots \#u_n {\in} \encode{(\langof{\formula})}$ and  $w_i{=}u_i$ if  $x_i$ is an independent variable $\}$.
	\label{lem:sl-2nft}
\end{lemma}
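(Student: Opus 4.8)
The plan is to build $\Aa_\formula$ as a single two-way machine that reads the input word $\leftend w_1 \# \cdots \# w_n \rightend$ and lays down, left to right, the blocks $u_1, \#, u_2, \#, \ldots, u_n$, producing them in increasing order of index. Write $\formula = \bigwedge_{i=1}^n x_i \in \langof{\automaton_i} \wedge \bigwedge_{i=1}^k (x_i,\term_i) \in \relof{\transducer_i}$, so that $x_{k+1},\ldots,x_n$ are independent and $x_1,\ldots,x_k$ dependent, and each $\term_i = y_1 \cdots y_{|\term_i|}$ is a concatenation of \emph{independent} variables only, i.e.\ every $y_j \in \{x_{k+1},\ldots,x_n\}$. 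The crucial consequence of right-sidedness is that the value $\eta(\term_i)$ needed to generate a dependent block is a concatenation of the independent values $w_{k+1},\ldots,w_n$, all of which literally occur in the input; this is precisely what lets one two-way pass emit every dependent block without first having to materialise any other dependent block.

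First I would equip $\Aa_\formula$ with a navigation routine that, given a target block index $\ell \in [1,n]$, returns to $\leftend$ and scans rightwards counting $\#$ symbols (which never occur inside a block, since $\# \notin \Sigma$) until the head sits at the start of the $\ell$-th block $w_\ell$; storing the target index and the running count in the control costs only polynomially many states. Then I would process $i = 1,\ldots,n$ in turn. For an \emph{independent} index $i > k$, the machine navigates to block $i$, scans $w_i$ left to right copying each letter to the output (so $u_i = w_i$) while running the FSA $\automaton_i$ on it, and accepts this block only if $\automaton_i$ is final at the trailing $\#$; this enforces both $u_i = w_i$ and $u_i \in \langof{\automaton_i}$.

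For a \emph{dependent} index $i \leq k$, the machine simulates $\transducer_i$ on input $\eta(\term_i)$ with a nondeterministically chosen output $u_i$, exactly as in the OMPA gadget of Lemma~\ref{lem-OMPA-rel}: for $j = 1,\ldots,|\term_i|$ it navigates to the block holding $y_j$ and reads that block left to right as the $\term_i$-side letters of $\transducer_i$, while the $x_i$-side letter of each simulated transition $(q,(a,b),q')$ is appended to the output; $\epsilon$-moves on either tape and the nondeterminism of $\transducer_i$ are inherited by $\Aa_\formula$. Running $\automaton_i$ on the emitted letters in a product with the simulated state of $\transducer_i$ lets the machine check $(u_i,\eta(\term_i)) \in \relof{\transducer_i}$ and $u_i \in \langof{\automaton_i}$ together, accepting the block only when $\transducer_i$ is final having consumed all of $\eta(\term_i)$ and $\automaton_i$ is final. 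Between consecutive blocks the machine emits one $\#$ (none after $u_n$), and finally it moves to $\rightend$ and halts accepting, which also certifies that the input had exactly $n$ blocks. As each phase needs only the product of one $\transducer_i$ with one $\automaton_i$ plus the navigation counters, the whole machine has size polynomial in $\formula$ and is built in polynomial time.

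Correctness is then routine: a run of $\Aa_\formula$ emitting $u_1 \# \cdots \# u_n$ on input $w_1 \# \cdots \# w_n$ exists iff $u_i = w_i \in \langof{\automaton_i}$ for every independent $i$ and $u_i \in \langof{\automaton_i}$ with $(u_i,\eta(\term_i)) \in \relof{\transducer_i}$ for every dependent $i$, i.e.\ iff $u_1 \# \cdots \# u_n \in \encode{(\langof\formula)}$; the content of $w_i$ at dependent positions is only traversed while counting $\#$ and never otherwise constrained, so it ranges freely over $\Sigma^*$, matching the stated relation. I expect the main obstacle to be the bookkeeping of the two-way navigation when $\term_i$ lists independent variables that are repeated or out of order: the head must jump to possibly revisited blocks and resume the $\transducer_i$ simulation across these jumps, and the dependent blocks must still be emitted in index order $1,\ldots,k$ even though the independent values they consume lie to their right. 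The care lies in showing that this interleaving of navigation and simulation feeds $\transducer_i$ exactly $\eta(\term_i)$, with no spurious letter read at a block boundary.
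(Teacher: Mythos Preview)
Your proposal is correct and follows essentially the same construction as the paper: navigate to input blocks by counting $\#$'s, copy independent blocks while checking their membership constraint, and for each dependent $x_i$ simulate $\transducer_i$ across the (independent) blocks listed in $\term_i$ while running $\automaton_i$ in product on the emitted output, with a $\#$ between successive output blocks. The only cosmetic difference is that the paper keeps an explicit permutation $\pi$ between the variable order $x_1,\ldots,x_n$ and the SL order $y_1,\ldots,y_n$, whereas you assume these coincide; the substance is identical.
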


\begin{proof}
	Let us assume that $\formula$ is of the form $\bigwedge\limits_{i=1}^{n} y_i \in \langof{\automaton_i} \wedge \bigwedge\limits_{i=1}^{k}(y_i,\term_i)  \in  \relof{\transducer_i}$  with  $y_1,\ldots,y_n$ is a permutation of $x_1,\ldots,x_n$. Let $\pi: [1,n] \rightarrow [1,n]$ be the mapping that associates to each index $i \in [1,n]$, the index $j \in [1,n]$ s.t. $x_i=y_j$ (or $x_i=y_{\pi(i)}$). 
	We construct $\Aa_\formula$ as follows:
	$\Aa_\formula$ reads $n$ words over $\Sigma$ separated by $\#$ as input.
	We explain hereafter the working of $\Aa_\formula$ when it produces the assignment for $x_1$ ( the other variables are handled in similar manner).
		
\noindent $\bullet$ Assume that $x_1$ is a dependent variable.
		Let $\constr_{\pi(1)} {=} (y_{\pi(1)}, \term_{\pi(1)}) {\in} \relof{\transducer_{\pi(1)}}$,  with  $\term_{\pi(1)} {=} x_{i_1} x_{i_2} \ldots x_{i_c}$ and $x_{i_j} {\in} \{y_{k+1}, y_{k+2}, \ldots, y_n\}$ for all $j$.
	First, $\Aa_\formula$ reads $x_{i_1}$ i.e. the first variable in $t_{\pi(1)}$. To read $x_{i_1}$, it skips $(i_1 - 1)$ many blocks separated by $\#$s of the input, and comes to $w_{i_1}$.  
		On the first symbol of $w_{i_1}$,  $\Aa_\formula$ starts mimicking transitions of $\transducer_{\pi(1)}$ from its initial state, while producing the same output as $\transducer_{\pi(1)}$.
		On the same output, $\Aa_\formula$ mimics the transitions of $\Aa_{\pi(1)}$ starting from the initial state to check the  membership constraint of $y_{\pi(1)}$.
	This can be done by a product construction between $\Aa_{\pi(1)}$ and $\transducer_{\pi(1)}$. For instance,  $\Aa$ will have a   transition $((p, q), a, (p',q'), b, 1)$  (resp. $((p, q), a, (p',q'), b, 0)$), if there
	 are transitions $(p, (b, a), p')$ (resp. $(p, (b, \epsilon), p')$) in $\transducer_{\pi(1)}$ and $(q, b, q')$ in $\Aa_{\pi(1)}$.
		If it reaches $\#$  or $\rightend$ in the input, it remembers the current states of $\transducer_{\pi(1)}$ and $\Aa_{\pi(1)}$, say $(p_1, q_1)$ in its  control state. 
Next, $\Aa_\formula$ reads $x_{i_2}$ in the input. To read $x_{i_2}$, $\Aa_\formula$ moves to $\leftend$ and then changes direction. 
		As before it reaches $x_{i_2}$ by skipping $(i_2 - 1)$ many $\#$s, and starts reading the input (the first symbol of $w_{i_2}$) from the state $(p_1, q_1)$ stored in the finite control. Transitions are similar to explained above.
		This procedure is repeated to read $x_{i_3}\ldots x_{i_c}$. After  reading $x_{i_c}$, if the next state contains the pair $(p_c, q_c)$, where $p_c$ (resp. $q_c$) is a final state of $\transducer_{\pi(1)}$ (resp. $\Aa_{\pi(1)}$), we can say that the output produced till now satisfies $\constr_{\pi(1)}$ and $y_{\pi(1)} \in \langof{\automaton_{\pi(1)}}$.

		\noindent $\bullet$ Assume now that  $x_1$ is an independent variable, then $\Aa_\formula$ needs to read $x_{1}$.  
		We need a single pass of the input which verifies if  the first
		 block corresponding to value of $x_1$ in input 
		 indeed satisfies its  corresponding  membership constraint. 
		During this pass $\Aa_\formula$  mimics transitions of $\Aa_{\pi(1)}$ starting from its initial states, and outputs  the same letter as input.
		
	The above procedure is repeated for all variables from $x_2$ to $x_n$. After each pass, $\Aa$ moves to $\leftend$ and then changes direction. 
	Irrespective of whether $x_i$ is dependent or not, while going from $x_{i}$ to $x_{i+1}$, $i \in [1, n-1]$, $\Aa$ outputs a $\#$ as separator.	
From the description above, it can be seen that if $x_i$ is independent, then its evaluation $u_i$ 
given as the $i$th block of the input is equal to  the output $w_i$, and if $x_i$ is a dependent variable, 
then the output block  $w_i$ is the output of  $\Tt_{\pi(i)}$. 	\end{proof}
Notice that the above construction of  $\tnft$ relies on the right-sidedness: if  a variable $x_i$  appears in the output of $\Tt_i$ and also 
in the input of $\Tt_k$ for some $k$, then we will have to store  the produced evaluation of $x_i$ in order to use it later on when processing $\Tt_k$. However, there is no way to store the produced evaluation of $x_i$ or compare it with its input evaluation. Next, we show that the $\pptl$ separability problem for  right-sided formulas can be reduced to its corresponding problem for  $\tnft$.

\begin{lemma}\label{lem:pptl-sl-2nft}
	Let  $\formula_1$ and $\formula_2$ be two right-sided SL formula, with $x_1, \ldots, x_n$ as their set of variables. Let   $\Aa_{\formula_1}$ and $\Aa_{\formula_2}$ be the two   $\tnft$s encoding, respectively, the set of solutions of $\formula_1$ and $\formula_2$ (as described in Lemma \ref{lem:sl-2nft}).	Then, the two formulae $\formula_1$ and $\formula_2$ are separable by n-$\pptl$ iff   
	$\relof{\Aa_1}$ and $\relof{\Aa_2}$ are separable by a 2-$\pptl$.
\end{lemma}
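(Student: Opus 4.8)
The plan is to reduce, via Theorem~\ref{thm:sep-pptl}, the claimed equivalence to a statement purely about upward closures, and then to settle it using one order-theoretic property of $\encode$. Concretely, I would apply Theorem~\ref{thm:sep-pptl} twice: with $I=\langof{\formula_1}$, $E=\langof{\formula_2}$ under the component-wise order $\subword_n$, the formulas are $n$-$\pptl$ separable iff $\langof{\formula_1}\upward\cap\langof{\formula_2}=\emptyset$; with $I=\relof{\Aa_1}$, $E=\relof{\Aa_2}$ under the $2$-tuple order $\subword_2$, the transducers are $2$-$\pptl$ separable iff $\relof{\Aa_1}\upward\cap\relof{\Aa_2}=\emptyset$. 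Thus it suffices to prove
\[
\langof{\formula_1}\upward\cap\langof{\formula_2}\neq\emptyset \iff \relof{\Aa_1}\upward\cap\relof{\Aa_2}\neq\emptyset .
\]

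The heart of the argument is a property of $\encode$ that I would isolate as a sublemma: for all $n$-tuples $(u_1,\dots,u_n)$ and $(v_1,\dots,v_n)$,
\[
\encode(u_1,\dots,u_n)\subword\encode(v_1,\dots,v_n)\iff(u_1,\dots,u_n)\subword_n(v_1,\dots,v_n).
\]
The ($\Leftarrow$) direction is routine: embed each factor $u_i$ into $v_i$ and map separators to separators in order. For ($\Rightarrow$) I would observe that both words contain exactly $n-1$ occurrences of $\#$, and since $\#\notin\Sigma$ every subword embedding must send each $\#$ to a $\#$; being an order-preserving injection between two $(n{-}1)$-element sets of separator positions, it must send the $j$-th separator of the source to the $j$-th separator of the target, so the factor $u_j$ lying between the $(j{-}1)$-th and $j$-th separators is forced to embed into the corresponding factor $v_j$, giving $u_j\subword v_j$ for every $j$.

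With this in hand both directions of the displayed equivalence are short, using the exact shape of $\relof{\Aa_\formula}$ from Lemma~\ref{lem:sl-2nft}, namely $\{(\encode(u_1,\dots,u_n),\encode(w_1,\dots,w_n)) : (u_1,\dots,u_n)\in\langof{\formula},\ w_i=u_i \text{ for independent } x_i\}$ with the dependent blocks $w_i$ unconstrained. For ($\Rightarrow$) I would start from solutions $\bar a=(a_1,\dots,a_n)\in\langof{\formula_1}$ and $\bar b=(b_1,\dots,b_n)\in\langof{\formula_2}$ with $\bar a\subword_n\bar b$, and build pairs $p_a\in\relof{\Aa_1}$, $p_b\in\relof{\Aa_2}$ by setting every dependent block to $\epsilon$ (and the independent blocks to the $a_i$, resp.\ $b_i$). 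The sublemma makes the first components compare ($\encode(\bar a)\subword\encode(\bar b)$), and the second components compare because $\bar w^a\subword_n\bar w^b$ block-wise ($\epsilon\subword\epsilon$ on dependent blocks, $a_i\subword b_i$ on independent ones); hence $p_a\subword_2 p_b$ and $p_b\in\relof{\Aa_1}\upward\cap\relof{\Aa_2}$. For ($\Leftarrow$), given $p\subword_2 q$ with $p\in\relof{\Aa_1}$ arising from a solution $\bar a$ of $\formula_1$ and $q\in\relof{\Aa_2}$ arising from a solution $\bar b$ of $\formula_2$, comparing first components yields $\encode(\bar a)\subword\encode(\bar b)$, whence $\bar a\subword_n\bar b$ by the sublemma and $\bar b\in\langof{\formula_1}\upward\cap\langof{\formula_2}$.

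The main obstacle I anticipate is exactly the ($\Rightarrow$) direction of the sublemma: ruling out that a subword embedding between two encodings can move symbols across block boundaries. This is where the separator symbol $\#\notin\Sigma$ and the equal separator count $n-1$ on both sides are indispensable; the remaining steps are routine bookkeeping around Theorem~\ref{thm:sep-pptl} and the flexibility afforded by the unconstrained dependent blocks (which lets me shrink them to $\epsilon$ in the forward direction). It is worth noting that the second, transducer-output, coordinate plays no essential role in the ($\Leftarrow$) direction, since the full solution is already recorded verbatim in the first coordinate.
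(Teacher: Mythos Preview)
Your approach is correct in spirit and genuinely different from the paper's. The paper proves both directions by explicitly translating separators: from a $2$-$\pptl$ separator $L$ it intersects with $R\times R$ (words with exactly $n-1$ copies of $\#$), decomposes the result as a union of products $[L_1\#\cdots\#L_n]\times[R_1\#\cdots\#R_n]$ of piece languages, and then forms the $n$-$\pptl$ $\bigcup (L_1\cap R_1)\times\cdots\times(L_n\cap R_n)$; the converse is the analogous rewriting in the other direction. Your route via Theorem~\ref{thm:sep-pptl} bypasses all of this bookkeeping: once both separability questions are rephrased as emptiness of an upward closure against the other language, everything reduces to the single order-theoretic sublemma that $\encode$ reflects and preserves $\subword_n$. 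This is shorter and conceptually cleaner; what the paper's construction buys in exchange is that it manufactures the separator explicitly, whereas your argument only certifies its existence through Theorem~\ref{thm:sep-pptl}.

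There is one gap to patch in your forward direction. You write that you set ``every dependent block to $\epsilon$'' and then compare $\epsilon\subword\epsilon$ on dependent blocks and $a_i\subword b_i$ on independent ones. This tacitly assumes $\formula_1$ and $\formula_2$ have the \emph{same} independent/dependent partition of $x_1,\dots,x_n$, which need not hold: each formula carries its own relational constraints. If $x_i$ is independent in $\formula_1$ but dependent in $\formula_2$, your choice gives $w^a_i=a_i$ and $w^b_i=\epsilon$, and $a_i\subword\epsilon$ fails in general. The fix is one line: take $p_b=(\encode(\bar b),\encode(\bar b))$, i.e.\ set \emph{every} input block of $p_b$ to $b_i$ (this is in $\relof{\Aa_2}$ since the dependent input blocks are unconstrained), and keep $w^a_i=\epsilon$ for blocks that are dependent in $\formula_1$. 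Then $w^a_i\subword b_i$ holds in all cases. (As a minor terminological point: the second coordinate of $\sem{\Aa_\formula}$ is the transducer \emph{input}, not its output; your argument is unaffected.)
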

\begin{proof}
	Let $\relof{\Aa_{\formula_1}}$ and $\relof{\Aa_{\formula_2}}$ be separable by a 2-$\pptl$ $L$. 
	By definition, $L$ is a Boolean combination (except complementation) of piece languages   of words over the two tuple alphabet $(\Sigma \cup \{\#\})^2$. We can assume w.l.o.g. that $L$ is the union of piece languages. 
	This is possible 	 since the intersection of two  piece languages  can be rewritten as a union of piece languages.
	Consider $L' = L \cap (R \times R)$, where $R$ is a regular language consisting of words having exactly $(n-1)$ $\#$s.
	We claim that $L'$ can be rewritten as the union  of languages of the form $[L_1 \# L_2 \# \ldots \# L_{n}] \times [R_1 \# R_2 \# \ldots \# R_{n}] $ where the $L_i$s and $R_i$s are piece languages over $\Sigma $, and that $L'$ is also a separator of  $\relof{\Aa_{\formula_1}}$ and $\relof{\Aa_{\formula_2}}$.

	We prove this claim inductively.
	As a base  case consider $L$ to be a piece language $((\Sigma \cup \{\#\})^*)^2 (a_1, b_1) ((\Sigma \cup \{\#\})^*)^2 \ldots (a_m, b_m) ((\Sigma \cup \{\#\})^*)^2$. 
	Let $S$ be a finite set containing only the {\em minimal} words of the form $(w, w')$ such that $a_1 a_2 \ldots a_m \preceq w$,  $b_1 b_2 \ldots b_m \preceq w'$, and the symbol $\#$ appears exactly  $(n-1)$-times in  $w$ and $w'$.
	Thus $L \cap (R \times R) = \bigcup\limits_{(a'_1 \ldots a'_k, b'_1 \ldots b'_\ell) \in S} \big[\Sigma ^* a'_1 \Sigma^* a'_2 \ldots a'_k \Sigma^*\big] \times \big[ \Sigma^* b'_1 \Sigma^* b'_2 \ldots b'_\ell \Sigma^*\big]$.
	So $L \cap (R \times R) $ is the union of piece languages of the form $[L_1 \# L_2 \# \ldots \# L_{n}] \times [R_1 \# R_2 \# \ldots \# R_{n}] $ where the $L_i$s and $R_i$s are piece languages over $\Sigma$.
	Now assume that $L$ is of the form  $L_1 \cup L_2$. 
	It is easy to see that $L \cap (R \times R)$ is equivalent to  $(L_1\cap (R \times R)) \cup (L_2 \cap (R \times R))$. 
	Thus we can use our  induction hypothesis to show that $L \cap (R \times R)$ is the union of languages of the form  $[L_1 \# L_2 \# \ldots \# L_{n}] \times [R_1 \# R_2 \# \ldots \# R_{n}] $ where $L_i$ and $R_i$s are piece languages over $\Sigma$.
	Next we prove that $L'$ is a separator of $\relof{\Aa_{\formula_1}}$ and $\relof{\Aa_{\formula_2}}$.
	Indeed if $(v, u) \in \relof{\Aa_{\formula_1}}$, then $(v, u) \in R \times R$, by definition of $\relof{\Aa_{\formula_1}}$. Since $L$ is a separator, we have $(v, u) \in L$ and hence $(v, u) \in L'$.
	Suppose $(v, u) \in \relof{\Aa_{\formula_2}} \cap L'$, then $(v, u) \in L \cap \relof{\Aa_{\formula_1}}$  since $(v, u) \in L'$, and $L' \subseteq L$, which is a contradiction with the assumption that $L$ is a separator.

	Now we are in a condition to provide $n$-$\pptl$ separator for $\langof{\formula_1}$ and $\langof{\formula_2}$, using $L'$.
	Given a language of the form $[L_1 \# L_2 \# \ldots \# L_{n}] \times [R_1 \# R_2 \# \ldots \# R_{n}] $ where $L_i$ and $R_i$s are piece languages, we associate to it an  $n$-$\pptl$ equivalent to $((L_1 \cap R_1) \times (L_2 \cap R_2) \times \ldots \times (L_n \cap R_n))$ : the idea is to 
	generate the $n$ dimensions in the $n$-$\pptl$ from the $n$ $\#$-separated blocks in two dimensions.  
		 This definition is extended in the straightforward manner to union of piece languages. Let $K$ be the $n$-$\pptl$ associated to $L'$.
		$K$ is indeed a separator of $\langof{\formula_1}$ and $\langof{\formula_2}$:
	Suppose ${\bf v }= (w_1, \ldots, w_n) \in \langof{\formula_1}$, then $(w_1 \# \ldots \# w_n, w_1 \# \ldots \# w_n) \in \relof{\Aa_{\formula_1}}$  (from the definition of $\Aa_{\formula_1}$).
	Since $L'$ is a separator, $(w_1 \# \ldots \# w_n, w_1 \# \ldots \# w_n) \in L'$. By construction of $K$, $(w_1, \ldots, w_n) \in K$.
	Assume ${\bf v} = (w_1, \ldots, w_n) \in \langof{\formula_2} \cap K$, then $(w_1  \# \ldots \# w_n, w_1 \# \ldots \#w_n) \in L'$.
	Since $L' \cap \sem{\Aa_{\formula_2}} = \emptyset$,  then $(w_1 \# \ldots \# w_n, w_1 \# \ldots \#w_n) \notin \sem{\Aa_{\formula_2}}$.
	By definition of $\Aa_{\formula_2}$, if $(w_1, \ldots, w_n) \in \langof{\formula_2}$, then $(w_1 \#  \ldots \# w_n, w_1 \# \ldots w_n ) \in \relof{\Aa_{\formula_2}}$. Hence contradiction.

	 For the other direction of the proof, assume the $n$-$\pptl$ $S$ is a separator of $\langof{\formula_1}$ and $\langof{\formula_2}$.
	 Then $S$ can be rewritten as the union  of $(L_1 \times L_2 \times \ldots \times L_n)$ where $L_i$s are piece languages. 
	 Replace each $n$-piece language $(L_1 \times L_2 \times \ldots \times L_n)$ of $S$ with the 2-piece language $(L_1' \# L_2' \# \ldots \# L_n') \times ((\Sigma \cup \{\#\})^* \# \ldots \# (\Sigma \cup \{\#\})^*) $, 
	 where $L_1' =  (\Sigma \cup \{\#\})^* a_1 (\Sigma \cup \{\#\})^* \ldots a_n (\Sigma \cup \{\#\})^*$ if $L_1 = \Sigma^* a_1 \Sigma^* \ldots a_n \Sigma^*$.
	 Denote the union of such languages  by $S'$. It is a 2-$\pptl$ over $(\Sigma \cup \{\#\})$.
	 We show that $S'$ is a 2-$\pptl$ separator of $\relof{\Aa_{\formula_1}}$ and $\relof{\Aa_{\formula_2}}$.
	 Let ${\bf (v, u)} = (v_1 \# \ldots \# v_n, u_1 \# \ldots \# u_n) \in \relof{\Aa_{\formula_1}}$, then $(v_1, \ldots, v_n) \in \langof{\formula_1}$, and thus $(v_1, \ldots, v_n) \in S$ (since $S$ is a separator). This implies that ${\bf (v, u)} \in S'$ by its construction.
	 Suppose ${\bf (v, u)} = (v_1 \# \ldots \# v_n, u_1 \# \ldots \# u_n) \in \relof{\Aa_{\formula_2}} \cap S'$, then $(v_1, v_2, \ldots, v_n) \in \langof{\formula_2}$. Also $(v_1, v_2, \ldots, v_n) \in S$ by construction of $S'$. This leads to a contradiction that $S$ is separator of $\langof{\formula_1}$ and $\langof{\formula_2}$.
	So $S'$ is  a 2-$\pptl$ separator of $\relof{\Aa_{\formula_1}}$ and $\relof{\Aa_{\formula_2}}$.
	\end{proof}
\begin{theorem}
	The $n$-$\pptl$ separability of right-sided SL formulas is  in $\expspace$ and is \pspaceh.
\end{theorem}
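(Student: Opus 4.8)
The plan is to establish the two bounds separately: the \expspace{} membership is obtained by composing the reductions already developed in this subsection, whereas the \pspaceh{} lower bound is obtained through a polynomial-time reduction from a standard \pspace-complete automata problem.

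For the upper bound, I would proceed as follows. Given two right-sided SL formulas $\formula_1$ and $\formula_2$ over the variables $x_1, \ldots, x_n$, first apply Lemma~\ref{lem:sl-2nft} to construct, in polynomial time, two $\tnft$s $\Aa_{\formula_1}$ and $\Aa_{\formula_2}$ encoding the sets of solutions of $\formula_1$ and $\formula_2$. By Lemma~\ref{lem:pptl-sl-2nft}, $\formula_1$ and $\formula_2$ are $n$-$\pptl$ separable iff $\relof{\Aa_{\formula_1}}$ and $\relof{\Aa_{\formula_2}}$ are $2$-$\pptl$ separable, and Lemma~\ref{lem:sep-two-way} shows that the latter test is in \expspace. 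Since the two $\tnft$s have size polynomial in the formulas, the whole pipeline runs in \expspace, which gives membership.

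For the lower bound, I would reduce from the \emph{DFA intersection emptiness} problem: given DFAs $C_1, \ldots, C_m$ over $\Sigma$, decide whether $\bigcap_{i=1}^m \langof{C_i} = \emptyset$, which is \pspace-complete. Introduce one independent variable $y$ together with $m$ dependent variables $x_1, \ldots, x_m$, and let $\transducer_{\idf}$ denote the identity transducer (outputting its input unchanged), which is functional. Define the right-sided SL formula
\[
\formula \;=\; \bigwedge_{i=1}^m (x_i, y) \in \relof{\transducer_{\idf}} \;\wedge\; \bigwedge_{i=1}^m x_i \in \langof{C_i}.
\]
Each relational constraint forces $x_i = y$, so any solution assigns to $y$ a word lying in every $\langof{C_i}$; hence $\langof{\formula} \neq \emptyset$ iff $\bigcap_i \langof{C_i} \neq \emptyset$. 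The formula is right-sided because the only variable occurring on a right-hand side is the independent variable $y$, and it is built in polynomial time. Let $\formula'$ be the trivial right-sided formula with $\langof{\formula'} = (\Sigma^*)^{m+1}$. By Theorem~\ref{thm:sep-pptl}, $\formula$ and $\formula'$ are $\pptl$ separable iff $\langof{\formula}\upward \cap \langof{\formula'} = \langof{\formula}\upward = \emptyset$, i.e.\ iff $\langof{\formula} = \emptyset$. Thus $\formula$ and $\formula'$ are separable iff $\bigcap_i \langof{C_i} = \emptyset$, and since \pspace{} is closed under complement this establishes \pspacehardness{} (even when all transducers are functional).

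The \expspace{} direction is essentially bookkeeping over the earlier lemmas, so the genuine technical content lies in the hardness reduction. The main point to get right is that the construction really stays inside the \emph{right-sided} fragment: this is exactly why the identity transducers read only the single independent variable $y$ instead of chaining dependent variables (which would leave the fragment and land in the EXPSPACE-hard territory of transducer composition). The second ingredient to handle carefully is the passage from separability to emptiness via the upward-closure characterization of Theorem~\ref{thm:sep-pptl}, which is what lets a separability query decide intersection emptiness.
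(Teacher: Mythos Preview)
Your proposal is correct and follows essentially the same route as the paper: the \expspace{} upper bound chains Lemma~\ref{lem:sl-2nft}, Lemma~\ref{lem:pptl-sl-2nft}, and Lemma~\ref{lem:sep-two-way} exactly as the paper does, and your \pspace{}-hardness reduction from (D/N)FA intersection emptiness via $m$ identity constraints $x_i = y$ together with a trivial second formula is precisely the paper's construction. Your explicit invocation of Theorem~\ref{thm:sep-pptl} to justify ``separable iff $\langof{\formula}=\emptyset$'' is a bit more detailed than the paper, which simply asserts the equivalence, but the argument is the same.
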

\begin{proof}
	Given two right-sided SL formulas $\formula_1$ and $\formula_2$, one can construct corresponding two way transducers $\Aa_{\formula_1}$ and $\Aa_{\formula_2}$ with polynomial states, as mentioned in Lemma~\ref{lem:sl-2nft}. 
	Thanks to Lemma~\ref{lem:pptl-sl-2nft}, the $n$-$\pptl$ separability reduces to 2-$\pptl$ separability of $\relof{\Aa_{\formula_1}}$ and $\relof{\Aa_{\formula_2}}$.
	The 2-$\pptl$ separability of $\tnft$s is  in \expspace{} (Lemma~\ref{lem:sep-two-way}). 
	Hence $n$-$\pptl$ separability of SL formulae is also  in \expspace. 
	For the \pspaceh~ lower bound, we reduce the emptiness of $k$-NFA intersection to $\pptl$ separability of right sided SL. 
	Let $\Aa_1, \ldots, \Aa_k$ be $k$-NFA. We want to decide if  $\bigcap_{i=1}^k  \Aa_i=\emptyset$. 
	Let $\formula_1$ be $\bigwedge_{i=1}^k x_i=x \wedge \bigwedge_{i=1}^k(x_i {\in}  \Aa_i)$, and $\formula_2$ be   $x {\in} \Sigma^*$.
	 $\formula_1$ and $\formula_2$ are $\pptl$ separable iff $\bigcap_{i=1}^k  \Aa_i=\emptyset$. 
\end{proof}

\section{Examples}
We conclude with 2 examples. One, we give a string program whose safety checking boils down to checking 
the separability of two SL string constraints. 
Second, we illustrate how to compute the $\pptl$ separator given 
two SL string constraints, using the $\tnft$ encoding and Theorem~\ref{thm:sep-pptl}. 

\begin{example}

As a practical  motivation of  $\pptl$ (and $\ptl$), consider the following pseudo-PHP code obtained as a variation 
of the code at \cite{php}. In this code,  a   user is prompted to change his password by entering the new password twice. \\
{\small $\texttt{str {\color{blue!60}old} = real\_escape\_string({\color{blue!60}oldIn}); } $\\
$\texttt{str {\color{blue!60}new1} = real\_escape\_string({\color{blue!60}newIn1}); } $\\
$\texttt{str {\color{blue!60}new2 }= real\_escape\_string({\color{blue!60}newIn2}); }$ \\
$\texttt{str {\color{blue!60}pass} = \textcolor{red!60}{database\_query}}(\texttt{"SELECT password FROM users WHERE}$ $\texttt{userID="~{\color{blue!60}userID})};$\\
$\texttt{{\color{red!60}if} (old == pass {\color{red!60}AND} new1 == new2 {\color{red!60}AND} new1 != old )}$ \\
$\texttt{~~~~{\color{red!60}if} (newIn1==newIn2  {\color{red!60}AND} newIn1 != oldIn)}$ \\
$\texttt{~~~~~~~str {\color{green!50!blue}query} = "UPDATE users SET password="~{\color{blue!60}new1} "WHERE userID="~{\color{blue!60}userID};}$  \\
$\texttt{~~~~~~~{\color{red!60}database\_query}({\color{green!50!blue}query});}$ \\
}\\

The user inputs the old password {\tt \color{blue!60}oldIn} and the new password twice : {\tt \color{blue!60}newIn1} and {\tt \color{blue!60}newIn2}. These are sanitized and assigned to 
{\tt \color{blue!60}old}, {\tt \color{blue!60}new1}
and {\tt \color{blue!60}new2} respectively. The old sanitized password is compared with the value {\tt \color{blue!60}pass} 
from the database to authenticate the user, and also 
with the new sanitized password to check that a different password 
has been chosen, and finally, the sanitized new passwords entered twice are checked to be the same. Sanitization ensures that 
there are no SQL injections. To ensure the absence of SQL attacks, we require that the query $\texttt{\textcolor{green!50!blue}{query}}$  does not belong to a regular language   {\tt{Bad}} of bad patterns over some finite alphabet $\Sigma$ (i.e., the program is safe). 
This safety condition can be expressed as the unsatisfiability of the following formula $\varphi$ given by \\
{\small $\mathtt{
new1=T(newIn1)\wedge  new2=T(newIn2) \wedge old= T(oldIn) \wedge  new1=new2 ~\wedge} \mathtt{pass=old \wedge old \neq  new1} $ $\mathtt{\wedge~ newIn1=newIn2 ~\wedge }
\mathtt{\textcolor{green!50!blue}{~query}=u \cdot new1 \cdot v \cdot userID} \wedge \mathtt{{\textcolor{green!50!blue}{query}} \in Bad}$}.

Note that the check $\mathtt{new1=new2}$ has to be done by the server to ensure the sanitized new passwords entered twice are same; however, 
the check $\mathtt{newIn1=newIn2}$ is not redundant, since it can happen that post sanitization, 
the passwords may agree, but not before.  
The sanitization on lines 1, 2 and 3 is represented by the transducer $\tt{T}$  and $\tt{u, v}$ are the
constant strings from line 7.  It is easy to see that the program given here is safe iff the formula $\varphi$  is unsatisfiable.  
Observe that the formula $\varphi$ is not 
in the straight line fragment~\cite{DBLP:conf/popl/LinB16} since variable {\tt new1} has two assignments.  Further, it also has a non-benign chain (see below) 
making it fall out of the fragment of string programs handled in ~\cite{DBLP:conf/atva/AbdullaADHJ19}. However the formula $\varphi$ can be rewritten as a conjunction of the two formula $\varphi_1$ and $\varphi_2$ in straight-line form where \\
   {\small
$\varphi_1 : \mathtt{new1=T(newIn1) \wedge  old= T(oldIn) \wedge   pass=old }\wedge  \mathtt{ \textcolor{green!50!blue}{query}=u \cdot new1 \cdot v \cdot userID} \wedge  \mathtt{\textcolor{green!50!blue}{query} \in Bad}$}  \\
{\small 
$\varphi_2 :  \mathtt{new2=T(newIn2)} \wedge \mathtt{new1=new2} \wedge \mathtt{old \neq new1} \wedge \mathtt{newIn1=newIn2}$}.

It is easy to see that  the program is safe iff  $\varphi_1$ and $\varphi_2$ are separable by the $\pptl$ language that associates to each string variable $\Sigma^*$s.

\subsection*{The string program falls out of the chain-free fragment}

To define the chain-free fragment, \cite{DBLP:conf/atva/AbdullaADHJ19} introduces the notion of a splitting graph.  
Assume we are given a string constraint $\Psi=\bigwedge_{i=1}^n \varphi$ having $n$ relational constraints, 
each of the form $\varphi_j=R_j(t_{2j-1}, t_j)$. Let each term $t_i$ be a concatenation 
of variables $x_{i,1}\dots x_{i,m_i}$. Given such a  string constraint, its splitting graph contains nodes of the 
form $\{(i, j) \mid 1 \leq j \leq 2n, 1 \leq i \leq n_j\}$. Each node $(i, j)$ is labeled by the 
variable $x_{j,i}$.  
The node $(i, 2j-1)$ (resp. $(i, 2j)$) represents the $i$th term in the left side (respectively, the right side) of constraint $\varphi_j$. 
There is an edge from node $p$ to node $q$ if there exists a node $p'$ (different from $q$) such that $p$ and $p'$ represent the nodes corresponding to different sides of the same constraint (say $\varphi_i$) and $p'$ and $q$ have the same label. An edge $(p,q)$, with $p=(i, j)$  is labeled by the $j$th constraint 
$\varphi_j$.

A \emph{chain} in the graph is a sequence of edges of the form $(p_0,p_1)(p_1, p_2) \dots (p_n,p_0)$. 
A chain is a \emph{benign} chain if (1) all relational constraints corresponding to the edges are 
all of the form $R(x, t)$ where $x$ is a single variable ($t$ is a term as usual) and length preserving, and (2)  
the sequence of positions $p_0, p_1, \dots, p_n$ all correspond to 
the left side (or all to the right side).

Recall from $\varphi_1, \varphi_2$ above, that we consider for the string program, the straight-line 
constraints \\
$\phi_1 : \mathtt{new1=T(newIn1)}, \phi_2 :  \mathtt{old= T(oldIn)}, \phi_3 :   \mathtt{pass=old}$, 
$\phi_4 :  \mathtt{new2=T(newIn2)}$,\\
$ \phi_5: \mathtt{new1=new2},  \phi_6 :  \mathtt{old \neq new1}, \phi_7 : \mathtt{newIn1=newIn2}$.

\begin{figure}[h]
	 \tikzstyle{trans}=[-latex, rounded corners]
	\begin{center}
		\scalebox{0.7}{
			
			\begin{tikzpicture}[->,>=stealth',shorten >=1pt,auto, semithick,scale=.9]

			\tikzstyle{every node}=[draw, ellipse, minimum size=1cm, align=center, inner sep=0]

\node[ellipse] at (-6,0) (A1) {$(1, 1)$} ;
\node[ellipse] at (-6,4) (A2) {$(1, 2)$} ;

\node[ellipse] at (-2,0) (B1) {$(1, 7)$} ;
\node[ellipse] at (-2,4) (B2) {$(1, 8)$} ;

\node[ellipse] at (2,0) (C1) {$(1, 9)$} ;
\node[ellipse] at (2,4) (C2) {$(1, 10)$} ;

\node[ellipse] at (6,0) (D1) {$(1, 13)$} ;
\node[ellipse] at (6,4) (D2) {$(1, 14)$} ;

\tikzstyle{every node}=[inner sep=0,outer sep=0]
\node[] (A11) [below=1cm of A1] {{\color{blue}new1}};
\node[] (B11) [below=1cm of B1] {{\color{blue}new2}};
\node[] (C11) [below=1cm of C1] {{\color{blue}new1}};
\node[] (D11) [below=1cm of D1] {{\color{blue}newIn1}};

\node[] (A21) [above=1cm of A2] {{\color{blue}newIn1}};
\node[] (B21) [above=1cm of B2] {{\color{blue}newIn2}};
\node[] (C21) [above=1cm of C2] {{\color{blue}new2}};
\node[] (D21) [above=1cm of D2] {{\color{blue}newIn2}};

       \path(A2) edge  (C1);
\path(A1) edge[bend right=15]  (D1);

\path(B1) edge  (D2);
\path(B2) edge  (C2);

\path(C1) edge  (B1);
\path(C2) edge  (A1);

\path(D1) edge  (B2);
\path(D2) edge[bend right=15]  (A2);

			\end{tikzpicture}
		}
		\end{center}
	\caption{splitting graph\label{fig:split} for the example}
\end{figure}
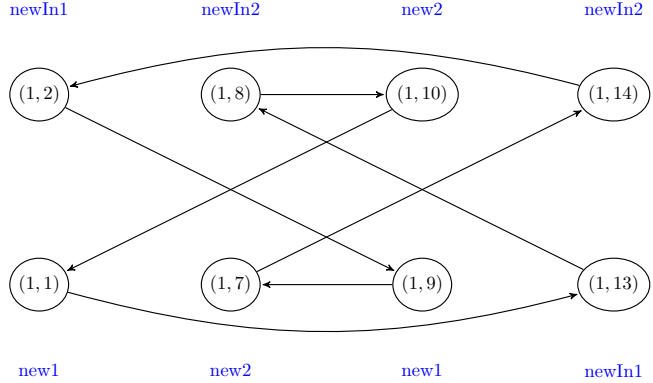

By this definition, the part of the splitting graph which induces a non-benign chain 
is shown in figure \ref{fig:split}:
Observe that there is a chain formed  from the nodes of splitting graph which is:  
$(1, 1) \to (1, 13) \to (1, 8) \to (1, 10) \to (1,1)$. 
This chain is not benign since  the transducer functions are not length preserving.

\end{example}

\begin{example}

We illustrate how to compute the $\pptl$ separator given 
two SL string constraints, using the $\tnft$ encoding and the result from Theorem~\ref{thm:sep-pptl}. 

	Consider two SL string constraints 
	$$\formula_1: (x, ay) \in \relof{\transducer_1} \wedge y \in (e+f)^*, \formula_2: (y, x) \in \relof{\transducer_2} \wedge x \in (e+f)^*$$
	The alphabet is $\Sigma=\{a, e, f\}$. The transducer $\Tt_1$ and $\Tt_2$ implement the following functions
		$\transducer_1(w)=w\upward$ for $w\in a(e+f)^*$, 
		 and $\transducer_2(w')=w'\downward$ for $w'\in (e+f)^*$. 
	
		Note that $\formula_1 \wedge \formula_2$ is not in SL, also it  has a non-benign chain as defined in~\cite{DBLP:conf/atva/AbdullaADHJ19}.
	We show that the languages of formulas $\formula_1$ and $\formula_2$ are separable by a $\pptl$ separator, thereby concluding that $\formula_1 \wedge \formula_2$ is unsatisfiable.
	
	To decide the separability and get this separator, we follow the procedure of Theorem~\ref{thm:sep-pptl}.
	\begin{itemize}
	\item 	 Encode the solutions of formulae to 2NFT.
	\item 
	 $\relof{\Aa_{\formula_1}} = \{(v \# w,u\# w) \mid u \text{~is arbitrary}, w \in (e+f)^*, \text{~and~} v \in \{aw\}\upward\}$. 
	 	 
	 \item $\relof{\Aa_{\formula_2}} = \{(w \# v,w\# u) \mid u \text{~is arbitrary}, w \in (e+f)^*, \text{~and~} v \in \{w\}\downward\}$. 
	 \item Each component in $\relof{\Aa_{\formula_1}}, \relof{\Aa_{\formula_2}}$ is of the form $\eta(x)\#\eta(y)$. 
\item  Decide the emptiness of $\relof{\Aa_{\formula_1}} \upward \cap \relof{\Aa_{\formula_2}}$.  This check reduces to the emptiness of $\mini(\relof{\Aa_{\formula_1}}) \upward \cap \relof{\Aa_{\formula_2}}$. The 
	set of minimal words of $\relof{\Aa_{\formula_1}}\upward$ is $M = \{(a\#, \#)\}$ and $M\upward \cap \relof{\Aa_{\formula_2}}$ is clearly empty. 	\end{itemize}
	Hence $M \upward =((\Sigma \cup \{\#\})^* a (\Sigma \cup \{\#\})^* \# (\Sigma \cup \{\#\})^*)   \times  ((\Sigma \cup \{\#\})^*\# (\Sigma\cup \{\#\})^*)$ is a $\pptl$ separator of 
	$\relof{\Aa_{\formula_1}}, \relof{\Aa_{\formula_2}}$. This gives the $\pptl$ separator $\Sigma^* a \Sigma^* \times \Sigma^*$ 
	for 	$\formula_1,$ $\formula_2$.
		
\end{example}

\bibliography{papers.bib}

\appendix
\newpage
\centerline{\Large{\bf{Appendix}}}

\section{Proof from Section~\ref{sec:prelims} (Proof of lemma~\ref{string-splitting})}
\label{app:string-splitting}
Given a string constraint $\formula$ over variables $x_1, \ldots, x_m$, one can construct two SL string constraints $\formula_1$ and $\formula_2$ s.t. $\formula$ and $\formula_1 \wedge \formula_2$ are equisatisfiable.

By definition $\formula$ is a conjunction of atomic regular membership and relational string constraints i.e. $\formula = \bigwedge_{i=1}^{n} \phi_i \wedge \bigwedge_{i=1}^k \varphi_i$ where $\phi_i$s are membership constraints and $\varphi_i$s are relational constraints. 
Since regular languages are closed under intersection, one can combine the constraints of the form $x_i \in L_1$ and $x_i \in L_2$ as $x_i \in L$ where $L = L_1 \cap L_2$. 
This way we have at most one membership constraint for each variable. 
We can add the conjunction of membership constraints to either $\formula_1$ or $\formula_2$ or both.

Next, we show that by introducing new variables corresponding to each relational constraint $\varphi$, we can partition $\formula$ into two SLs.
For each relational constraint $\varphi: (x_i, t_i) \in \relof{\transducer}$, we add a new variable $u_{\varphi}$ and add a formula $u_{\varphi} = x_i$ (or $(u_{\varphi}, x_i) \in \idf$, where $\idf$ is the identity) to $\formula_1$ and $(u_{\varphi}, t_i) \in \relof{\transducer}$ to $\formula_2$.
After iterating this procedure for all relational constraints $\varphi$, we claim that we have SL formulas in $\formula_1$ as well as $\formula_2$.

Observe that in $\formula_i$ for $i = 1, 2$, variables $x_1, \ldots, x_m$ appear only as part of $(i)$ membership constraint or $(ii)$ input to relational constraint. 
These variables are independent. 
Newly introduced variables $u_{\varphi_j}$ appear as only output of exactly one relational constraint in $\formula_1$ and $\formula_2$.
Hence the variables can be ordered as $u_{\varphi_j}$s having lower precedence than $x_i$s, and both formulas $\formula_1$ and $\formula_2$ preserve the SL syntax. 
Clearly $\formula$ and $\formula_1 \wedge \formula_2$ are equisatisfiable: If we have a solution for $x_i$s in $\formula$, then the same assignment works in $\formula_1 \wedge \formula_2$, in addition $u_{\varphi_j}$s are assigned the values based on values of $x_i$s.
On other side, given a solution for $\formula_1 \wedge \formula_2$, the same assignment of variables $x_i$s works for $\formula$.

\section{Proof from Section~\ref{sec:regularSL} (Proof of theorem~\ref{thm:reg-sep})}
\label{app:reg-sep}
In the following, we show that the $\rec$ separability problem is undecidable even for the subclass SL of  string constraints.
\begin{proof}
	The proof is done by reduction from the halting problem of Turing machines. 
	Consider a deterministic Turing machine $M$ whose set of states is $Q$ and tape alphabet is $\alphabet$, with $a, \#, \$ \notin \alphabet$. 
	Encode a configuration $w$ of $M$ as a word in $\alphabet^* \times (Q \times \alphabet)\times \alphabet^*$ and use $w \rightarrow_M w'$ to denote that $M$ can reach the configuration $w'$ from the configuration $w$ in one step. 
	Let $w_1$ denote the initial configuration of $M$. It can be easily seen that one can design a transducer that accepts the language $\{(w,w') \, |\, w\rightarrow_M w'\}$. 
	This construction can be extended to design two transducers $\transducer_1$ and $\transducer_2$ accepting respectively  the following two languages:
	\medskip

	\noindent
	$L_1= \{(w_1\#w_3\# \cdots\# w_{2k-1}\$ a^j, w_2\#w_4 \#\cdots w_{2k}\# \$ w'_1\# \cdots \#w'_j\#) \mid k {\geq} 1$, 
	$w_{2i-1} \rightarrow_M w_{2i}$  for $i \in [1,k]$ and $w'_1,\ldots,w'_j \in \alphabet^* \times (Q \times \alphabet)\times \alphabet^*$ 
	are arbitrary$
	\}$

	\medskip
	
	\noindent
	$L_2= \{(w_1\#w_3\# \cdots\# w_{2k-1}\$ a^{2j}, w_2\#w_4 \#\cdots w_{2k}\# \$ w'_1\# \cdots \#w'_j\#) \mid 
	k {\geq} 1$,  $w_{2i-2} = w_{2i-1}$ for  $i \in [2,k]$ and 
	$w'_1,\ldots,w'_j \in \alphabet^* \times (Q \times \alphabet)\times \alphabet^*$ are arbitrary$\}$
	
	\medskip
	\noindent Consider the SL formulae $\formula_1{=} (x,y {\$} y) \in \relof{\transducer_1}$ and  $\formula_1{=} (x,y {\$} y) \in \relof{\transducer_2}$. Then, it is easy to see that 
	\medskip

	\noindent
	$\langof{\formula_1}{=} \{(w_1\#w_3\# \cdots\# w_{2k-1}\$ a^k, w_2\#w_4 \#\cdots w_{2k}\#) \mid k {\geq} 1$,  
	$w_2,w_3,\ldots,w_{2k}\in \alphabet^* \times (Q \times \alphabet)\times \alphabet^*$, and $w_{2i-1} \rightarrow_M w_{2i}$ for $i \in [1,k]\}$

	\medskip

	\noindent
	$\langof{\formula_2}{=} \{(w_1\#w_3\# \cdots\# w_{2k-1}\$ a^{2k}, w_2\#w_4 \#\cdots w_{2k}\#) \mid k {\geq} 1$,  
	$w_2,w_3,\ldots,w_{2k}\in \alphabet^* \times (Q \times \alphabet)\times \alphabet^*, w_{2i-2} = w_{2i-1}$ for $i \in [2,k]\}$
	
	\medskip
	We prove that $\formula_1$ and $\formula_2$ are $\rec$ separable if and only if $M$ halts.
	
	\smallskip
	\noindent {\bf{Case 1.}} Suppose $M$ does not halt.
	Assume that there exists a $\rec$ separator $\bigcup_{i=1}^\ell L_i \times L'_i$ separating languages of $\formula_1$ and $\formula_2$. Let 
	$A_i, B_i$ respectively be DFA s.t. $L_i=\langof{A_i}, L'_i=\langof{B_i}$. 
	Let $n{-}1$ be the total number of states in $A_1, \dots, A_{\ell}$.
	Consider the 2-tuple 
	$$(v_1, v_2) = (w_1 \# u_2 \# u_3 \# \ldots \# u_{n!} \$, u_2 \# u_3 \# \ldots \# u_{n!+1} \#  )$$
	where $u_i \rightarrow_M u_{i+1}$ for all $i \in [2, n!]$ and $w_1 \rightarrow_M u_2$.
	
	\noindent Then $(v_1. a^{n!}, v_2) \in \langof{\formula_1}$ and $(v_1. a^{2\cdot (n!)}, v_2) \in \langof{\formula_2}$.
	Suppose that  $(v_1. a^{n!}, v_2) \in \langof{A_i} \times \langof{B_i}$ and that $\delta_{A_i}(q_{i0}, v_1) = p_i$ and $\delta_{A_i}(p_i, a^{n!}) = r_i$ (where $\delta_{A_i}$ denotes the transition function of $A_i$ and $q_{i0}$ is the initial state of $A_i$).
	Then by the pigeon hole principle on the number of states, we must encounter a loop and hence repeat at least one state while reading $a^{n!}$ from $p_i$ in $A_i$. Let  $s_i$  be the period (length of the loop) $1 \leq s_i \leq n$, such that $\delta_{A_i}(p_i, a^{n! + j_is_i}) {=} r_i$ for any $j_i \in \mathbb{N}$.
	In particular, if we choose $j_i = n! \div s_i$, we get $\delta_{A_i}(p_i, a^{2n!}) = r_i$.
	Thus, $(v_1. a^{2n!}, v_2)$ is also  in $\bigcup_{i=1}^\ell L_i \times L'_i$, and which is a contradiction.
	
	\smallskip
	\noindent  \textbf{Case 2.} Suppose $M$ halts in $n > 1$ steps. Then  $(w_1 = u_1) \rightarrow_M u_2 \rightarrow_M \ldots \rightarrow_M u_{n} \nrightarrow_{M} u_{n+1}$.
	We construct the $\rec$ separator for $\langof{\formula_1}$ and $\langof{\formula_2}$ as follows. 
	Let $\config$ represent any configuration of $M$, and let  $R_1{=}(\config \#)^* \config~ \$ a^*$, 
	$R_2{=} (\config \#)^* \config$.
	Then $R_1, R_2$ are regular languages.
	
	We first consider the case when we have $(u, v) \in R_1 \times R_2$ s.t.  number of configurations in both $u, v$  is less than $n$. 
	We construct  $L_i$ which checks whether the number of configurations in $u$ is $i$, 
	and the number of $a$'s in $u$ is also equal to $i$. Likewise, we construct 
	language $L'_i$ which checks if the number of configurations in $v$ is $i$. 
	Then $\bigcup_{i=1}^n L_i \times L'_i$ does not intersect with  $\langof{\formula_2}$ and contains 
	$L(\formula_1)$. 
	Thus for each $i \in [1, n]$,
	$$L_i = \{w_1 \# w_3 \# w_5 \#\ldots \# w_{2i-1} \$ a^i \mid  w_1, w_3, w_5, \ldots, w_{2i-1} \in \Sigma^* \times (Q \times \Sigma) \times \Sigma^*\}$$
	$$L'_i = \{w_2 \# w_4 \# w_6 \#\ldots \# w_{2i}\mid w_2, w_4, w_6, \ldots, w_{2i} \in \Sigma^* \times (Q \times \Sigma) \times \Sigma^* \}$$
	Now consider the case when the number of configurations is more than $n$ in $u$ or $v$, for $(u,v) \in R_1 \times R_2$. 
	It suffices to choose non-deterministically an index $j \leq n$ and check if $w_{2j}$ and $w_{2j+1}$ are not equal and all the configurations before $w_{2j+1}$ and $w_{2j}$ follow the run of $M$. 
	It is sufficient to check up to $n$ configurations since the  run in $M$ has only $n$ configurations in the sequence. For 
	$1 \leq j \leq n$, consider the subset 
	$\mathcal{R}_j$ of $R_1 {\times} R_2$ defined as 
	$\mathcal{R}_j{=} \{u_1 \# u_2 \# \ldots \# u_{j-1} \# w_{2j+1} (\# \config)^* \$ a^*\} \times 
	\{u_2 \# u_3 \# \ldots \# u_{j-1} \# u_{j}  (\# \config)^*\# \}$ 
	such that $u_{j} \neq w_{2j+1}$.  
	$\mathcal{R}_j$  is regular  since the length of configurations $u_j$ and $w_{2j+1}$ are  bounded. $\bigcup_{j=1}^n\mathcal{R}_j \cap L(\formula_2)=\emptyset$.

	Now consider a pair $(u,v) \in R_1 \times R_2$, where  there is no violation as described above in the equality between $u_j$ and $w_{2j+1}$ until $j = n$. Then we cannot continue finding a successor for $u_n$, as a result of which, such pairs  $(u, v) \notin L(\formula_1)$, and hence 
	cannot be considered in separator.

	This way we remove all words in $\langof{\formula_2}$ and accept all the words in $\langof{\formula_1}$. 
	Hence $(\bigcup\limits_{i=1}^{n}L_i \times L'_i) ~\cup~ (\bigcup\limits_{j=1}^{n}\mathcal{R}_j)$ is a $\rec$ separator. 
\end{proof}

\section{Proofs from Section~\ref{sec:ptlSL} }
\subsection{ Proof of Lemma~\ref{lem:ompl-ft}}
\label{app:ompl-full-trio}
	We show that the class of ordered multi pushdown languages (OMPL) form a full trio.
\begin{proof}

	Let $L$ be an OMPL over alphabet 
	$\Sigma$ and let $\Aa$ be the OMPA accepting $L$. We now show the 
	effective closure wrt the three properties. 
	
	\noindent $\bullet$ For $B \subseteq \Sigma$, 
	construct an OMPA from  $\Aa$ by replacing transitions 
	on input symbols $a \in \Sigma \backslash B$ with $\epsilon$.  
	The resulting OMPA accepts the $B$-projection of $L$. For $B \nsubseteq
	\Sigma$, the OMPA accepting the emptyset suffices. 
	
	\noindent $\bullet$  For the $B$-upward closure of $L$, add extra transitions to $\Aa$ as loops on each state on any input symbol from $B$, with no push/pop operations. 
	The resultant OMPA will accept the $B$-upward closure of $L$. 
	
	\noindent $\bullet$  The third property is the closure wrt intersection of regular languages. 
	It is well-known that OMPL are effectively closed wrt  intersection with regular languages. 
	
	This shows that OMPLs form a full trio.
\end{proof}

\subsection{ Proof of Lemma~\ref{lem:omplsemlin}}
\label{app:omplsemlin}
In the following, we present the proof to show that each OMPL $L$ has a semilinear Parikh image. 	
\begin{proof}
	To see this, first note that using the result of \cite{DBLP:journals/ijfcs/AtigBH17}, 
	one can construct a $Dn$-grammar equivalent to the OMPL. (We refer the reader to \cite{DBLP:journals/ijfcs/AtigBH17} for the definition of $Dn$-grammar.)
	Next, using the result from \cite{DBLP:journals/ijfcs/BreveglieriCCC96}, we 
	know that for each $Dn$-grammar $G$ there is an  underlying context-free grammar $G'$ 
	s.t. for each $w \in L(G)$, there is a $w' \in L(G')$ (and conversely) such that  $w$ and $w'$ have the same Parikh image.  Thus, the language generated by the $Dn$-grammar $G$ and the language generated by the context-free grammar $G'$ have the same Parikh image. 
	Since the Parikh image of $L(G')$ is semilinear \cite{salomaa}
	we obtain      that the language of each $Dn$-grammar is semilinear, giving us the result (since $Dn$-grammar are equivalent to the OMPL).
\end{proof}
\subsection{Details for Lemma \ref{lem-OMPA-rel}}

\begin{proof}
	For intuition of this proof, readers can refer the proof provided in Lemma~\ref{lem-OMPA-rel}.
	Formally, set of states of the OMPA $\Aa_i$ is states of $\transducer_i$ in addition with the states $\{q^{init}_i, q^2_{i}, q^{4}_i, q^{final}_i\}$ to distinguish phase 1, 2, and 4. 
	Initial state is defined as $q^{init}_i$.
	
	Phase 1: Being in state $q^{init}_i$, content of first $(n-i)$ stacks are moved to next $(n-i)$ stacks i.e. content of stack $j$, is moved to stack $n-i+j$ for all $1 \leq j\leq n-i$. 
	To achieve this, for all $j$, $1 \leq j \leq n-i$, for all $a \in \Sigma$, we have transition $$(q^{init}_i, \bot((j-1) \text{~times}), a, \epsilon, \ldots, \epsilon) \to^{\epsilon} (q^{init}_i, \epsilon((n-i+j-1) \text{~times}), a, \epsilon, \ldots, \epsilon)$$
	 \begin{center}
		\includegraphics[scale=0.25]{phase1.pdf}
	\end{center}
	
	Phase 2: After phase 1, stacks indexed $1$ to $n-i$ are empty and next $n-i$ stacks contain the valuation of variables $x_{i+1}$ to $x_n$ in reverse. 
	To distinguish this phase, we add a no operation transition from $q^{init}_i$ to $q^{2}_i$. 
	For this, we have a transition $$(q^{init}_i, \bot((n-i) \text{~times}), \epsilon, \ldots, \epsilon) \to^{\epsilon} (q^{2}_i, \epsilon, \ldots, \epsilon)$$
	In this phase, content of non-empty stacks $(n-i+1), \ldots, (2n-2i)$ is moved to last $(n-i)$ stacks indexed $(2n-2i+|\term_i|+2)+1, \ldots, (2n-2i+|\term_i|+2)+(n-i)$. 
	Also, if any of the variable from $x_{i+1}, \ldots, x_n$ is present in $\term_i$ at position $j$ then the content of the corresponding stack is pushed to the stack $(2n-2i)+j$.
	It is done with the following transitions.
	Forall $k$, $1 \leq k \leq (n-i)$, for all $a \in \Sigma$, we have  
	
	$\begin{array}{ll}
	(q^2_i, \bot(n-i+k-1 \text{~times}), a, \epsilon, \ldots, \epsilon) \to^{\epsilon} (q^2_i, &\epsilon ((2n-2i) \text{~times}), \\
	& a \text{~if~} \term_i[1] = x_{i+k}; \text{~else } \epsilon, \\
	& a \text{~if~} \term_i[2] = x_{i+k}; \text{~else } \epsilon, \\
	&\vdots \\
	& a \text{~if~} \term_i[|\term_i|] = x_{i+k}; \text{~else } \epsilon, \\
	& \epsilon((2+k-1)\text{~times}), \\
	& a , \epsilon, \ldots, \epsilon)
	\end{array}$
	 \begin{center}
		\includegraphics[scale=0.25]{phase2.pdf}
	\end{center}

	Phase 3: After phase 2, we have valuation of $\term_i[j]$ present in stack $(2n-2i+j)$. 
	Also, last $(n-i)$ stacks indexed $(2n-2i+|\term_i|+3)$ to $(3n-3i+|\term_i|+2)$ contains the valuation of variables $x_{i+1}, \ldots, x_n$ respectively. 
	In this phase, we evaluate the valuation of $x_i$ based on valuation of $\term_i$, $\eta(\term_i)$. 
	Observe that $\eta(\term_i)$ is stored in the sequence in stacks $(2n-2i+1), \ldots, (2n-2i)+|t_i|$
	and all the stacks before these, are empty.
	Thanks to OMPA, one can pop the content of these stacks one after another, and find the output of these words produced by $\transducer_i$. 
	Output produced is pushed into the stack indexed $(2n-2i+|\term_i|+1)$. 
	To begin this process, we use a no operation transition from $q^2_i$ to $p_0$ where $p_0$ belongs to initial states of $\transducer_i$:
	$$(q^2_i, \bot(2n-2i \text{~times}), \epsilon, \ldots, \epsilon) \to^{\epsilon} (p_0, \epsilon, \ldots, \epsilon)$$
	where $p_0$ is in set of initial state of $\transducer_i$.
	To simulate $\transducer_i$ on content of these $|\term_i|$ stacks, for every transition $(p, (b, a), p')$ of $\transducer_i$, and for all $j$ s.t. $1 \leq j \leq |\term_i|$ we have following corresponding transitions in $\Aa_i$:
	$$(p, \bot(2n-2i+(j-1) \text{~times~}), a, \epsilon, \ldots, \epsilon) \to^\epsilon (p', \epsilon (2n-2i+|\term_i| \text{~times}), b, \epsilon, \ldots, \epsilon)$$
	
	\begin{center}
		\includegraphics[scale=0.25]{phase3.pdf}
	\end{center}
	
	Phase 4: 
	After finishing computing value of $x_i$, if the state of $\transducer_i$ reached is a final state, $\Aa_i$ moves to state $q^4_{i}$ to start the process of this phase.
	$$(p_f, \bot (2n-2i+|\term_i| \text{~times}), \epsilon, \ldots, \epsilon) \to^{\epsilon} (q^4_i, \epsilon, \ldots, \epsilon)$$
	We have already computed the valuation of variable $x_i$ in previous phase.
	But notice that it is stored in stack $(2n-2i+|t_i|+1)$ in reverse order i.e. the top of the stack is the last letter of value of $x_i$, while the bottom represents the first letter of value of $x_i$.
	So there is a next stack indexed $(2n-2i+|\term_i|+2)$ reserved to store its reverse. 
	The operation is carried out using following transition, for all $a \in \Sigma$:
	$$(q^4_i, \bot(2n-2i+|\term_i| \text{~times~}), a, \epsilon, \ldots, \epsilon) \to^\epsilon (q^4_i, \epsilon (2n-2i+|\term_i|+1 \text{~times}), a, \epsilon, \ldots, \epsilon)$$
	This operation is applied until the stack $(2n-2i+|\term_i|+2)$ becomes empty.
	Thus at the end of this phase, we have valuation of $x_{i},\ldots, x_n$ in last $(n-i+1)$ stacks i.e. indexed $(2n-2k+|\term_i|+2), \ldots, (3n-3k+|\term_i|+2)$, and $\Aa_i$ moves to accepting state $q^{final}_i$.
	$$(q^4_i, \bot (2n-2i+|\term_i|+1 \text{~times}), \epsilon, \ldots, \epsilon) \to^{\epsilon} (q^{final}_i, \epsilon, \ldots, \epsilon)$$	
	\begin{center}
		\includegraphics[scale=0.25]{phase4.pdf}
	\end{center}
	
\end{proof}

\subsection{ Proof of Lemma~\ref{SL2OMPA}}

Let $\formula_1$ and $\formula_2$ be  two SL formulae with $x_1, \ldots,x_n$ as their set of variables. Let   $\Aa_1$ and $\Aa_2$ be two OMPAs such that $\Lang{\Aa_1}{=}\encode{(\langof{\formula_1})}$ and  $\Lang{\Aa_2}{=}\encode{(\langof{\formula_2})}$.  $\formula_1,$  $\formula_2$ are $n$-$\ptl$ separable iff  $\Aa_1,\Aa_2$ are $\ptl$-separable.

\begin{proof}
 Lemma \ref{lem:sl-ompa} gives OMPAs $A_{\formula_1}$ and $A_{\formula_2}$ s.t. 
$\Lang{\Aa_{\formula_i}}{=}\encode({\langof{\formula_i}})$. 
Let  
$L$ be a $\ptl$ s.t. $\Lang{\Aa_{\formula_1}} \subseteq L$ and $\Lang{\Aa_{\formula_2}} \cap L=\emptyset$. Both $\Lang{\Aa_{\formula_1}}$ and 
$\Lang{\Aa_{\formula_2}}$ are sets of words of the form $\{w_1 \# w_2 \dots \# w_n \mid w_i \in \Sigma^*\}$, and hence the $\ptl$ $L$ is a finite Boolean combination of languages from $(\Sigma^*\# \Sigma^* )^*$. 
Let $S  \subseteq (\Sigma^* \# \Sigma^* )^*$ be a $\ptl$ consisting  
of all words having exactly $n-1$ $\#$'s.  $S$ is a finite Boolean combination 
of piece languages $\Ll_i$ (assume $1 \leq i \leq p$ for some $p \in \mathbb{N}$)  where each $\Ll_i$ has the form 
$L_{i1} \# L_{i2} \# \dots \#L_{in}$. That is, $L_{i1}, \dots, L_{in}$ are all piece languages over $\Sigma$. It suffices to consider $L=S$.

Given the $\ptl$ $S$ over $\Sigma \cup \{\#\}$, define the 
language $S'$ over $\Sigma^n$, as a finite Boolean combination 
of languages $L_{i1} \times \dots \times L_{in}$ for $1 \leq i \leq p$. 
Since $L_{i1}, \dots, L_{in}$ are piece languages, $S'$ is a $\ptl$ over $\Sigma^n$. 
We show that $S'$ separates $\langof{\formula_1}$ and 
$\langof{\formula_2}$. Consider a word $(w_1, \dots, w_n) \in S'$. 
Then indeed, $w_1\#w_2\#\dots \#w_n \in S$ by definition. 
Since $S$ separates $\Lang{\Aa_{\formula_1}}$ and $\Lang{\Aa_{\formula_2}}$, 
we know that 
$w_1\#w_2\#\dots \#w_n \in \Lang{\Aa_{\formula_1}}$, but not in 
$\Lang{\Aa_{\formula_2}}$. By definition of $\encode{}$, this gives 
$(w_1, \dots, w_n) \in L(\formula_1)$, but not in 
$L(\formula_2)$. Since $S'$ is a $\ptl$ over the $n$-tuple alphabet, 
it is an $n$-$\ptl$ separating $\langof{\formula_1}$ and 
$\langof{\formula_2}$. 

The converse follows on similar lines.
\end{proof}

\section{Proofs from Section~\ref{sec:pptlSL}}

\subsection{Proof of Theorem~\ref{thm:sep-pptl}}\label{app:sep-pptl}
	Two languages $I$ and $E$ from a class $\Cc$ 
	are $n$-$\pptl$ separable  
	iff $I {\uparrow} \cap E = \emptyset$ iff $I \cap E \downward = \emptyset$.
\begin{proof}
	The proof is done for the class of $\pptl$ however it can  easily be extended to languages over $n$-tuples words. 
	
	Let $I, E$ be languages over a finite alphabet $\Sigma$. 
	Assume $I$ and $E$ are separable by the $\pptl$ $S$. 
	Then  $I \subseteq S$ and $S \cap E = \emptyset$. Since $S$ is  upward closed, $I\upward \subseteq S$ and hence 
	$I \upward \cap E = \emptyset$.
	
	Assume $I\upward \cap E = \emptyset$. We claim that $I\upward $ is a $\pptl$-separator. We have already that  
	$I \subseteq I\upward$ and $I \upward \cap E = \emptyset$. 
	Now we will show that $I\upward$ is $\pptl$:  $\mini(I\upward)$ is a finite 
	set since the subword ordering $\preceq$  is a wqo 
	on $\Sigma^*$. Let $\mini(I\upward)=\{w_1, \dots, w_k\}$. Then $I\upward $ can be defined as a finite union of $\{w_i\} \upward$ for $i \in \{1,2, \ldots, k\}$. For $w_i=a_{i_1} \dots a_{i_{m_i}}$, with $a_{i_j} \in \Sigma$,  
	$\{w_i\} \upward$ is the piece language $\Sigma^*a_{i_1}\Sigma^* \dots \Sigma^*a_{i_{m_i}}\Sigma^*$. Thus, 
	$I\upward $ is indeed in $\pptl$, being the finite union of piece languages.  		
	
	To see $I {\uparrow} \cap E = \emptyset$ iff $I \cap E \downward = \emptyset$. Let $w \in I {\uparrow} \cap E$. Let $w' \preceq w$ 
	be s.t. $w' \in I$. Then $w' \in I \cap E \downward$. The converse works similarly.
\end{proof}

\subsection{$\pptl$ separability of PDA (Proof of Theorem~\ref{sep:pda}) }
\label{app:PDA-PPTL}
	We show that $\pptl$ separability of  context free languages is \pspacec{}. 
	Context free languages are equivalently represented by pushdown automata(PDA) or context free grammars(CFG) and the conversion from CFG to PDA and PDA to CFG is in polynomial time. 
	We recall the definition of context free grammar briefly here.

\smallskip

\noindent
A {\it Context Free Grammar} is defined by a tuple $G = (N, T, R, S)$ where $N$ and $T$ are finite set of non-terminals and terminals respectively, $S$ is an initial non-terminal.
$R$ is a finite set of production rules of the form $(i) A \to BC$, $(ii) A \to a$, or $(iii) S \to \epsilon$ where $A \in N$, $B, C \in N \setminus \{S\}$ and $a \in T$.
We define derivation relation $\Rightarrow$ as following: Given $u_1, u_2 \in (N \cup T)^*$, $u_1 \Rightarrow u_2$ iff there exists $A \to w$ in the production rules set $R$ such that $u_1 = vAv'$ and $u_2 = vwv'$ for some $v,v' \in (N \cup T)^*$.
Let $\Rightarrow^+$ denote one or more application of derivation relation $\Rightarrow$.
A string $w$ belongs to the language generated by grammar $G$ iff it can be generated starting from $S$, applying derivation rules one or more times.
Hence, the language generated by the grammar $G$ is $\Lang{G} = \{w \in T^* ~|~ S \Rightarrow^+ w\}$.

\smallskip
\noindent
{\bf Proof of Theorem~\ref{sep:pda}}
We show that $\pptl$ separability problem of PDA is \pspacec{}. 
Given PDA $\Aa_1$ and $\Aa_2$, consider their corresponding CFGs $G_1, G_2$.
$\Lang{G_1} \upward \cap \Lang{G_2} \neq \emptyset$ iff $\Lang{G_1} \upward \cap \Lang{G_2} \downward \neq \emptyset$. 
If there is a witness of $\Lang{G_1} \upward \cap \Lang{G_2}$, the same witness trivially works for $\Lang{G_1} \upward \cap \Lang{G_2} \downward$. 
Now suppose $w \in \Lang{G_1} \upward \cap \Lang{G_2} \downward$, then there exists $w' \in \Lang{G_2}$ s.t. $w \subword w'$. Also, $w' \in \Lang{G_1} \upward$. Hence, we have a witness of the non-emptiness of $\Lang{G_1} \upward \cap \Lang{G_2}$.	
We  prove that the non-emptiness of $\Lang{G_1} \upward \cap \Lang{G_2} \downward$ is $\mathsf{PSPACE-complete}$.

\noindent{$\mathsf{PSPACE}$-membership}.  The non-emptiness check of $\Lang{G_1} \upward \cap \Lang{G_2} \downward$ can be seen to be in $\mathsf{PSPACE}$ as follows. 	We construct  PDAs $A_{G_1 \upward}$, $A_{G_2 \downward}$ respectively for $\Lang{G_1} \upward$ and $\Lang{G_2} \downward$. 
This construction takes polynomial time;  moreover, 	$A_{G_1 \upward}$, $A_{G_2 \downward}$ 
use only a bounded stack which is polynomial in the size of the CFG ($G_1$ or $G_2$). 
This fact is rather easy to see for $A_{G_1 \upward}$ (Lemma~\ref{lem:upward}), but is 
quite involved for $A_{G_2 \downward}$ (Lemma \ref{lem:downward}). Assuming the construction 
of $A_{G_1 \upward}$, $A_{G_2 \downward}$, 	we give an $\mathsf{NPSPACE}$ algorithm as follows. 
Guess a word $w$, one symbol at a time, and run $A_{G_1 \upward}$, $A_{G_2 \downward}$  in parallel. 
Since both the PDAs require only a polynomially bounded stack size, we need polynomial space to store 
the information pertaining to the states, stacks and the current input symbol.  
If 
$A_{G_1 \upward}$, $A_{G_2 \downward}$ accept $w$, we are done. 
The $\mathsf{PSPACE}$-membership follows from Savitch's theorem. 

\noindent{$\mathsf{PSPACE}$-hardness}.  We reduce the halting problem of  a Linear Bounded Turing Machine (LBTM) to our problem. Given a 
LBTM $M$ and a word $w$, we  construct PDAs $P_1$ and $P_2$ such that $\Lang{P_1} \upward \cap \Lang{P_2} \downward\neq \emptyset$  iff $M$ accepts $w$. 
Our construction is in Appendix \ref{app:pspace-hard}. 

\subsection*{Bounded Stack Size of $A_{G \upward}$, $A_{G \downward}$}
%
We first consider the case of $A_{G \upward}$. The proof follows by examining the height of derivation trees of words in $\mini(\Lang{G})$, and 
showing that they have a bounded height. This bound gives the height of the stack size 
in $A_{G \upward}$. The proof of Lemma \ref{lem:upward} is in Section~\ref{app:up}. 

\begin{lemma}\label{lem:upward}
	Given a CFG $G = (N, T, R, S)$, the PDA $A_{G \upward}$ can be constructed in polynomial time, using a stack whose height is 
	polynomial in the size of $G$. $\Lang{A_{G \upward}} = \Lang{G} \upward$.
\end{lemma}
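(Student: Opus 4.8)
The plan is to exploit the fact that $\Lang{G}\upward=\mini(\Lang{G})\upward$, so that it suffices to build a $\PDA$ that (i) can generate every \emph{minimal} word of $\Lang{G}$ and (ii) then pads its generated word with arbitrary extra input symbols to realise the upward closure. The whole difficulty is concentrated in controlling the stack height, and the key observation I would establish first is that minimal words admit \emph{shallow} derivation trees.

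First I would put $G$ in the normal form already used in the statement (productions $A\to BC$, $A\to a$, and $S\to\epsilon$), which is Chomsky normal form; this costs only polynomial time and, crucially, it has neither unit- nor $\epsilon$-productions apart from the single start rule. The height bound I want is that \emph{every $w\in\mini(\Lang{G})$ has a derivation tree of height at most $|N|$}. The argument is a pumping-\emph{down}: if some root-to-leaf path in a derivation tree of $w$ visited a nonterminal $A$ twice, giving $A\Rightarrow^{+}uAv$ with the inner occurrence strictly below the outer one, then replacing the outer $A$-subtree by the inner one produces a word $w'\in\Lang{G}$ with $w'\subword w$. In the chosen normal form each binary production forces at least one terminal on each side of any nested occurrence, so $uv\neq\epsilon$, whence $w'$ is a \emph{strict} subword of $w$, contradicting minimality of $w$. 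Thus no nonterminal repeats along any path, and the tree height is at most $|N|$.

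Next I would use the standard top-down (leftmost-derivation) translation of $G$ into a $\PDA$, pushing right-hand sides onto the stack and matching terminals against the input, but with two modifications. (a) Before matching a produced terminal, the automaton may consume (skip) arbitrarily many input symbols; this makes it accept an input $x$ exactly when the simulated derivation yields some $w\in\Lang{G}$ with $w\subword x$. (b) The stack height is capped at $|N|+1$ (equivalently, each stack symbol carries its depth and expansion past depth $|N|$ is forbidden). The point is that during a pre-order traversal of a binary tree of height $h$ the stack holds at most one pending right sibling per ancestor, so its height never exceeds $h$; since every minimal word has $h\le|N|$, capping the stack at $|N|+1$ still allows all minimal words to be produced. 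Correctness then follows in both directions: every accepted $x$ has a generated $w\in\Lang{G}$ with $w\subword x$, so $x\in\Lang{G}\upward$; conversely any $x\in\Lang{G}\upward$ dominates some $w\in\Lang{G}$, hence dominates a minimal $w_{0}\in\mini(\Lang{G})$, which the capped automaton generates while skipping the remaining symbols of $x$. The stack alphabet is $N\cup T$ annotated with a depth in $[0,|N|]$, and the transitions are in bijection with the rules plus the skip-moves, so $A_{G\upward}$ is built in polynomial time with stack height linear in the size of $G$.

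The hard part will be the height bound of the second paragraph, and in particular making the pumping-down step watertight: one must ensure that the two occurrences of $A$ are genuinely nested and that $uv\neq\epsilon$ is forced by the normal form, while correctly accounting for the isolated start rule $S\to\epsilon$. Once $uv\neq\epsilon$ is secured, the rest is bookkeeping. A secondary subtlety is to confirm that capping the \emph{stack} (rather than the tree height directly) discards no minimal word — this is exactly what the ``one pending right sibling per ancestor'' bound guarantees.
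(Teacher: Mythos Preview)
Your proposal is correct and follows essentially the same approach as the paper: bound the derivation-tree height of minimal words by $|N|$ via a pumping-down argument, enforce this bound by annotating nonterminals (equivalently, stack symbols) with their depth, and realise the upward closure by adding skip-moves (self-loops on every input symbol). The only cosmetic difference is that the paper routes the construction through an intermediate depth-annotated grammar $G'$ before converting to a $\PDA$, whereas you describe the depth-capped $\PDA$ directly.
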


\begin{lemma}\label{lem:downward}
	Given CFG $G = (N, T, R, S)$, one can construct PDA $A_{G \downward}$ in polynomial time which uses polynomially bounded stack for computation such that $\Lang{A_{G \downward}} = \Lang{G} \downward$.
\end{lemma}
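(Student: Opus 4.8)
The plan is to reduce the statement to constructing, in polynomial time, a pushdown automaton recognising the \emph{regular} language $\Lang{G}\downward$ while using only a polynomially bounded stack. As a first step I would preprocess $G$: assuming the normal form $A \to BC$, $A \to a$, $S \to \epsilon$, I compute the reachable and productive non-terminals (both in polynomial time) and discard the useless ones. I then form the \emph{downward-closure grammar} $G'$ by keeping all rules of $G$ and adding, for every rule $A \to a$, the erasing rule $A \to \epsilon$. A routine argument shows $\Lang{G'} = \Lang{G}\downward$: every $G$-derivation is a $G'$-derivation, so $\Lang{G} \subseteq \Lang{G'}$; the new $\epsilon$-rules allow dropping any chosen terminal occurrences, so $\Lang{G'}$ is downward closed; and replacing each erasing step $A \to \epsilon$ that originated from an $A \to a$ back by $A \to a$ turns a $G'$-derivation of $w$ into a $G$-derivation of some $w'$ with $w \subword w'$, whence $w \in \Lang{G}\downward$.

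The subtle point, and the reason the lemma is nontrivial, is that $G'$ by itself does \emph{not} give a bounded stack: self-embedding recursion, such as a left-recursive rule $S \to S A$, forces every top-down parse of a long word to build a stack linear in the \emph{input} length, even when $\Lang{G}\downward$ is as simple as $a^{*}$. Hence $A_{G\downward}$ cannot be the naive parser of $G$ or $G'$; it must exploit the regularity of the downward closure. The idea I would pursue is to analyse the recursion through the strongly connected components (SCCs) of the non-terminal dependency graph and to build the PDA by induction over the resulting DAG in reverse topological order. A non-recursive (trivial) SCC is unfolded directly. For a recursive SCC, pumping collapses its downward closure to a shallow shape, namely a simple regular expression: a finite union of concatenations of stars $\Sigma_B^{*}$ over the letter-sets reachable inside the component, interleaved with the finitely many letters the component can still force in a fixed order (for example $\{a^{n}b^{n}\}\downward = a^{*}b^{*}$). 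Such shapes are recognised with a \emph{bounded} stack, and composing them along the SCC DAG means the stack only ever records the nesting of components currently being expanded, so its possibly exponential SRE description is encoded compactly.

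Since no SCC is entered twice along a single branch of this controlled expansion, the stack height is bounded by the number of SCCs, hence by $|N|$, which gives the polynomial bound; and since each component contributes a description of size polynomial in $G$, the whole automaton is produced in polynomial time. The hard part will be the structural lemma itself: characterising the downward closure of a recursive SCC and proving it admits a bounded-stack recogniser, where one-sided (linear) recursion, two-sided recursion over a common alphabet, and two-sided recursion forcing an ordered shape as in $a^{*}b^{*}$ must all be handled uniformly, and where one must verify that the collapsed contribution of each SCC is \emph{equal} to its true downward closure (invoking Higman's lemma to guarantee regularity and finiteness of the relevant minors $\mini(\cdot)$). Establishing this lemma and then showing that the inductive composition preserves both correctness and the $O(|N|)$ stack bound is where the real work lies; the passage through $G'$ and the surrounding emptiness arguments are comparatively routine.
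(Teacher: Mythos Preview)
Your plan is essentially the paper's approach: the paper also decomposes along the non-terminal dependency graph, and its equivalence $B =_1 A$ (meaning $A \Rightarrow^+ \cdots B \cdots$ and $B \Rightarrow^+ \cdots A \cdots$) is exactly your SCC relation, with the induction descending along the quotient DAG and the stack bound $O(|N|)$ coming from never re-entering an SCC on a branch. Your preliminary passage through the grammar $G'$ with added $\epsilon$-rules is harmless but unused in the paper.

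The one concrete ingredient you are missing, and which you correctly flag as ``where the real work lies'', is already a known result the paper simply imports: Courcelle's characterisation of $\Lang{G,A}\downward$. The refinement you do not yet have is a case split \emph{inside} a recursive SCC on whether $A <_2 A$, i.e.\ whether $A \Rightarrow^+ mAm'Am''$ for some $m,m',m''$. If $A <_2 A$ then the SCC collapses completely to $\alpha(\Lang{G,A})^*$. If $A$ is recursive but $A \not<_2 A$ (so any sentential form derived from $A$ contains at most one non-terminal of the SCC), then for each rule $p\colon B \to m$ with $B =_1 A$ one writes $m = R_1(p)\,C\,R_2(p)$ around the unique $C =_1 A$ (or $m = R_0(p)$ if no such $C$ occurs), and Courcelle's formula gives the exact ``shallow shape'' you were reaching for:
\[
\Lang{G,A}\downward \;=\; \Bigl(\bigcup_p \alpha(\Lang{G,R_1(p)})\Bigr)^{\!*}\;\Bigl(\bigcup_p \Lang{G,R_0(p)}\downward\Bigr)\;\Bigl(\bigcup_p \alpha(\Lang{G,R_2(p)})\Bigr)^{\!*}.
\]
Since $R_0(p),R_1(p),R_2(p)$ contain only non-terminals from strictly lower SCCs, this is precisely what makes the recursion well-founded and the stack polynomially bounded. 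In short, your outline is correct and matches the paper; the paper just replaces your informal ``structural lemma'' by a citation to Courcelle, which supplies both the $<_2$ case distinction and the explicit three-part regular form.
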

\begin{proof}
	First, we compute a CFG $G'$ using $G$, and then construct $A_{G \downward}$ using $G, G'$. The construction 
	of $G'$ is described below. 	We recall the approach described in~\cite{DBLP:journals/eatcs/Courcelle91} to compute downward closure, where 
	a regular expression is computed for $\Lang{G} \downward$. 
	
	For every language $L$, let $\alpha(L)$ be the set of terminal symbols occurring in $L$ (hence $\alpha(L) = \emptyset$ iff $L \subseteq \{\epsilon\}$).
	For $L, L' \subseteq T^*$ we have:
	(i) $\alpha(L \cup L') = \alpha(LL') = \alpha(L) \cup \alpha(L')$, (ii) $(L \cup L') \downward = L \downward \cup L' \downward$, and 
	(iii) $(LL') \downward = L\downward ~L'\downward$. 
	For $m \in (N \cup T)^*,$  let $L(G, m) $ denote the language generated by $G$, starting from the word $m$. For every $A, B \in N$,  define:	
	\begin{enumerate}
		\item $B <_1 A$ iff $A \Rightarrow^+_{G} mBm'$ for some $m,m' \in (N \cup T)^*$,
		\item $B <_2 A$ iff $A \Rightarrow^+_{G} mBm'Bm''$ for some $m,m',m'' \in (N \cup T)^*$,
		\item $B =_1 A $ iff $B <_1 A <_1 B$.
	\end{enumerate}
	A proof of the first implication in Claim \ref{claim:A2} is in Appendix \ref{app:claimm}, the second implication has a similar proof. 	
	\begin{claim}\label{claim:A2}
		$A <_2 A \Rightarrow L(G, A) \downward = (\alpha(L(G, A)))^*$, 
		$A =_1 B \Rightarrow L(G, A) \downward = L(G, B) \downward$
	\end{claim}
	%
	%
	\noindent{\bf {Computing $L(G, A) \downward$}}. Now, we explain how $L(G, A) \downward$ can be computed for any given $A \in N$.
	If $A <_2 A$, then claim~\ref{claim:A2} yields the answer.
	Otherwise, we compute $L(G, A) \downward$ in terms of $L(G, B) \downward$ for $B <_1 A$ and $B \neq_1 A$, assuming 
	$L(G, B) \downward$ is given by previously computed rules for $L(G, B)$. 	Let $p: A \to m$ be a production rule.  We define the words $R_0(p), R_1(p)$ and $R_2(p)$ as follows, depending on the production rule  $p: A \to m$. 
	
	\noindent \textbf{First case:} $m$ does not contain any non-terminal $B =_1 A$. We let $R_0(p) := m$, and $R_1(p)$ and $R_2(p)$ be the empty words.  
	
	\noindent \textbf{Second case:} $m$ contains a unique non-terminal $B$ with $B =_1 A$ and $m = m'Bm''$. We let $R_1(p) := m'$ and $R_2(p) := m''$. Since we assume that $A \nless_2 A$, the word $m$ cannot contain two occurrences of non-terminals $=_1 A$. 
	In this case,  $R_0(p)$ is the empty word.
	
	\begin{claim}\label{claim:Adownward}
		For every $A$ such that $A \nless_2 A$, we have:
		$$L(G, A) \downward = (\cup \;\alpha(L(G, R_1(p))))^* ~(\cup L(G, R_0(p))\downward ) ~(\cup\; \alpha(L(G, R_2(p))))^*$$ 	where the union extend to all production rules $p$ with lefthand side $B$ such that $B =_1 A$.
	\end{claim}
	
	Since the words $R_0(p), R_1(p)$ and $R_2(p)$ contain only non-terminals $C$ with $C <_1 A$ and $C \neq_1 A$, we have achieved our goal. By this we end the brief recall of the the approach described in~\cite{DBLP:journals/eatcs/Courcelle91} to compute downward closure.
	
	\noindent Based on the above facts, we can construct intermediate CFG $G' = (N', T', R', S\downward)$, which helps to construct PDA for $\Lang{G} \downward$. 	For each non-terminal $A \in N$, we introduce five non-terminals $A \downward, A_l, A_m, A_r$ and $A_{\alp}$ as well as dummy terminals $a_l$, $a_r$ and $a_{\alp}$ in $G'$.  Construction of $G'$ is shown in Algorithm~\ref{algo:downward} in Appendix \ref{app:down}. 
	The non-terminal $A_{\alp}$ as well as the terminal $a_{\alp}$ are 
	used to simulate $\alpha(L(G, A))$. Likewise, the 
	non-terminal $A_l$ ($A_r$) and the terminal  $a_l$ ($a_r$) are used to simulate $\alpha(L(G,m'))$ ($\alpha(L(G,m''))$) when $m=m'Cm''$ in the production 	 $A \to m$ and $C =_1 A$.  The non-terminal $A_m$ is used when $m=A_1A_2$ for non-terminals $A_1 \neq_1 A, A_2 \neq_1 A$. 
	Finally, 
	$A\downward$ works in place of $A$, and depending on $m$ 
	in the production $A \to m$, gets rewritten either as 
	$A_{\alp}$ or $A_l A_m A_r$ or $A_1 \downward A_2 \downward$. 
	
	\noindent{\bf{Constructing the PDA $A_{G\downward}$ from $G, G'$}}. 
	The PDA $A_{G\downward} {=} (Q, T, N', \delta, \{q\}, S \downward, \{q\})$ from $G'$ and $G$ is constructed below. 
	This works in the standard way of converting CFGs to PDA, keeping in mind the following. 
	When $a_l$ (resp. $a_r$, $a_{\alp}$) is the top of the stack, 
	the PDA has a loop over symbols from $\alpha_l(A)$ (resp. $\alpha_r(A)$, $\alpha(L(G, A))$), in whichever state it is, at that time. 
	$\alpha_l(A)$ (resp. $\alpha_r(A)$) is a set of symbols appearing in $\alpha(L(G,m'))$ (resp $\alpha(L(G,m''))$) for all rules $C \to m'Bm''$ where $C=_1 A=_1B$. 
	The PDA stays in state $q$ when the top of stack contains a non-terminal $A$. This non-terminal is popped and the  right hand side $m$ of its production rule $A \to m$ is pushed on the stack. If the top of the stack is i) a terminal from $T$ which also happens to be the next input symbol, then we just pop it ii) a terminal of the form $a_{\alp}$, $a_l$ or $a_r$, we move to state $q_{A_\alp}$, $q_{A_l}$ or $q_{A_r}$ respectively and loop over $\alpha(L(G, A))$, $\alpha_l(A)$ or $\alpha_r(A)$ respectively and come back to $q$ non-deterministically.
	The PDA accepts in $q$ when the stack becomes empty. 
	$A_l $, $A_r$ and $A_\alp$ do not produce any further non-terminals while $A_m$ produces non-terminals $B$ where  $B <_1 A$ and $B \neq_1 A$. Thus, the stack size is $\leq \mathcal{O}(|N|)$.
\end{proof}

\subsection{$\mathsf{PSPACE}$-hardness in Theorem \ref{sep:pda}}
\label{app:pspace-hard}
	Our construction is inspired from ~\cite[Lemma 8.6]{DBLP:books/aw/HopcroftU79}. 
	
	Let $M = (Q, \Sigma, \Gamma, \delta, q_0, B, F)$ be a linear bounded TM (LBTM). $\Sigma$ is the input alphabet, $\Gamma$ is the tape alphabet (contains $\Sigma$ and $B$ and some extra symbols)  and $B$ stands for the blank symbol.  
	$M$ uses at most $p(n)$ space for the input of size $n$, where $p(n)$ is a polynomial in $n$. 
	The maximum number of distinct configurations of $M$ can be ${max(|Q|, |\Gamma|)}^{p(|w|)+1}$, say $k'_{\max}$, 
	assuming a configuration is represented as $w_1 q w_2$ where $w_1w_2$ is a word over $\Gamma$ 
	that covers the entire tape (thus, we also include the spaces occupied by the blank symbols).   
	
	 Consider constants $k_{\max} = 2^{\lceil{\log k'_{\max}}\rceil}$ and $c_{\max} = \log{k_{\max}}$ to give precise bounds.
	Without loss of generality we assume that all halting computations of $M$ on $w$ take an even number of steps. 
	
	Given the LBTM $M$, we give the construction of PDAs $P_1$ and $P_2$ as follows. The input alphabet of $P_1, P_2$ 
	is $\Gamma \cup Q \cup \{\#,\$\}$, where $\#, \$$ are new symbols. The stack alphabet 
	for $P_1, P_2$ contain these symbols as well as symbols $0,1, \#_1$. 0,1 are used to encode binary numbers 
	to keep track of the number of configurations seen so far on any input. The idea is to restrict this number 
upto $k_{\max}$. 
	
		$P_1$ and $P_2$ both generate words of the form $c_1 \# c_2^R \# c_3 \# c_4^R \# \ldots \#c^R_k \# \$^{c_{max}} \ldots \# \$^{c_{max}} \#$ where number of $\#$s = $k_{\max}$, satisfying the following conditions. 
			\begin{itemize}
		\item[(1)] each $c_i$ is a valid configuration of $M$ and of size exactly ${c_{\max}}$,
		\item[(2)] In $P_2$, $c_1$ is an initial configuration of $M$ on $w$ i.e. $q_0wB^{c_{\max} - (|w|+1)}$,
		\item[(3)] In $P_1$, the last configuration before $\$$ is an accepting configuration, 
		\item[(4)]  $c_{i+1}$ is a valid successor configuration of $c_i$ in $P_1$ for all odd $i$,  
		\item[(5)] $c_{i+1}$ is a valid successor configuration of $c_i$ in $P_2$ for all even $i$. 
	\end{itemize} 
	This way, correctness of consecutive odd and even configurations are guaranteed by $P_2$ and $P_1$ respectively and $\Lang{P_1} \cap \Lang{P_2}$ is the set of accepting computations of $M$ on $w$ padded with copies of word ($\$^{c_{\max}} \#$) at the right end. 
The length of the words accepted by $P_1$ and $P_2$ is $k_{max}$ (number of $\#$s) + $c_{max}.k_{max}$ (the size of each $c_i$ or $c_i^R$ 
is $c_{max}$, the padding with $\$$ also has length $c_{max}$. There are $k_{max}$ of these $c_i$ or padded $\$$.)

\noindent{\bf{Construction of $P_1$}}. 		
	\begin{enumerate}
	\item [(a)]	
	Initially we push $\log k_{\max}$ zeros (where $k_{\max}$ is the maximum possible number of configurations) and a unique symbol $\#_1$ in the stack to keep track of number of $\#$s read so far. $\#_1$ is used as a separator in the stack between 
	the counter values (encoded using 0,1) and the encoded configurations over $\Gamma, Q$.   Thus, we initialize 
	the value of the binary counter to $\log k_{\max}$ zeroes, with the separator $\#_1$ on top. 
	\item [(b)] Let $L$ be $\{c_1 \# c_2^R \# ~|~ c_2$ is the successor configuration of $c_1 \text{~in~} M \}$. 
	To realize $L$, we do the following. 
We read $c_1$ and check whether it is of the form $w_1 q w_1'$ where $w_1, w_1' \in \Gamma^*$ and $q \in Q$. 
	While reading $c_1$ upto $q$, we push $w_1$ on its stack. As soon as we find $q \in Q$ in $c_1$, we store $q$ in the finite control state and read the next input symbol, say $X$. 
	\begin{itemize}
	\item 	
	If $X$ is $\#$, then there is no successor configuration, and we reject the word denoting that we have reached the end of the tape. 
	\item If $\delta(q, X) = (p, Y, R)$, then we push $Yp$ onto the stack. 
	\item If $\delta(q, X) = (p, Y, L)$, let $Z$ be on top of stack, then we replace $Z$ by $pZY$.
	\end{itemize}
	Again while reading $w_1'$, we push the symbols read onto the stack. 
	Now the stack content will be (from bottom to top) the encoding of the counter, $\#_1$, and $c_2$. 
	After reading the $\#$ after $c_1$, we compare each input symbol with the top stack symbol (now we are  reading 
	$c_2^R$ in the input). 	 If they differ, there is no next move.  If they are equal, we pop the top stack symbol. When the top of stack is $\#_1$, 
	we read $\#$ after $c_2^R$ and accept.
	While reading $c_i$ (or $c_i^R$) we must also verify that the length of $c_i$ is exactly $c_{\max}$.  
	We can do this by adding polynomial size counter in the finite control state. 
	\item[(c)] We also need to increment the counter by 2 for the number of $\#$ (configurations) seen so far. 
	We have read two configurations now in the above step.  This can be done by popping the stack content starting from $\#_1$ until we find $0$. During this popping, we store in the finite control, the information of how many 1s we have popped. As soon as we find $0$ on the  top of the stack, replace it with $1$ and push back as many number of $1$s as we popped previously (this info is stored in the finite control), 
	 followed by the separator $\#_1$. We need to do this step again to increment the counter since we saw two configurations (the two $\#$s represent this) in $L$. 
	  
	\end{enumerate}
To construct the PDA $P_1$, we follow 
	steps of [(a)] i.e., initializing the counter. Next it iterates over steps of [(b)] followed by [(c)], thus reading two consecutive $c_i \# c_{i+1}^R \#$ and incrementing the counter by $2$ until we find the configuration of the form $\Gamma^* F \Gamma^*$ or the counter becomes full. 
	If we stop the iteration due to accepting configuration and counter is not full yet  (counter is full when we have all $1$s in the stack),  we pad $\$^{c_{max}} \#$ until it becomes full.  
	Otherwise if we stop the iteration due to the counter reaching $k_{\max}$ value before getting accepting configuration, we stop and reject this sequence of configurations.

\noindent{\bf{Construction of $P_2$}}. 		
		Similarly, we can construct PDA $P_2$: The difference here is that it starts with initial configuration i.e. 
		$q_0 wB^{c_{\max} - |w|} $ followed by $\#$ (here also we increment the counter by $1$), then it will iterate over $\{c_1^R \# c_2 \# ~|~ c_2 \text{~is ~the~ successor~ configuration~ of~ } c_1 \text{~in~} M \}$ and increment the counter by $2$ after each iteration. Then it will read accepting configuration non-deterministically (of course, followed by $\#$ and increment in the binary counter) and pad the string with $\$^{c_{max}}\#$ until counter becomes full.
	
	By  padding the words accepted with extra $\$^{c_{max}}\#$s we are enforcing that $P_1$ and $P_2$ accepts words of same length. 
	Having the  blank symbols $B$ as part of the LBTM configuration is also motivated by this. 
	Indeed, $w \in \Lang{P_1} \cap \Lang{P_2}$ iff the LBTM $M$ accepts $w$. 
	Since the words accepted by $P_1, P_2$ have same constant length, 
	$\Lang{P_1} \cap \Lang{P_2} \neq \emptyset$ 	iff $\Lang{P_1} \upward \cap \Lang{P_2} \downward \neq \emptyset$. 

\subsection{Proof of Claim \ref{claim:A2}}
\label{app:claimm}
		 Suppose $w \in L(G, A) \downward$, then $w$ is labelled by symbols from $\alpha(L(G, A))$. Clearly $w \in (\alpha(L(G, A)))^*$.
		
		Conversely, assume $w \in (\alpha(L(G, A)))^*$. We prove by induction on the length of $w$ that $w \in L(G, A) \downward$. If the length of $w$ is $1$, then trivially one can generate any word in $L(G, A)$, which contains $w$ as a subword. Assume the length of $w$ to be $k+1$. By induction hypothesis, there exists a word $w' \in L(G, A)$ such that $w[:k] \subword w'$ and $w'$ is generated from $A$. We construct $w'' \in L(G, A)$ such that $w \subword w''$. The derivation of $w''$ from $A$ proceeds as follows.  Start with  the rule $A \Rightarrow^{+} mAm'Am''$, 
		and substitute the first $A$ with $w'$. For  the second $A$, substitute any word $w'''$ s.t. $A \Rightarrow^+_{G}w'''$ 
				s.t. the $k+1$st symbol of $w$,  
		 $w[k+1]$ is contained in  $w'''$. Then we obtain 	
		 $w \preceq w''$	, and hence $w \in 	L(G, A) \downward$.	 

\subsection{Construction of $G'$ and $A_{G\downward}$}	
\label{app:down}

	\begin{algorithm}[t]
		\ForEach{\text{non-terminal} $A \in N$}{
			$\alpha_l(A):= \emptyset, \alpha_r(A):= \emptyset$;
		}
		\ForEach{\text{non-terminal} $A \in N$}{
			\eIf{$A <_2 A$}{
				$A\downward \to  A_{\alp}$ \\
			}
			{
				\ForEach{$B \in N$ s.t. $B =_1 A$ }{
					\ForEach{$p: B \to m$}{
						\If{$m = A_1 A_2$ for $A_1 \neq_1 A$ and $A_2 \neq_1 A$}{
							$A_m \to A_1 \downward A_2 \downward $
						}
						\If{$m = a$ for $a \in T$}{
							$A_m \to a ~|~ \epsilon$
						}
						\If{$m = m'Cm''$ for unique $C =_1 A$}{
							$\alpha_l(A):= \alpha(L(G_2, m')) \cup \alpha_l(A),~~~$
							$\alpha_r(A):= \alpha(L(G_2,m'')) \cup \alpha_r(A)$ 						
						}
					}
				}
				$A \downward \to A_l ~ A_m~ A_r,~~$ 
				$A_l \to a_l,~~$ 
				$A_r \to a_r,~~$ 
				$A_{\alp} \to a_\alp. $
			}
		}
		\caption{Construction of $G'$}
		\label{algo:downward}
	\end{algorithm}

\noindent{\bf{The PDA $A_{G\downward}$ from $G, G'$}}. 
Now we construct the PDA $A_{G\downward} = (Q, T, N', \delta, \{q\}, S \downward, \{q\})$ from $G'$ and $G$. This works in the standard way of converting CFGs to PDA, keeping in mind the following. 
When $a_l$ (resp. $a_r$, $a_{\alp}$) is the top of the stack, 
the PDA has a loop over symbols from $\alpha_l(A)$ (resp. $\alpha_r(A)$, $\alpha(L(G, A))$), in whichever state it is, at that time. 
The PDA stays in state $q$ when the top of stack contains a non-terminal $A$. This non-terminal is popped the  right hand side $m$ of its production rule $A \to m$ is pushed on the stack. If the top of the stack is i) a terminal from $T$ which also happens to be the next input symbol, then we just pop it ii) a terminal of the form $a_{\alp}$, $a_l$ or $a_r$, we move to state $q_{A_\alp}$, $q_{A_l}$ or $q_{A_r}$ respectively and loop over $\alpha(L(G, A))$, $\alpha_l(A)$ or $\alpha_r(A)$ respectively and come back to $q$ non-deterministically.
	The PDA accepts in $q$ when the stack becomes empty. 
	
\noindent{\bf{Complexity}}. We analyze the complexity of constructing $G'$ as well as $A_{G \downward}$. 
	\begin{itemize}
		\item Construction of $G'$:
		\begin{itemize}
			\item for every $A, B \in N$, checking $A <_2 B$ reduces to checking membership of $a_1 a_2$ in $L(G_1, B)$ where $G_1$ 
			is obtained from $G$ by replacing productions $C \to a$ of $G$ with $C \to \epsilon$, $C \to a_1 ~|~ a_2$. $G_1$ is constructed in polynomial time from $G$.  Similarly, we can check if $A <_1 B$ and $a \in \alpha(L(G, A))$. 
			\item for every $A \in N$, five non-terminals and three terminals are introduced. Hence, $G'$ has $5|N|$ non-terminals and $|T| + 3|N|$ terminals, which is again polynomial in the input $G'$. 
			\item each step of the algorithm~\ref{algo:downward} can be computed in at most polynomial time and the algorithm has loops whose sizes range over terminals, non-terminals and production rules of $G$.
		\end{itemize}
		\item PDA $A_{G \downward}$: Each  non-terminal $A$ has corresponding non-terminals  $A_l$, $A_r$, $A_\alp$ or $A_m$ in $G'$. Out of these,   
		 $A_l $, $A_r$ and $A_\alp$ do not produce any further non-terminals while $A_m$ produces non-terminals $B$ where  $B <_1 A$ and $B \neq_1 A$. Thus, the stack size is bounded above by $\mathcal{O}(|N|)$.
	\end{itemize}

	\subsection{Proof of Lemma \ref{lem:upward}}
	\label{app:up}
	Let $G = (N, T, R, S)$, we construct CFG $G'$ which accepts subset of $\Lang{G} $ and includes  $\mini(\Lang{G})$. Clearly $\Lang{G} \upward = \Lang{G'}\upward$.
	Let $G'$ be $(N \times \{0, \ldots, |N|\}, T, R', (S, |N|))$ where production rules are defined as following:
	\begin{itemize}
		\item $(A, 0) \to a $ if $A \to a$ 
		\item $(A, i) \to (B, i-1) (C,i-1) $ for all $i \in [1, |N|]$ if $A \to BC $ 
		\item $(A, i) \to (A, i-1)$ for all $i \in [1, |N|]$
	\end{itemize}
	We can construct a PDA $P$ for $\Lang{G'}$ in polynomial time. It is easy to see that the PDA uses at most $|N| + 1$ height of the stack. It accepts all words of $\Lang{G}$ having derivation with at most $|N|+1$ steps. We claim that
	$\Lang{G'}$ contains $\mini(\Lang{G})$. 
	If not, then there exists $w \in \mini(\Lang{G})$ with at least $|N|+2$ derivation steps i.e.  some non-terminal $A$ appears at least two times in derivation tree say at level $i$ and $j > i$. If we replace the subtree rooted at level $j$ by subtree at level $i$, we get a proper subword of $w$ (wlog we assume $G$ is in CNF and it does not have useless symbols and $\epsilon-$productions), which is also in $\Lang{G}$, contradicting our assumption that $w \in \mini(\Lang{G})$. 
	Finally we can obtain a PDA for $\Lang{G} \upward $ by adding self loops on each state of $P$ for each symbol $a \in \Sigma$ as $\Lang{G} \upward = \Lang{G'} \upward$.

\subsection{Proof of Lemma~\ref{lem:sl2ompl-pptl} }

	Given two SL formulas $\formula$ and $\formula'$, with $x_1, \ldots,x_n$ as their set of variables. Let   $\Aa$ and $\Aa'$ be two OMPAs such that $\Lang{\Aa}=\encode{(\langof{\formula})}$ and  $\Lang{\Aa'}=\encode{(\langof{\formula'})}$. Then, $\formula$ and $\formula'$ are separable by an $n$-$\pptl$ iff  $\Aa$ and $\Aa'$ are separable by a $\pptl$.

\begin{proof}
	Let $\langof{\Aa}$ and $\langof{\Aa'}$ be separable by a $\pptl$ $L$ over $\Sigma \cup \{\#\}$.
	Consider a regular language $R \subseteq (\Sigma \cup \{\#\})^*$ containing all words having exactly $(n-1)$ $\#$s. 
	We first show that $L \cap R$  can be written as a finite Boolean combination, except negation, of  languages of the form $L_1 \# L_2 \# \ldots \# L_n$ where each $L_i$ is a piece language over $\Sigma$. 
	
 We prove inductively the representation of $L \cap R$ in terms of positive Boolean combinations of 
      $n$ piece languages over $\Sigma$ having $\#$ in between them. 
      As a base case, let $L$ be a piece language $\Sigma^* a_1 \Sigma^* \ldots \Sigma^* a_k \Sigma^*$. 
	Let $S$ be a finite set containing only the words of the form $w$ such that $a_1 a_2 \ldots a_k \preceq w$  and the symbol $\#$ appears exactly  $(n-1)$-times in  $w$. 
	Then $L \cap R$ can be written as $\bigcup\limits_{w \in S} \Sigma^* b_1\Sigma^* b_2 \ldots b_\ell \Sigma^*$, where $w{=}b_1 \dots b_{\ell}$, which is of the form 
	$\bigcup\limits_{i=1}^{|S|} L_{i1} \# L_{i2} \# \ldots \# L_{in}$ where $L_{ij}$ is a piece language over $\Sigma$.
	
Now assume that $L$ is of the form $L_1 \cap L_2$ (resp. $L_1 \cup L_2$). It is easy to see that $L \cap R$ is equivalent to 	 $(L_1 \cap R) \cap   (L_2\cap R)$ (resp. $(L_1\cap R) \cup (L_2 \cap R)$). Thus we can use our  induction hypothesis to show that $L \cap R$ is indeed a Boolean combination (using union and intersection) of languages of the form  $L_1 \# L_2 \# \ldots \# L_n$ where $L_i$ is a piece language over $\Sigma$. 

Now, we are ready to construct an $n$-$\pptl$ that separates  $\langof{\formula}$ and $\langof{\formula'}$. First of all, observe that $\Lang{\Aa}\subseteq L \cap R$ and $\Lang{\Aa'}\cap  (L \cap R) = \emptyset$ (from the definition of $\Aa$ and $\Aa'$). Since $L \cap R$ can be written as a positive Boolean combination of   languages of the form $L_1 \# L_2 \# \ldots \# L_n$ where $L_i$'s are piece languages over $\Sigma$, we can easily construct an $n$-$\pptl$ over $\Sigma$  that separates  the two formulas  $\formula$ and $\formula'$ as a finite positive Boolean combination of the product of these $L_i$s. This is similar to Lemma \ref{SL2OMPA}, where we used the fact that 
a finite Boolean combination of the product of piece languages is a $\ptl$ over an $n$-tuple alphabet.  
 The proof of the other direction follows similarly.\end{proof}

\subsection{Proof of claim in section \ref{subsec:rssl}, line 421}
\paragraph*{Converting SL with functional transducers to right-sided SL}

\begin{lemma}
Let $\formula$ be an SL formula where all relational constraints 
have only functional transducers. Then we can obtain 
a right-sided SL formula $\formula'$ such that $\formula'$ is satisfiable 
iff $\formula$ is.	
\end{lemma}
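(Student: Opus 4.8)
The plan is to remove, one variable at a time, every occurrence of a \emph{dependent} variable from the inside of a term, so that in the end all terms mention only \emph{independent} (source) variables, which is exactly the right-sided condition. Recall that in an SL formula the relational constraints induce a DAG on the variables: the constraint $\constr_i=(x_i,\term_i)\in\relof{\transducer_i}$ makes $x_i$ the output of $\transducer_i$ on input $\term_i$, and $\term_i$ only mentions variables of strictly larger index. Hence, by functionality, every value is a uniquely determined function of the source values $x_{k+1},\dots,x_n$. First I would process the variables in reverse topological order of this DAG, maintaining for each dependent $x_i$ a right-sided \emph{closed form} $(x_i,\term_i')\in\relof{\transducer_i'}$ in which $\term_i'$ is a concatenation of source variables (possibly with repetitions) and $\transducer_i'$ is a functional transducer; the original membership constraints are carried over unchanged.

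The technical device that makes inlining realizable by one-way transducers is a marking of the source values. I would extend the alphabet by a fresh separator and force, via membership, each source value to end with exactly one occurrence of it and contain no other. Then any term, being a concatenation of source variables, evaluates to a sequence of separator-delimited blocks whose boundaries a one-way transducer can detect. Inlining $x_j=\transducer_j'(\term_j')$ into $\term_i=u\,x_j\,v$ then becomes the construction of a single $\transducer_i'$ that reads the marked term $u'\,\term_j'\,v'$ and feeds it into $\transducer_i$ block by block: on the blocks coming from $u'$ it passes them through to $\transducer_i$, on the blocks coming from $\term_j'$ it runs $\transducer_j'$ and feeds \emph{its} output into $\transducer_i$, and on the blocks from $v'$ it passes through again, switching modes at each separator. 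Because functional (rational) transductions are closed under composition, $\transducer_i'$ is again a functional transducer, and iterating along the DAG removes all dependent variables from every term.

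The main obstacle is the \emph{diamond} case, where a single source reaches $x_i$ along two different paths. Inlining then forces the same source value to be read twice, e.g. $x_1=f_1\!\big(f_2(x_3)\cdot x_3\big)$, and the induced function $x_3\mapsto f_1(f_2(x_3)\,x_3)$ is in general \emph{not} rational (it requires copying), so it cannot be computed by a one-way transducer reading $x_3$ once. I would handle this within the same marking device by letting the term literally repeat the source, $\term_1'=x_3\,x_3$, so that the doubled value $w\$\,w\$$ is physically present on the input and separator-delimited; $\transducer_1'$ then runs $f_2$ on the first block while feeding it into $f_1$, and feeds the second block directly into $f_1$, finishing at the second separator. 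This keeps $\transducer_1'$ one-way and rational, the repeated term supplying the needed second copy externally rather than storing it — precisely the operation that right-sidedness forbids and that is flagged after Lemma~\ref{lem:sl-2nft}. Since the DAG is finite, only finitely many repetitions (and fresh copies) are introduced, so $\formula'$ remains finite.

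Finally I would argue equisatisfiability. The ``if'' direction is immediate: a solution of $\formula$ yields a solution of $\formula'$ by assigning the same (marked) source values and letting the composed transducers recompute the dependent values along the correct block boundaries. The ``only if'' direction is where functionality is essential: because the separators pin every block boundary to a true variable boundary, the blocks parsed by each $\transducer_i'$ coincide with the actual source evaluations (and repeated blocks are necessarily equal), so functionality forces every dependent value in $\formula'$ to equal the value the original transducers would compute; unmarking the separators then gives a genuine solution of $\formula$. Hence $\formula'$ is a right-sided SL formula equisatisfiable with $\formula$.
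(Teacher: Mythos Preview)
Your proposal is correct and follows essentially the same route as the paper: iterated inlining of dependent variables along the SL order, using a fresh separator symbol so that a one-way transducer can locate variable boundaries inside a concatenated term, together with closure of functional transducers under composition. The paper's only cosmetic differences are that it introduces the separator as a fresh independent variable $z$ constrained to $\{\$\}$ (inserted between components of the inlined term) rather than appending it to each source value via membership, and that it spells out explicitly the disjunction over intermediate states of the outer transducer at variable boundaries, which in your construction is hidden inside the non-determinism of the mode-switching transducer.
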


%

\begin{proof}
Assume that the SL formula $\formula$ with functional transducers is given as $$\formula=\bigwedge\limits_{i=1}^{n} x_i \in \langof{\automaton_i} \wedge \bigwedge\limits_{i=1}^{k} \constr_i$$ such that $\constr_i$ is of the form $\Tt_i(\term_i)=x_i$, and  $\transducer_i$ defines a (partial) function. 
Without loss of generality, assume that each $t_i$ is a sequence of variables. By the SL condition,  if a variable $x_j$ appears in $\term_i$ then $j>i$.  
Notice that $\formula$  need not be rightsided, since we can have $\varphi_h$ as the constraint $x_h= \Tt_h(t_h)$ with $t_h=x_j$, $j > h$ and also 
$\varphi_j$ being $x_j=\Tt_j(t_j)$ violating the condition that none of the  output variables $x_1, \dots x_k$ can be present 
in any of the inputs $t_1, \dots, t_k$.  The variables $x_{k+1}, \dots, x_n$ are independent 
and do not appear in the left side (or outputs) of any of the transductions. 
	
	We first explain the idea of the proof. We construct a formula $\formula'$ in which the variables appearing in the inputs terms $t'_1, \dots, t'_k$ 
	of all the transductions  in $\formula'$ are independent variables of $\formula'$. 
	 		  Thus, each of $t'_{1}, \ldots, t'_{k}$ will  be sequences 
	of variables from $x_{k+1}, \ldots, x_n$. The idea is to apply substitution inductively starting from $\constr_{k}$ to $\constr_{1}$, obtaining 
	in $\formula'$, the constraints $\varphi'_k, \dots, \varphi'_1$ s.t. $\formula'$ are rightsided.
		Note that by the SL definition, $t_k$ can have 
	only variables from $x_{k+1}, \dots, x_n$. Now  consider $\varphi_{k-1}$ given by $x_{k-1}=\Tt_{k-1}(t_{k-1})$. We know that 
	$t_{k-1}$ can have only  variables from $x_k, x_{k+1}, \dots, x_n$. We can replace all occurrences 
	of $x_k$ in $t_{k-1}$ with $\Tt_k(t_k)$ obtaining $x_{k-1}=\Tt_{k-1}(t_{k-1}[ x_k \mapsto \Tt_k(t_k)])$. 
		Likewise, we can replace occurrences of $x_{k-1}, x_k$ 
	in $t_{k-2}$ respectively with $\Tt_{k-1}(t_{k-1})$ and $\Tt_k(t_k)$.  
We proceed iteratively for  $\constr_{k-3}$ to $\constr_1$.
	
	 We give the formal construction by induction on $n$. For the base case, 
	 $\varphi'_k$ is same as $\varphi_k$ since $t_k$ is over the independent variables 
	 $\{x_{k+1}, \dots, x_n\}$. The same holds for all $\varphi'_h$ s.t. the input $t_h$ 
	 is over $\{x_{k+1}, \dots, x_n\}$. 
	 
	 For the inductive step, let $j$ be the largest index s.t. the input $t_j$ in $\varphi_j$ contains dependent variables (from 
$\{x_{j+1}, \dots, x_k\}$). Let $\varphi_j$ be given by $x_j=\Tt_j(t_j)$ where $t_j=t_j[1] \dots t_j[m]$, where $t_j[1], \dots, t_j[m] \in \{x_{j+1}, \dots, x_k, x_{k+1}, \dots, x_n\}$. 
Consider an accepting run in $\Tt_j$ visiting states $q_0, \dots, q_m$ s.t. 
the part of the run between $q_{i-1}$ and $q_i$ processes the input $t_j[i]$.  If there are multiple accepting runs for the same input word in $\Tt_j$, 
	the sequence $s$ of states $q_0, \dots, q_m$ can differ, but the output will be the same since $\Tt_j$ is functional. 
	Let $\Tt_j(q_0, q_1), \dots, \Tt_j(q_{m-1}, q_m)$ represent 
	transducers induced from $\Tt_j$, s.t. $\Tt_j(q_0, q_1)$ reads the first variable $t_j[1]$ and produces 
	output (say $o_1$),  $\Tt_j(q_1, q_2)$ reads the second variable $t_j[2]$ and produces 
	output (say $o_2$), and so on till $\Tt_j(q_{m-1}, q_m)$ reads the $m$th variable $t_j[m]$ and produces 
	output (say $o_m$).  Let $\Tt_j^{s,i}$ denote the induced transducer $\Tt_j(q_{i-1}, q_i)$ 
	for $1 \leq i \leq m$ (i.e., the transducer $\Tt_j$ with $q_{i-1}$ as initial state and $q_i$ as a final one). Then, for a fixed sequence of states $q_0, \dots, q_m$,  we can 
	write $x_j$ as $\Tt_j^{s,1}(t_j[1])\Tt_j^{s,2}(t_j[2])\dots \Tt_j^{s,m}(t_j[m])$. Note that 
	the concatenation of  transducers  $\Tt_j^{s,1}, \dots,\Tt_j^{s,m}$ over inputs 
	$t_j[1], \dots, t_j[m]$ can be replaced by a single transducer $\Tt'_j$ over the input $t_j=t_j[1] \dots t_j[m]$ 
	producing output $o_1 \dots o_m$, since  transducers are closed under concatenation. 

 Assume that $t_j[\ell]=x_p$ is a dependent variable, $p \in \{j+1, \dots, k\}$.
  Then we can replace $\Tt_j^{s,\ell}(t[\ell])$ with 
$\Tt_j^{s,\ell}[\Tt_{p}(t_p)]$.
Now, we apply such substitution to all the variables appearing in $t$.
This will  result in that  $t'$ is a sequence  over independent variables $\{x_{k+1},\dots, x_n\}$.  

Thus, we can rewrite $x_j$ as  
$\Tt_j^{s,1}(\Tt'_{p_1}(t_{p_1}))\Tt_j^{s,2}(\Tt'_{p_2}(t_{p_2}))
\dots \Tt_j^{s,m}(\Tt'_{p_m}(t_{p_m}))$ where 
$p_1, \dots, p_m$ is defined such that $t_j[\ell]=x_{p_{\ell}}$ for all $\ell \in \{1,\ldots,m\}$ , and 
$\Tt'_{p_i}$ is the transducer given by $\Tt_{p_i}$ in case 
$p_i \in \{j+1, \dots, k\}$, and is the identity 
otherwise. Functional transducers are closed under composition, 
hence we can obtain a functional transducer $\Tt''_i$ equivalent to each of 
  the compositions $\Tt_j^{s,i} \circ \Tt'_{p_i}$, for $1 \leq i \leq m$. 
  This expression for $x_j$ is for a fixed accepting path 
through states $q_0, \dots, q_m$. Considering the general case of any accepting path over an accepting sequence $q_{i_0}\dots q_{i_m}$, we can write $\varphi''_j$ as 
$$\varphi''_j=\bigvee_{s=q_{i_0}, \dots, q_{i_m}}[
x_j=\Tt_j^{s,1}(\Tt'_{p_1}(t_{p_1}))\Tt_j^{s,2}(\Tt'_{p_2}(t_{p_2}))
\dots \Tt_j^{s,m}(\Tt'_{p_m}(t_{p_m}))]$$

Notice again that $t_{p_1}t_{p_2} \dots t_{p_m}$ is a sequence over independent 
variables, and the union and concatenation above can be replaced by a single transducer $\Tt_j^s$. $\Tt_j^s$ is the transducer obtained by the concatenation of the following transducers $(\Tt_j^{s,1}(\Tt'_{p_1})) \cdot \{(\epsilon, \$)\} \cdot (Tt_j^{s,2}(\Tt'_{p_2}))  \cdot \{(\epsilon, \$)\} \cdot 
\dots \cdot \{(\epsilon, \$)\} \cdot (\Tt_j^{s,m}(\Tt'_{p_m}))$ where $\$$ is a special symbol not in $\Sigma$. Now let $\Tt'_j$ be the transducer obtained by taking the union of $\Tt_j^s$ where $s$ is a sequence of states of $\Tt_j$ of length $m$. Now we can rewrite $\varphi''_j$ as $\varphi'''_j$ defined by $$
x_j=\Tt'_j(t_{p_1} \$t_{p_2} \$ \dots \$ t_{p_m})$$

We can now replace the occurrence of $\$$ by a fresh variable $z$ not appearing in $\formula$ to obtain the constraint $\varphi'_j$ defined by $$
x_j=\Tt'_j(t_{p_1} z t_{p_2} z \dots z t_{p_m})$$



We now proceed with the next largest $j$ s.t. the input of $\varphi_j$ is replaced by  $\varphi'_j$ and has now
has a dependent variable. We repeat this procedure until we get rid of all the dependant variables as input of the relational constraints. Let $\formula''$ the resulting formula.
$$\formula''=\bigwedge\limits_{i=1}^{n} x_i \in \langof{\automaton_i} \wedge \bigwedge\limits_{i=1}^{k} \constr'_i$$
We can now define the formula $\formula'$ as the conjunction of $\formula''$  and $z \in \{\$\}$.

	Based on the construction of $\formula'$, we have the following claims.
	\begin{lemma}
	$\formula'$ is a rightsided SL formula.	
	\end{lemma}
 \begin{proof}
By construction, we have 
 $$\formula'= z \in \{\$\} \wedge \bigwedge\limits_{i=1}^{n} x_i \in \langof{\automaton_i} \wedge \bigwedge\limits_{i=1}^{k} \constr'_i$$
 
The variables $z,x_{k+1}, \dots, x_n$ are clearly independent in $\formula'$ as well. Consider a $\varphi'_j$. 
As described above, we know that the input of the transduction $\varphi'_j$ 
is a sequence over independent variables (thanks to composition of transducers).
%
%
%
All input variables are among $\{z,x_{k+1}, \dots, x_n\}$, the independent variables. 
Hence $\formula'$ is rightsided.
 \end{proof}

\begin{lemma}
	The formula $\formula'$ constructed is 
	satisfiable iff  $\formula$ is.	
\end{lemma}
	\begin{proof}

The equivalence of each $\varphi'_j$ with $\varphi_j$ ensures that 
we have the same functions computed by the transductions, and hence all 
the inputs and  outputs agree in each of the old and new transductions. 
This, along with the fact that independent variables remain untouched (with the exception of the fresh variable $z$ which has as evaluation $\$$), ensure that we  have the same valuation for $x_1, \dots, x_n$ in 
both $\formula$ and $\formula'$.
 
%
%
%
%
%
\end{proof}
Thus, we have shown that, given a SL string constraint having only functional transducers, 
we can compute an equisatisfiable rightsided SL string constraint.
\end{proof}

\subsection{Details for Lemma \ref{lem:sl-2nft}}
Given a right-sided SL formula $\formula$ over  $\alphabet$, with  $x_1,x_2,\ldots,x_n$ as its set of  variables, we wish to construct, in polynomial time, a $\tnft$ $\Aa_\formula$ such that $\sem{\Aa_\formula}{=}\{(u_1\#u_2 \# \cdots \#u_n, w_1 \# w_2 \# \ldots  \# w_n) {\mid}  u_1\#u_2 \# \cdots \#u_n \in \encode{(\langof{\formula})}$ and  $w_i{=}u_i$ if  $x_i$ is an independent variable $\}$.

To recall, $\formula$ is of the form $\bigwedge\limits_{i=1}^{n} y_i \in \langof{\automaton_i} \wedge \bigwedge\limits_{i=1}^{k}(y_i,\term_i)  \in  \relof{\transducer_i}$  with  $y_1,\ldots,y_n$ is a permutation of $x_1,\ldots,x_n$. 
 $\pi: [1,n] \rightarrow [1,n]$ is the mapping that associates to each index $i \in [1,n]$, the index $j \in [1,n]$ s.t. $x_i=y_j$ (or $x_i=y_{\pi(i)}$). 

As already mentioned above, given $n$ blocks separated by $\#$, $\Aa_\formula$ treats block $i$ as valuation of $x_i$ if $x_i$ is an independent variable, for all $i$.
For other (dependent) variables, say $x_j$, $\Aa_{\formula}$, computes its valuation based on the valuation of variables given in input blocks.
$\Aa_\formula$ considers the ordering of variables as $x_1, \ldots, x_n$ for input and output.
Thus if the input is $u_1 \# u_2 \# \ldots \# u_n$, then $\Aa_{\formula}$ tries to find satisfying assignment $\eta$ for $\formula$, and output for this input is given as $\eta(x_1) \# \ldots \# \eta(x_n)$ s.t. $\eta \models \formula$ and $\eta(x_i) = u_i$ for all independent variables $i$.

Working of $\Aa_\formula$: It produces valuation of $x_1, x_2, \ldots, x_n$ in sequence. 
Depending on if $x_i$ is dependent or independent variable, process given in Lemma~\ref{lem:sl-2nft} is followed by $\Aa_{\formula}$.
To execute the process, $\Aa_{\formula}$ uses following set of states and transitions.

{\noindent \bf Set of states:} 
\begin{itemize}
	\item For each $x_i$, if $x_i$ is an independent variable, we have following set of states used to produce $x_i$ in the output:
	\begin{itemize}
		\item $\{(q_{\vdash}(x_i), r) \mid r \text{~is an initial state of } A_{\pi(i)}\}$ \\
		The first set indicates that now we want to output value of $x_i$ in the run, which is present in $i^{th}$  block in the input.
		So for that we first move to $\leftend$, and $\Aa_\formula$ remains in state $(q_{\vdash}(x_i), r)$ while performing this operation, i.e. moving toward left end.
		
		\item $ \{(q_{j}(x_i), r) \mid 0 \leq j < i , r \text{~is an initial state of } A_{\pi(i)}\}$ \\
		These states are visited after $(q_{\vdash}(x_i), r)$, specifically when $\Aa_\formula$ is in $(q_{j}(x_i), r)$, it represents that we have already scanned $j$ blocks in the input, and currently reading $(j+1)$th block.

		\item $ \{(q_{i}(x_i), r) \mid r \in Q_{A_{\pi(i)}} \}$ \\
		With the help of above states, $\Aa_\formula$ reaches the desired $i$th block in the input and in the beginning of this block, state is $(q_{i}(x_i), r)$, where $r$ is initial state of $A_{\pi(i)}$. 
		Later $\Aa_\formula$ simulates the transitions of $A_{\pi(i)}$ to check the membership constraint $y_{\pi(i)} \in A_{\pi(i)}$ in the second component of states of $\Aa_\formula$ with the help of third set. 		

	\end{itemize}
	
	\item For each $x_i$, if $x_i$ is a dependent variable, we have following set of states used to produce $x_i$ in the output:
	\begin{itemize}
		\item $\{(q_{\vdash}(x_i, x_{i_j}, j)) \mid \term_{\pi(i)}[j] = x_{i_j}, 1 \leq j \leq \term_{\pi(i)}, p \in Q_{\transducer_{\pi(i)}}, r \in Q_{A_{\pi(i)}} \} $ \\
		The first set of states are similar to above case.
		$\vdash$ present in state name indicates that $\Aa_\formula$ is in process of moving left to reach $\leftend$ and the next, we want to read $x_{i_j}$ from $\term_{\pi(i)}$, position $j$ (i.e. $\term_{\pi(i)}[j] = x_{i_j}$), to produce part of valuation of $x_i$. 
		The reading of value of $x_{i_ j}$ must start from states $p$ in $\transducer_i$, while its output must match with the word starting from $r$ in $A_{\pi(i)}$.
		Here the third parameter $j$ is important since it remembers the progress of $\term_{\pi(i)}$, how many variables from it are already processed.
		
		\item $  \{(q_k(x_i, x_{i_j}, j), (p, r) ) \mid 0 \leq k \leq i_j, \term_{\pi(i)}[j] = x_{i_j}, p \in Q_{\transducer_{\pi(i)}}, r \in Q_{A_{\pi(i)}} \}$ \\
		The second set of states are visited to count the number of $\#$s before reading value of $x_{i_j}$.
		Once this count (presented by a subscript of $q$) reaches $i_j-1$, we are in block $i_j$.
		Then $\Aa_{\formula}$ simulates the transitions of $\transducer_{\pi(i)}$ starting from $p$ and in parallel simulates the transitions of $A_{\pi(i)}$ starting from state $r$ on the output produced by $\transducer_{\pi(i)}$.
			
	\end{itemize}
	
	\item special state $q^{final}$ to mark the end.
\end{itemize}

{\noindent\bf Initial states:}

$$\{(q_0(x_1, x_{1_1}, 1), (p_0, r_0)) \mid x_1 \text{ is dependent }, t_{\pi(1)} = x_{1_1}, p_0, r_0\text{ are initial states of } T_{\pi(1)}, A_{\pi(1)}\}$$

$$\cup \{(q_0(x_1), r_0) \mid x_1 \text{ is independent }, r_0\text{ initial of } A_{\pi(1)}\}$$

{\noindent\bf Final states:}
$\{q^{final}\}$

{\noindent\bf Set of transitions:}
For all $i \in [1, n]$:
\begin{itemize}
	\item If $x_i$ is independent variable, and $r$ is initial state of $A_{\pi(i)}$:
	\begin{enumerate}
		\item 	$(q_{\vdash}(x_i), r) \xrightarrow[\alpha \mid \epsilon, -1]{\# \mid \epsilon, -1} (q_{\vdash}(x_i), r)$ \\
		This loop helps to reach $\leftend$ in the input.
		\item Once we reach the left end $\vdash$ from $(q_{\vdash}(x_i), r)$: \\
		$(q_{\vdash}(x_i), r) \xrightarrow{\vdash \mid \epsilon, +1} (q_{0}(x_i), r)$
		\item To read $x_i$ from input, we need to skip $i$ many $\#$s using following transitions. For all $1 \leq j \leq i-2$: \\
		$(q_{j}(x_i), r) \xrightarrow{\alpha \mid \epsilon, +1} (q_{j}(x_i), r)$ for any $\alpha \in \Sigma$\\
		$(q_{j}(x_i), r) \xrightarrow{\# \mid \epsilon, +1} (q_{j+1}(x_i), r)$ \\
		$(q_{i-1}(x_i), r) \xrightarrow{\alpha \mid \epsilon, 0} (q_{i}(x_i), r)$ for any $\alpha \in \Sigma \cup \{\#, \dashv\}$\\
		These transitions help in counting the number of blocks separated by $\#$. 
		It stops when $\Aa_\formula$ reaches $i$th block.
	\end{enumerate}

	\item 
	Once $\Aa_\formula$ reaches the $i$th block, it needs to simulate the transitions of $A_{\pi(i)}$:
	If $x_i$ is independent variable, and current state is $r$ while checking membership in $A_{\pi(i)}$: \\
	$(q_i(x_i), r) \xrightarrow{\alpha \mid \alpha, +1} (q_i(x_i), r')$ if $(r, \alpha, r') \in \delta_{A_{\pi(i)}}$ , for any $\alpha \in \Sigma$ \\
	Above transition simulates $A_{\pi(i)}$ in second component while producing the same output.
	After it finishes the block $x_i$, it moves ahead to produce $x_{i+1}$ next producing separator $\#$ in the output: 
	\begin{enumerate}
		\item If $x_{i+1}$ is independent variable, and $r_0$ is initial state of $A_{\pi(i+1)}$, $r$ is final state of $A_{\pi(i)}$:\\
		$(q_i(x_i), r) \xrightarrow{\# \mid \#, -1} (q_{\vdash}(x_{i+1}), r_0)$ 
		\item If $x_{i+1}$ is dependent variable, and $r_0$ is initial state of $A_{\pi(i+1)}$, $p_0$ is initial state of $\transducer_{\pi(i+1)}$, $\term_{\pi(i)}[1] = x_{i+1_1}$, $r$ is final state of $A_{\pi(i)}$:\\
		$(q_i(x_i), r) \xrightarrow{\# \mid \#, -1} (q_{\vdash}(x_{i+1}, x_{i+1_1}, 1), (p_0, r_0))$ 		
	\end{enumerate}

	\item If $x_i$ is dependent variable, $r_0$ is initial state of $A_{\pi(i)}$, $p_0$ is initial state of $\transducer_{\pi(i)}$, and $t_{\pi(i)}[1] = x_{i_1}$: \\
	 $(q_{\vdash}(x_{i}, x_{i_1}, 1), (p_0,r_0)) \xrightarrow[\# | \epsilon, -1]{\alpha | \epsilon, -1} (q_{\vdash}(x_{i}, x_{i_1}, 1), (p_0,r_0))$ for any $\alpha \in \Sigma$\\ ~\\
	 This transition moves $\Aa_\formula$ towards $\leftend$. Once $\leftend$ is reached, it starts counting the number of blocks separated by $\#$s. \\ 
	 $(q_{\vdash}(x_{i}, x_{i_1}, 1), (p_0,r_0)) \xrightarrow{\vdash | \epsilon, +1} (q_{0}(x_{i}, x_{i_1}, 1), (p_0,r_0))$ \\
	 
	 \item If $x_i$ is dependent variable, and $t_{\pi(i)}[j] = x_{i_j}$, for any $0 \leq k \leq i_j - 2$: (Counting of $\#$ separated blocks)\\
	 $(q_{k}(x_{i}, x_{i_j}, j), (p,r)) \xrightarrow{\alpha | \epsilon, +1} (q_{k}(x_{i}, x_{i_j}, j), (p,r))$ for any $\alpha \in \Sigma$\\
	 
	 $(q_{k}(x_{i}, x_{i_j}, j), (p,r)) \xrightarrow{\# | \epsilon, +1} (q_{k+1}(x_{i}, x_{i_j}, j), (p,r))$\\
	 
	 \item If $x_i$ is dependent variable, and $t_{\pi(i)}[j] = x_{i_j}$, $k = i_j - 1$ (When the desired block is reached in the input): \\
	 
	 $(q_{k}(x_{i}, x_{i_j}, j), (p,r)) \xrightarrow{\alpha | \epsilon, 0} (q_{k+1}(x_{i}, x_{i_j}, j), (p,r))$ \\
	 
	 \item If $x_i$ is dependent variable, and $t_{\pi(i)}[j] = x_{i_j}$, $k = i_j $(Simulating transitions of $\transducer_{\pi(i)}$ on input, while $A_{\pi(i)}$ on output) : 
	  
	  \begin{enumerate}
	  	\item if $(p, (b, a), p') \in \delta_{T_{\pi(i)}}$ and $(q, b, q') \in \delta_{A_{\pi(i)}}$, then \\
	  	$(q_{k}(x_{i}, x_{i_j}, j), (p,r)) \xrightarrow{a | b, +1} (q_{k}(x_{i}, x_{i_j}, j), (p',r'))$ \\
	  	
	  	\item   if $(p, (b, \epsilon), p') \in \delta_{T_{\pi(i)}}$ and $(q, b, q') \in \delta_{A_{\pi(i)}}$, then \\
	  	$(q_{k}(x_{i}, x_{i_j}, j), (p,r)) \xrightarrow{a | b, 0} (q_{k}(x_{i}, x_{i_j}, j), (p',r'))$ \\
	  	
	  	\item if $|t_{\pi(i)}| > j$, $\term_{\pi(i)}[j+1] = x_{i_{j+1}}$ (If the current variable is not the last in the term, $\Aa_\formula$ proceeds with next variable in term ) \\
	  	$(q_{k}(x_{i}, x_{i_j}, j), (p,r)) \xrightarrow{\# | \epsilon, -1} (q_{\vdash}(x_{i}, x_{i_{j+1}}, j+1), (p,r)) $ \\
	  	
	  	\item if $|\term_{\pi(i)}| = j$, $r$ and $p$ are final states of $A_{\pi(i)}$ and $\transducer_{\pi(i)}$ respectively (If the current variable is the last in the term, $\Aa_\formula$ has finished producing $x_i$ and proceeds with next variable $x_{i+1}$ after producing $\#$ as separator), \\
	  	\begin{enumerate}
	  		\item if $x_{i+1}$ is an independent variable, and $r_0$ is initial state of $A_{\pi(i+1)}$ \\
	  		$(q_{k}(x_{i}, x_{i_j}, j), (p,r)) \xrightarrow{\# | \#, -1} (q_{\vdash}(x_{i+1}), r_0)$ \\
	  		
	  		\item if $x_{i+1}$ is dependent variable, and $r_0$ is initial state of $A_{\pi(i+1)}$, $p_0$ is initial state of $\transducer_{\pi(i+1)}$, $\term_{\pi(i)}[1] = x_{i+1_1}$ \\
	  		$(q_{k}(x_{i}, x_{i_j}, j), (p,r)) \xrightarrow{\# | \#, -1} (q_{\vdash}(x_{i+1}, x_{i+1_1}, 1), (p_0,r_0))$
	  		
	  	\end{enumerate}
	  \end{enumerate}
  
  	\item If $x_n$ is independent variable and $r$ is a final state of $A_{\pi(n)}$ \\
  	$(q_{n}(x_n), r') \xrightarrow[\# | \epsilon, +1]{\dashv | \epsilon, +1} q^{{final}}$ \\
	 
	 \item If $x_n$ is dependent variable, and $t_{\pi(n)}[j] = x_{n_j}$, $k = n_j $, $|\term_{\pi(n)}| = j$, $r$ and $p$ are final states of $A_{\pi(n)}$ and $\transducer_{\pi(n)}$ respectively, \\
	 $(q_{k}(x_{n}, x_{n_j}, j), (p,r)) \xrightarrow[\dashv | \epsilon, +1]{\# | \epsilon, +1} q^{{final}}$ \\
	 
	 \item finally, since the acceptnace condition is to reach final state after reading $\dashv$: \\
	   	$q^{{final}} \xrightarrow[\dashv | \epsilon, +1]{\alpha | \epsilon, +1} q^{{final}}$ 
	 
\end{itemize}

\end{document}